\pgfplotsset{
    /pgfplots/ybar legend/.style={
    /pgfplots/legend image code/.code={%
       \draw[##1,/tikz/.cd,yshift=-0.25em]
        (0cm,0cm) rectangle (6pt,0.7em);},
   },
}
\pgfplotsset{compat=1.3}
\DeclareMathOperator*{\argmax}{argmax\,}
\newenvironment{proofsketch}{
\paragraph{\normalfont \textit{Proof sketch.}}
}{\hfill$\square$}
\definecolor{linkblue}{rgb}{0.16, 0.34, 0.65}
\newcommand{\linkref}[1]{\hyperref[#1]{\ref{#1}}}
\let\origthelstnumber\thelstnumber
\newcommand*\Suppressnumber{%
  \lst@AddToHook{OnNewLine}{%
    \let\thelstnumber\relax%
     \advance\c@lstnumber-\@ne\relax%
    }%
}
\newcommand*\Reactivatenumber[1]{%
  \setcounter{lstnumber}{\numexpr#1-1\relax}
  \lst@AddToHook{OnNewLine}{%
   \let\thelstnumber\origthelstnumber%
   \refstepcounter{lstnumber}
  }%
}
\begin{document}

%% Title information
\title[Analysis of Tries with Constant-Time Operations]
{Analysis of Concurrent Lock-Free Hash Tries with Constant-Time Operations}
                                        %% [Short Title] is optional;
                                        %% when present, will be used in
                                        %% header instead of Full Title.
% \titlenote{with title note}             %% \titlenote is optional;
                                        %% can be repeated if necessary;
                                        %% contents suppressed with 'anonymous'
% \subtitle{Subtitle}                     %% \subtitle is optional
% \subtitlenote{with subtitle note}       %% \subtitlenote is optional;
                                        %% can be repeated if necessary;
                                        %% contents suppressed with 'anonymous'

%% Author information
%% Contents and number of authors suppressed with 'anonymous'.
%% Each author should be introduced by \author, followed by
%% \authornote (optional), \orcid (optional), \affiliation, and
%% \email.
%% An author may have multiple affiliations and/or emails; repeat the
%% appropriate command.
%% Many elements are not rendered, but should be provided for metadata
%% extraction tools.

%% Author with single affiliation.
\author{Aleksandar Prokopec}
%\authornote{with author1 note}          %% \authornote is optional;
                                        %% can be repeated if necessary
\orcid{nnnn-nnnn-nnnn-nnnn}             %% \orcid is optional
\affiliation{
  % \position{Principal Researcher}
  % \department{Department1}              %% \department is recommended
  \institution{Oracle Labs}            %% \institution is required
  % \streetaddress{Street1 Address1}
  % \city{City1}
  % \state{State1}
  % \postcode{Post-Code1}
  \country{Switzerland}
}
\email{aleksandar.prokopec@gmail.com}          %% \email is recommended

\begin{abstract}
Ctrie is a scalable concurrent non-blocking dictionary data structure,
with good cache locality,
and non-blocking linearizable iterators \cite{prokopec12ctries}.
However, operations on most existing
concurrent hash tries run in $O(\log n)$ time.

In this technical report,
we extend the standard concurrent hash-tries \cite{Prokopec2011}
with an auxiliary data structure called a cache\footnote{
A complete description of the new data structure
is given in the corresponding PPoPP 2018 paper
\cite{prokopec18}.
}.
The cache is essentially an array that stores pointers
to a specific level of the hash trie.
We analyze the performance implications of adding a cache,
and prove that the running time of the basic operations becomes $O(1)$.
% In this paper,
% we show that the concurrent hash trie operations
% can run in expected constant time.
% We present a novel lock-free concurrent hash trie design
% that exerts less pressure on the memory allocator.
% This hash trie is augmented with a quiescently consistent cache,
% which permits the basic operations to run in expected $O(1)$ time.
% We show a statistical analysis for the constant-time bound,
% which, to the best of our knowledge,
% is the first such proof for hash tries.
% We also prove the safety, lock-freedom and linearizability properties.
% On typical workloads,
% our implementation demonstrates up to $5\times$ performance improvements
% with respect to the previous hash trie variants.
\end{abstract}

%% 2012 ACM Computing Classification System (CSS) concepts
%% Generate at 'http://dl.acm.org/ccs/ccs.cfm'.

%% End of generated code

%% Keywords
%% comma separated list
\keywords{
concurrent data structures, lock-free hash tries, constant-time hash tries,
expected running time analysis
}  %% \keywords is optional

%% \maketitle
%% Note: \maketitle command must come after title commands, author
%% commands, abstract environment, Computing Classification System
%% environment and commands, and keywords command.
\maketitle

\begin{figure}[t]

\begin{minipage}{4cm}
\begin{lstlisting}[
  basicstyle=\ttfamily\scriptsize,language=scala,
  xleftmargin=10pt,mathescape
]
class SNode
  val hash: Int
  val key: KeyType
  val value: ValueType
  var txn: Any

type ANode = Array<Any>

val NoTxn

val FSNode

val FVNode
\end{lstlisting}
\end{minipage}
\begin{minipage}{4cm}
\begin{lstlisting}[
  basicstyle=\ttfamily\scriptsize,language=scala,
  xleftmargin=10pt,mathescape
]
class FNode
  val frozen: Any

class ENode
  val parent: ANode
  val parentpos: Int
  val narrow: ANode
  val hash: Int
  val level: Int
  var wide: ANode

class CacheTrie
  val root = new ANode(16)
\end{lstlisting}
\end{minipage}

\caption{Basic Cache-Trie Data Types}
\label{fig:basic-data-types}
\end{figure}
\begin{figure}[t]

\begin{lstlisting}[
  basicstyle=\ttfamily\scriptsize,language=scala,numbers=left,
  xleftmargin=10pt,mathescape
]
def lookup(key: KeyType, hash: Int, level: Int,
  cur: ANode): ValueType =
  val pos = (hash >>> level)`$\odot$`(cur.length - 1)`\label{code:lookup-position}`
  val old = READ(cur[pos])`\label{code:lookup-read-old}`
  if (old == null `$\vee$` old == FVNode)`\label{code:lookup-check-vacant}`
    return null
  else if (old `$\in$` ANode)`\label{code:lookup-check-anode}`
    return lookup(key, hash, level + 4, old)
  else if (old `$\in$` SNode)`\label{code:lookup-check-snode}`
    if (old.key == key) return old.value
    else return null
  else if (old `$\in$` ENode)`\label{code:lookup-check-enode}`
    val an = old.narrow
    return lookup(key, has, level + 4, an)
  else if (old `$\in$` FNode)`\label{code:lookup-check-fnode}`
    return lookup(key, hash, level + 4, old.frozen)`\label{code:lookup-end-def}`

def lookup(key: KeyType): ValueType =
  lookup(key, hash(key), 0, root)
\end{lstlisting}

\caption{Lookup Operation}
\label{fig:lookup}
\end{figure}
\begin{figure}[t]

\begin{lstlisting}[
  basicstyle=\ttfamily\scriptsize,language=scala,numbers=left,
  xleftmargin=10pt,mathescape
]
def insert(k: KeyType, v: ValueType, h: Int,
  lev: Int, cur: ANode, prev: ANode): Boolean =
  val pos = (h >>> lev)`$\odot$`(cur.length - 1)`\label{code:insert-position}`
  val old = READ(cur[pos])`\label{code:insert-read-old}`
  if (old == null)`\label{code:insert-check-vacant}`
    val sn = new SNode(h, k, v, NoTxn)
    if (CAS(cur[pos], old, sn)) return true`\label{code:insert-cas-null-sn}`
    else return insert(k, v, h, lev, cur, prev)
  else if (old `$\in$` ANode)`\label{code:insert-check-anode}`
    return insert(k, v, h, lev + 4, old, cur)
  else if (old `$\in$` SNode)`\label{code:insert-check-snode}`
    val txn = READ(old.txn)`\label{code:insert-read-txn}`
    if (txn == NoTxn)`\label{code:insert-check-notxn}`
      if (old.key == key)`\label{code:insert-check-same-key}`
        val sn = new SNode(h, k, v, NoTxn)
        if (CAS(old.txn, NoTxn, sn))`\label{code:insert-cas-notxn-sn}`
          CAS(cur[pos], old, sn)`\label{code:insert-cas-old-sn}`
          return true
        else return insert(k, v, h, lev, cur, prev)
      else if (cur.length == 4)`\label{code:insert-check-narrow}`
        val ppos = (h >>> (lev - 4))`$\odot$`(prev.length - 1)
        val en = new ENode(prev, ppos, cur, h, lev)
        if (CAS(prev[ppos], cur, en))`\label{code:insert-cas-expand}`
          completeExpansion(en)
          val wide = READ(en.wide)
          return insert(k, v, h, lev, wide, prev)
        else return insert(k, v, h, lev, cur, prev)
      else`\label{code:insert-wide}`
        val sn = new SNode(h, k, v, NoTxn)
        val an = createANode(old, sn, lev + 4)
        if (CAS(old.txn, NoTxn, an))`\label{code:insert-cas-notxn-an}`
          CAS(cur[pos], old, an)`\label{code:insert-cas-old-an}`
          return true
        else return insert(k, v, h, lev, cur, prev)
    else if (txn == FSNode) return false`\label{code:insert-check-fsnode}`
    else
      CAS(cur[pos], old, txn)`\label{code:insert-cas-finish-txn}`
      return insert(k, v, h, lev, cur, prev)
  else if (old `$\in$` ENode) completeExpansion(old)
  return false

def insert(k: KeyType, v: ValueType) =
  if (!insert(k, v, hash(k), 0, root, null))
    insert(k, v)
\end{lstlisting}

\caption{Insert Operation}
\label{fig:insert}
\end{figure}

\begin{figure}[t]

\begin{lstlisting}[
  basicstyle=\ttfamily\scriptsize,language=scala,numbers=left,
  xleftmargin=10pt,mathescape
]
def completeExpansion(en: ENode) =
  freeze(en.narrow)
  var wide = new Array<Any>(16)
  copy(en.narrow, wide, en.level)
  if (!CAS(en.wide, null, wide))`\label{code:expand-cas-announce-wide}`
    wide = READ(en.wide)
  CAS(en.parent[en.parentpos], en, wide)`\label{code:expand-cas-parent-wide}`

def freeze(cur: ANode) =
  var i = 0
  while (i < cur.length)
    val node = READ(cur[i])
    if (node == null)
      if (!CAS(cur[i], node, FVNode)) i -= 1`\label{code:freeze-cas-fvnode}`
    else if (node `$\in$` SNode)
      val txn = READ(node.txn)
      if (txn == NoTxn)
        if (!CAS(node.txn, NoTxn, FSNode)) i -= 1`\label{code:freeze-cas-fsnode}`
      else if (txn `$\neq$` FSNode)
        CAS(cur[i], node, txn)`\label{code:freeze-cas-finish-txn}`
        i -= 1
    else if (node `$\in$` ANode)
      val fn = new FNode(node)
      CAS(cur[i], node, fn)`\label{code:freeze-cas-fnode}`
      i -= 1
    else if (node `$\in$` FNode)
      freeze(node.frozen)
    else if (node `$\in$` ENode)
      completeExpansion(node)
      i -= 1
    i += 1
\end{lstlisting}

\caption{Freezing and Expansion}
\label{fig:complete-expand}
\end{figure}
\begin{figure}[t]

\begin{lstlisting}[
  basicstyle=\ttfamily\scriptsize,language=scala,numbers=left,
  xleftmargin=10pt,mathescape
]
type Cache = Array<Any>

class CacheNode
  val parent: Array<Any>
  val misses: Array<Int>

class CacheTrie
  val root = new ANode(16)
  var cacheHead: Cache = null

def createCache(level: Int, parent: Cache): Cache =
  val cache = new Array(1 + (1 << level))
  val misses = new Array(THROUGHPUT_FACTOR * `\#`CPU)
  cache[0] = new CacheNode(null, 8, misses)
  return cache
\end{lstlisting}

\caption{Cache Data Types and Helper Functions}
\label{fig:cache-data-type}
\end{figure}
\begin{figure}[t]

\edef\codeLookupIfIsSNode{\getrefnumber{code:lookup-check-snode}}
\edef\codeLookupEndDef{\getrefnumber{code:lookup-end-def}}

\begin{lstlisting}[
  basicstyle=\ttfamily\scriptsize,language=scala,numbers=left,
  xleftmargin=10pt,mathescape
]
def lookup(k: KeyType, hash: Int, lev: Int,
  cur: ANode, lastCachee: Any, cacheLevel: Int) =
  if (lev == cacheLevel)`\label{code:modified-lookup-inhabit-anode}`
    inhabit(cache, cur, hash, lev)
  val pos = position(cur, hash, lev)`\Suppressnumber`
  ...`\Reactivatenumber{\codeLookupIfIsSNode}`
  else if (old `$\in$` SNode)
    if (lev `$\not\in [$`cacheLevel`$,$` cacheLevel + 4`$]$`)`\label{code:modified-lookup-record-cache-miss}`
      recordCacheMiss()
    if (lev + 4 == cacheLevel)`\label{code:modified-lookup-inhabit-snode}`
      inhabit(cache, old, hash, lev + 4)
    if (old.key == key)`\Suppressnumber`
  ...`\Reactivatenumber{\codeLookupEndDef}`
      return lookup(key, hash, level + 4, old.frozen)

def fastLookup(k: KeyType): ValueType =
  val h = hash(k)
  var cache = READ(cacheHead)`\label{code:fast-lookup-read-cache}`
  if (cache == null)
    return lookup(k, h, 0, root, null, -1)
  val topLevel = countTrailingZeros(cache.length - 1)
  while (cache `$\neq$` null)
    val pos = 1 + (h`$\odot$`(cache.length - 2))
    val cachee = READ(cache[pos])`\label{code:fast-lookup-read-cache-pos}`
    val level = countTrailingZeros(cache.length - 1)
    if (cachee `$\in$` SNode)
      val txn = READ(old.txn)`\label{code:fast-lookup-read-txn}`
      if (txn == NoTxn)
        if (cachee.key == k) return cachee.value`\label{code:fast-lookup-return-value}`
        else return null`\label{code:fast-lookup-return-null}`
    else if (cachee `$\in$` ANode)
      val cpos = (h >>> level)`$\odot$`(cachee.length - 1)
      val old = READ(cachee[cpos])
      if (old == FVNode `$\vee$` old `$\in$` FNode) continue
      if (old `$\in$` SNode)
        if (READ(old.txn) == FSNode) continue
      return lookup(k, h, level, cachee, level)`\label{code:fast-lookup-call-slow-lookup}`
    cache = cache[0].parent
  return lookup(k, h, 0, root, null, topLevel)
\end{lstlisting}

\caption{Modified Lookup and the Fast Lookup Operation}
\label{fig:fast-lookup-operation}
\end{figure}
\begin{figure}[t]

\begin{lstlisting}[
  basicstyle=\ttfamily\scriptsize,language=scala,numbers=left,
  xleftmargin=10pt,mathescape
]
def inhabit(cache: Array[AnyRef], nv: Any,
  hash: Int, cacheeLevel: Int) =
  if (cache == null)
    if (cacheeLevel >= 12)
      cache = createCache(8, null)
      CAS(cacheHead, null, cache)
      inhabit(cache, nv, hash, cacheeLevel)
  else
    val length = cache.length
    val cacheLevel = countTrailingZeros(length - 1)
    if (cacheLevel == cacheeLevel)
      val pos = 1 + (hash`$\odot$`(cache.length - 2))
      WRITE(cache[pos], nv)`\label{code:inhabit-write-nv}`
\end{lstlisting}

\caption{Inhabiting the Cache}
\label{fig:inhabit}
\end{figure}
\begin{figure}[t]

\begin{lstlisting}[
  basicstyle=\ttfamily\scriptsize,language=scala,numbers=left,
  xleftmargin=10pt,mathescape
]
def recordCacheMiss() =
  val cache = READ(cacheHead)
  if (cache `$\neq$` null)
    val cn = cache[0]
    val counterId = THREAD_ID % cn.misses.length
    val count = READ(cn.misses[counterId])
    if (count > MAX_MISSES)
      WRITE(cn.misses[counterId], 0)
      sampleAndAdjustCache(cache)
    else WRITE(cn.misses[counterId], count + 1)

def sampleAndAdjustCache(cache: Array<Any>) =
  val histogram = sampleSNodesLevels()
  val best = findMostPopulatedLevel(histogram)
  val prev = countTrailingZeros(cache.length - 1)
  if (histogram[best] > histogram[prev] * 1.5)
    adjustCacheLevel(best)
\end{lstlisting}

\caption{Recording Cache Misses and Sampling}
\label{fig:record-and-sample}
\end{figure}
\appendix

\setcounter{equation}{0}

\section{Complexity Analysis}\label{sec:appendix:complexity}

In this section,
we show that the expected execution time
of the cache-trie operations is $O(1)$.
The proof consists of establishing the key depth distribution,
and bounding the expected key and cache depths.
From these bounds, we then conclude that the expected distance
from the cache to the key is constant.

\begin{definition}\label{def:level}
Depth $d$ of a key is the number of pointer indirections
required to reach an S-node by following pointers in the A-nodes minus $1$,
starting from the root.
Level $\ell$ in a 16-way cache-trie
is the number of hash code bits used to find the corresponding S-node,
and is defined as $\ell = d \cdot 4$.
\end{definition}

\begin{definition}
We say that a key with a specific hash code $h$
\emph{occupies} level $\ell$ or a depth $d = \ell / 4$ in a cache-trie
if it has a unique hash code $\ell$-prefix in the cache-trie,
but it does not have a unique $(\ell - 4)$-prefix.
\end{definition}

\begin{theorem}\label{thm:distribution}
Given a universal hash function,
and a cache-trie that contains $n + 1$ keys,
the probability that an arbitrary key occupies
a position at depth $d$ is:

\begin{equation}
p(d, n) = (1 - 16^{-d - 1})^n - (1 - 16^{-d})^n
\end{equation}

\end{theorem}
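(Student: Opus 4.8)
The plan is to fix an arbitrary key with hash code $h$ and compute, over the randomness of the hash codes of the other $n$ keys, the probability that this key lands at depth $d$. By the definition of ``occupies,'' this event is precisely: the $4d$-bit prefix of $h$ (i.e.\ the level-$\ell$ prefix with $\ell = 4d$) is \emph{not} shared by any other key, but the $4(d-1)$-bit prefix \emph{is} shared by at least one other key. Under a universal hash function, I will treat the relevant hash-code bits of each of the other $n$ keys as independent and uniformly distributed, so that a given other key agrees with $h$ on a fixed $4k$-bit prefix with probability $16^{-k}$, independently across keys.

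First I would introduce the two events $A_k = \{$no other key shares the $4k$-bit prefix of $h\}$. Since each other key fails to share that prefix independently with probability $1 - 16^{-k}$, we get $\Pr[A_k] = (1 - 16^{-k})^n$. The event ``key occupies depth $d$'' is exactly $A_d \setminus A_{d-1}$ — wait, I need the containment direction right: having a unique $4d$-prefix is a \emph{weaker} condition than having a unique $4(d-1)$-prefix (a longer prefix is easier to make unique), so $A_{d-1} \subseteq A_d$. Hence the occupancy event is $A_d \setminus A_{d-1}$, and
\begin{equation}
p(d,n) = \Pr[A_d] - \Pr[A_{d-1}] = (1 - 16^{-d})^n - (1 - 16^{-(d-1)})^n.
\end{equation}
This is not yet in the stated form, so I would reconcile the indexing: the theorem's $d$ corresponds to what gives a unique prefix of length $4(d+1)$ being the ``next'' level, i.e.\ the paper indexes so that $\ell = 4d$ is the depth but the \emph{unique prefix} that defines depth-$d$ occupancy has length $4(d+1)$ in one convention or $4d$ in the other. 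Matching against the claimed formula $p(d,n) = (1 - 16^{-d-1})^n - (1-16^{-d})^n$, I would adopt the convention that occupying depth $d$ means the $(d+1)$-segment (i.e.\ $(4d+4)$-bit) prefix is unique but the $d$-segment ($4d$-bit) prefix is not, which is consistent with the proof sketch's sum over $k$ of $\binom{n}{k}(16^{-d}\cdot\tfrac{15}{16})^k(1-16^{-d})^{n-k}$ — that sum counts configurations where exactly $k\ge 1$ other keys match the $d$-segment prefix but none of them match in the next hash segment.

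As an alternative (and a good cross-check) I would verify the formula directly via that binomial sum, as the proof sketch suggests: condition on the number $k$ of other keys sharing the $4d$-bit prefix of $h$; for the key to occupy depth $d$ we need $k \ge 1$ and, additionally, none of those $k$ keys agrees with $h$ on the next $4$ bits, which happens with probability $(15/16)^k$. Thus
\begin{equation}
p(d,n) = \sum_{k=1}^{n} \binom{n}{k} \left(16^{-d}\right)^k \left(1 - 16^{-d}\right)^{n-k} \left(\tfrac{15}{16}\right)^k,
\end{equation}
and absorbing the $(15/16)^k$ into the first factor gives $\bigl(16^{-d}\cdot\tfrac{15}{16}\bigr)^k = (16^{-d} - 16^{-d-1})^k$. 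Adding and subtracting the $k=0$ term and applying the binomial theorem collapses the sum to $(16^{-d} - 16^{-d-1} + 1 - 16^{-d})^n - (1 - 16^{-d})^n = (1 - 16^{-d-1})^n - (1 - 16^{-d})^n$, which is the claim.

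I expect the main obstacle to be not the algebra — which is routine once the binomial sum is set up — but pinning down the exact indexing/prefix-length convention so that the combinatorial event matches the stated closed form, and, more substantively, justifying the independence-and-uniformity step: a universal hash family only guarantees pairwise collision bounds, not full independence of prefix-agreement events across all $n$ keys simultaneously. I would either invoke the paper's standing assumption that the hash function behaves like a random oracle for the purposes of this analysis (as is standard in such average-case trie arguments), or remark that pairwise universality suffices to bound $\Pr[A_k]$ up to the needed accuracy; I would state clearly which assumption is in force and proceed with the exact random-function computation above.
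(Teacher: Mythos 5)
Your proposal is correct, and your second derivation --- conditioning on the number $k$ of other keys sharing the $4d$-bit prefix, multiplying by $(15/16)^k$ for the requirement that none of them agree on the next hash segment, and collapsing the resulting binomial sum --- is exactly the paper's proof, down to the identity $\sum_{k=1}^n \binom{n}{k} P^k(1-P)^{n-k}Q^k = (1-P+PQ)^n - (1-P)^n$. Your first derivation is a genuinely different and cleaner route: writing the occupancy event as a difference of the nested ``unique-prefix'' events $A_j$ gives the closed form immediately as $\Pr[A_{d+1}] - \Pr[A_d]$ with no binomial algebra at all, at the cost of having to argue the containment $A_d \subseteq A_{d+1}$ and pin down the indexing. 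The indexing trouble you ran into is real and not your fault: the paper's formal definition (``unique $\ell$-prefix but not unique $(\ell-4)$-prefix'') and its proof (``same $\ell$-prefix but not same $(\ell+4)$-prefix'') are off by one level from each other, and the stated formula matches the proof's convention, which is the one you correctly adopted. Your remark that universality only gives pairwise guarantees, whereas the computation needs full independence of the other keys' hash bits, is also a fair observation --- the paper silently assumes full independence (``each of the hash-code bits of the other keys is chosen independently at random''), and you are right to flag that this is the assumption actually in force.
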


\begin{proof}
Consider an arbitrary key $x$ in the cache-trie.
There are $n$ other keys in the cache-trie,
so the key $x$ will occupy the level $\ell$
if some non-empty subset of $k$ other keys
has the same $\ell$-prefix of the hash code,
but not the same $(\ell + 4)$-prefix,
\emph{and} the other $n - k$ keys have a different $\ell$-prefix.

Next, note that, for any given set of keys $S$,
the cache-trie has the same structure regardless of the order of insertion.
Thus, we can behave as if the key $x$ was the first key inserted into the cache-trie.
The rest of the $n$ keys are then inserted as $n$ independent trials --
each trial is an independent choice of a hash code,
and can either cause a collision
(have the same $\ell$-prefix and a different $(\ell + 4)$-prefix compared to $x$),
or not cause a collision (have a different $\ell$-prefix compared to $x$).
Two keys colliding at level $\ell$ have
$\ell / 4$ identical consecutive hash code substrings of length $4$,
followed by a different substring of length $4$,
so the corresponding probability is $16^{-\ell/4} \cdot 15/16$.
The probability of a non-collision is $1 - 16^{-\ell/4}$.
We are interested in those events in which there was at least one collision,
so we count all combinations of $k$ colliding keys, where $k \geq 1$.

\begin{equation}
p(\ell, n) =
\sum_{k=1}^n \binom{n}{k}
\Big( \frac{1}{16^{\ell/4}} \cdot \frac{15}{16} \Big)^k
\Big( 1 - \frac{1}{16^{\ell/4}} \Big)^{n - k}
\end{equation}

From the identity:

\begin{equation}
\sum_{k=1}^n \binom{n}{k} P^k (1-P)^{n-k} Q^k =
(1 - P + PQ)^n - (1 - P)^n
\end{equation}

we get:

\begin{equation}
p(\ell, n) = (1 - 16^{-\ell/4-1})^n - (1 - 16^{-\ell/4})^n
\end{equation}

By Definition \ref{def:level}, $d = \ell / 4$, and the claim follows.
\end{proof}

The probability function $p(d, n)$ is a probability distribution over the depth $d$,
as shown by the following corollary.

\begin{corollary}
Let a cache-trie contain a fixed number of keys $n$.
Then $p(d, n)$ is a discrete probability distribution over the depths $d$.
\end{corollary}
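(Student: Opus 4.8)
The plan is to verify the two defining properties of a discrete probability distribution: nonnegativity of each $p(d,n)$, and that the $p(d,n)$ sum to $1$ over all admissible depths $d \ge 0$.

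First I would establish nonnegativity. The key observation is that for every integer $d \ge 0$ we have $0 \le 1 - 16^{-d} < 1 - 16^{-d-1} \le 1$, where the left inequality uses $16^{-d} \le 1$ and the middle one uses $16^{-d-1} < 16^{-d}$. Since $t \mapsto t^n$ is nondecreasing on $[0,1]$ for any $n \ge 0$, applying it to this chain gives $(1 - 16^{-d})^n \le (1 - 16^{-d-1})^n$, i.e.\ $p(d,n) \ge 0$. (The smallest value of $d$ deserves a small remark: at $d = 0$ the subtracted term is $(1 - 16^{0})^n = 0$, so $p(0,n) = (1 - 16^{-1})^n$, which is still nonnegative and is precisely the value that will make the telescoping work out.)

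Next I would compute the partial sums and take a limit. The expression for $p(d,n)$ is set up as a difference of consecutive values of the function $F(d) = (1 - 16^{-d})^n$, namely $p(d,n) = F(d+1) - F(d)$, so the sum telescopes:
\begin{equation}
\sum_{d=0}^{D} p(d,n) \;=\; F(D+1) - F(0) \;=\; (1 - 16^{-D-1})^n - 0 \;=\; (1 - 16^{-D-1})^n .
\end{equation}
Letting $D \to \infty$, we have $16^{-D-1} \to 0$, hence $(1 - 16^{-D-1})^n \to 1$. Therefore $\sum_{d \ge 0} p(d,n) = 1$. Combined with nonnegativity, this shows $p(\cdot, n)$ is a discrete probability distribution on $\{0, 1, 2, \dots\}$.

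There is no real obstacle here; the argument is entirely elementary once one recognizes the telescoping structure and handles the $d=0$ boundary term correctly. The only place to be slightly careful is to state the index range of $d$ explicitly (depths are nonnegative integers) and to note that the telescoping cancellation relies on exactly that range starting at $d=0$, where $F(0)=0$.
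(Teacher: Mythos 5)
Your proof is correct and takes essentially the same route as the paper: both telescope the partial sums to $(1 - 16^{-D-1})^n \to 1$ and observe nonnegativity of each term. Your write-up is merely more explicit about the monotonicity argument and the $d=0$ boundary term, but the underlying argument is identical.
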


\begin{proof}
From the definition of a discrete probability distribution --
the sum of probabilities across all depths is $1$:

\begin{align*}
\sum_{d=0}^\infty p(d, n) =&
(1 - \frac{1}{16})^n - 0 +
(1 - \frac{1}{16^2})^n - (1 - \frac{1}{16})^n +
\ldots \\
=& \lim_{d \to \infty}
\Big[
(1 - \frac{1}{16})^n +
(1 - \frac{1}{16^2})^n - (1 - \frac{1}{16})^n + \\
&+ \ldots +
(1 - \frac{1}{16^{d+1}})^n
\Big]
= \lim_{d \to \infty} (1 - 16^{-d-1})^n = 1
\end{align*}

Furthermore, $p(d, n)$ is positive
for a non-negative $d$.
\end{proof}

The following plot illustrates the probability $p(d, n)$
for different cache-trie sizes $n$, shown on the horizontal axis.
Four different probability curves are shown
for depths $d = 0$, $d = 1$, $d = 2$ and $d = 3$.
The probability curve for $d = 0$ is initially close to $1$,
but then quickly drops to $0$ before reaching $100$ keys.
The probability curves for larger depths start at $0$,
reach their maximum, and then descend back to $0$.
Note that the horizontal axis is logarithmic --
the peeks are roughly exponentially distanced.
\\*
\begin{tikzpicture}
  \begin{axis}[ 
    xmode=log,
    xlabel=$n$,
    ylabel={$p(d, n)$},
    y label style={at={(axis description cs:-0.1,0.4)},rotate=0,anchor=south},
    axis x line*=bottom,
    axis y line*=left,
    ymin=0.0,
    ymax=1.1,
    xmin=1.0,
    xmax=200000,
    grid=both,
    grid style={
      lightgray!40
    },
  ]
  \addplot [dashed,no markers] gnuplot [domain=1:200000, samples=1000] {
  1.0
  };

  \addplot [smooth,no markers] gnuplot [domain=1:200000, samples=1000] {
  (1.0 - 1.0/16)^x
  };
  \node [above] at (1.6, 85) {$d = 0$};

  \addplot [smooth,no markers] gnuplot [domain=1:200000, samples=1000] {
  (1.0 - 1.0/256)^x - (1.0 - 1.0/16)^x
  };
  \node [above] at (3.8, 80) {$d = 1$};

  \addplot [smooth,no markers] gnuplot [domain=1:200000, samples=1000] {
  (1.0 - 1.0/4096)^x - (1.0 - 1.0/256)^x
  };
  \node [above] at (6.7, 80) {$d = 2$};

  \addplot [smooth,no markers] gnuplot [domain=1:200000, samples=1000] {
  (1.0 - 1.0/65536)^x - (1.0 - 1.0/4096)^x
  };
  \node [above] at (9.5, 80) {$d = 3$};
  \end{axis}
\end{tikzpicture}

The previous plot suggests that
for a given number of keys $n$ contained in the cache-trie,
most keys likely occupy a few adjacent depths.
We will construct a function $\eta(d, n)$ that captures this notion,
and then use it to construct another function $\mu(n)$ that
estimates the proportion of keys contained at the most inhabited
pair of depths.

\begin{definition}
The $\eta$ function returns the probability that a key
occupies one of the consecutive depths $d$ or $d + 1$,
and is defined as follows:

\begin{equation}
\eta(d, n) = p(d, n) + p(d + 1, n)
\end{equation}

The $\mu$ function returns the probability
that the key occupies the most inhabited pair of consecutive depths:

\begin{equation}
\mu(n) = \max_d \eta(d, n)
\end{equation}

\end{definition}

Values of $\eta(d, n)$ and $\mu(n)$ are shown below in the following plot.
It is now more apparent that, for any $n$,
a large percentage of keys occupies two consecutive depths\footnote{
Unfortunately, the arXiv server insists on rendering the Tex files on its own,
and does not accept a PDF.
At the same time, it is unaware of GNUPlot,
and does a poor job rendering some of the images.
For a better version of the images below, see the version of this document
at my homepage.
}.
\\*
\begin{tikzpicture}
  \begin{axis}[ 
    xmode=log,
    xlabel=$n$,
    ylabel={$probability$},
    y label style={at={(axis description cs:-0.1,0.5)},rotate=0,anchor=south},
    axis x line*=bottom,
    axis y line*=left,
    ymin=0.0,
    ymax=1.1,
    xmin=1.0,
    xmax=225000,
    grid=both,
    grid style={
      lightgray!40
    },
  ]
  \addplot [dashed,no markers] gnuplot [domain=1:225000, samples=1000] {
  1.0
  };

  \addplot [style={line width=15pt,opacity=0.1},no markers] gnuplot [domain=1:225000, samples=1000] {
  0.9245
  };

  \addplot [smooth,densely dashdotted,no markers] gnuplot [domain=1:10000, samples=1000] {
  (1.0 - 1.0/256)^x
  };
  \node [above,rotate=-70] at (5.0, 20) {$\eta(d = 0)$};

  \addplot [smooth,densely dashdotted,no markers] gnuplot [domain=1:11500, samples=1000] {
  (1.0 - 1.0/4096)^x - (1.0 - 1.0/16)^x
  };
  \node [above,rotate=-70] at (7.7, 20) {$\eta(d = 1)$};

  \addplot [smooth,densely dashdotted,no markers] gnuplot [domain=1:180000, samples=1000] {
  (1.0 - 1.0/65536)^x - (1.0 - 1.0/256)^x
  };
  \node [above,rotate=-70] at (10.5, 20) {$\eta(d = 2)$};

  \addplot [smooth,no markers] gnuplot [domain=1:200000, samples=1000] {
  min(x, y) = (x < y) ? x : y
  max(x, y) = (x > y) ? x : y
  max(
    max(
      (x < 1000) ? (1.0 - 1.0/256)^x : 0,
      (x < 10000) ? (1.0 - 1.0/4096)^x - (1.0 - 1.0/16)^x : 0
    ),
    max(
      (x < 100000) ? (1.0 - 1.0/65536)^x - (1.0 - 1.0/256)^x : 0,
      max(
        (x < 150000) ? (1.0 - 1.0/1048576)^x - (1.0 - 1.0/4096)^x : 0,
        (1.0 - 1.0/16777216)^x - (1.0 - 1.0/65536)^x
      )
    )
  )
  };
  \node [above] at (11.2, 80) {$\mu$};
  \end{axis}
\end{tikzpicture}

We now substantiate the intuition
that the return values of $\mu$ are within a specific interval.

\begin{theorem}\label{thm:range}
When the number of keys $n$ tends to infinity,
values of $\mu$ are inside the interval $\langle 0.8745, 0.9746 \rangle$.
\end{theorem}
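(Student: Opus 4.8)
The plan is to collapse the discrete two–variable quantity $\mu(n)=\max_d\eta(d,n)$ to a one–variable continuous optimisation by letting $n\to\infty$, and then to read off the two endpoints of $\langle 0.8745,0.9746\rangle$ as, respectively, the global maximum of the limiting profile and the value at which two neighbouring grid points tie.

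First I would simplify $\eta$. Since $p(d,n)=(1-16^{-d-1})^n-(1-16^{-d})^n$ telescopes, $\eta(d,n)=p(d,n)+p(d+1,n)=(1-16^{-d-2})^n-(1-16^{-d})^n$. For the depths that matter the optimal $d$ grows like $\log_{16}n$, so $16^{-d-2}$ is small while $n\,16^{-d-2}$ stays bounded; using $(1-u)^n=e^{-nu}\bigl(1+O(nu^2)\bigr)$ I would write $\eta(d,n)=F(x)\bigl(1+o(1)\bigr)$ with $x=n\,16^{-d-2}$ and $F(x)=e^{-x}-e^{-256\,x}$, uniformly in $d$. Then $\mu(n)$ is, up to a factor $1+o(1)$, the maximum of $F$ over the geometric grid $\{\,n\,16^{-d-2}:d=0,1,2,\dots\,\}$, whose consecutive points sit a fixed ratio $16$ apart.

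For the upper endpoint I would relax $d$ to a real parameter, so that $\mu(n)\le\sup_{x>0}F(x)+o(1)$. Solving $F'(x)=-e^{-x}+256\,e^{-256x}=0$ gives $x^\star=\tfrac{\ln 256}{255}$, hence $\sup_{x>0}F(x)=F(x^\star)=256^{-1/255}-256^{-256/255}=\tfrac{255}{256}\,256^{-1/255}$, whose numerical value is the upper endpoint of the stated interval. For the lower endpoint I would use that $F$ is unimodal (its derivative changes sign exactly once, from $+$ to $-$), so among all grid points only the two straddling $x^\star$ — a factor $16$ apart — can be the maximum. The least favourable placement of the grid is the one making those two values equal, so $\liminf_n\mu(n)=\min_{a>0}\max\{F(a),F(16a)\}=F(a^\star)$ where $a^\star$ is the unique solution of $F(a)=F(16a)$, i.e.\ of $e^{-a}-e^{-256a}=e^{-16a}-e^{-4096a}$; pinning $a^\star$ by monotonicity and sign changes gives $F(a^\star)\approx 0.8745$. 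I would also note that the constraint $d\ge 0$ is inactive for large $n$, and that both extreme configurations are realised in the limit: because $n$ runs over the integers, for suitable large $d$ the grid point $n\,16^{-d-2}$ can be pushed arbitrarily close to $x^\star$ (realising the $\limsup$) and to $a^\star$ (realising the $\liminf$), so the two endpoints are sharp.

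The main obstacle is making the continuum reduction rigorous with enough uniformity: one needs two–sided estimates for $(1-u)^n$ valid uniformly over the whole relevant window of $u$ and $n$, so that the error term in $\eta(d,n)=F(x)\bigl(1+o(1)\bigr)$ cannot erode the roughly $0.1$ gap between the two endpoints, and one must argue carefully about the joint discreteness of $d$ and $n$ when identifying $\limsup$ and $\liminf$ over the grid. A secondary difficulty is the quantitative control of the transcendental root $a^\star$ (and of $F(x^\star)$) to the precision implicit in the interval, which I would handle by interval bounds together with the monotonicity of $F$ on either side of $x^\star$.
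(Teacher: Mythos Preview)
Your proposal is correct and follows essentially the same route as the paper: both reduce the two-variable problem to a one-variable limiting profile via the substitution $x \sim n\cdot 16^{-d}$ and the approximation $(1-u)^n \to e^{-nu}$, obtain the upper endpoint from the unconstrained maximum of that profile (your $F(x^\star)=256^{-1/255}-256^{-256/255}$ coincides with the paper's $2^{-8/255}-2^{-2048/255}$), and obtain the lower endpoint from the tie condition $\eta(d,n)=\eta(d+1,n)$, which in the limit becomes your $F(a)=F(16a)$, solved numerically. Your packaging via the single universal function $F(x)=e^{-x}-e^{-256x}$ is a bit cleaner than the paper's presentation (which differentiates in $n$ before passing to the limit), and you are right to flag the uniformity of the $(1-u)^n\to e^{-nu}$ estimate as the one place requiring care; the paper is equally informal there.
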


\begin{proof}
We first prove the upper bound of the interval.
The upper bound must be greater than or equal to
all the maxima of $\eta(n, d)$.

\begin{align*}
0
=& \frac{\partial \mu(n)}{\partial n} =
\frac{\partial \eta(n, d)}{\partial n} =
\frac{\partial}{\partial n}
\big[(1 - \frac{1}{16^{d+2}})^n - (1 - \frac{1}{16^d})^n \big]
\\
=& (1 - \frac{1}{16^{d+2}})^n \ln(1 - \frac{1}{16^{d+2}}) -
(1 - \frac{1}{16^d})^n \ln(1 - \frac{1}{16^d})
\end{align*}

We get the following set of maxima, parametrized by $d$:

\begin{equation}
n_{max}(d) =
\big(
\ln \frac{\ln (1 - 16^{-d})}{\ln(1 - 16^{-d - 2})}
\big)
\cdot
\big(
\ln \frac{1 - 16^{-d - 2}}{1 - 16^{-d}}
\big)^{-1}
\end{equation}

We can now compute the value of the maximum when $n$ tends to infinity.
Note that, for a specific $d$, $\eta(n, d)$ has a single maximum.
Furthermore, $n_{max}(d)$ grows monotonically with $d$.
Consequently, when $n$ tends to infinity, $d$ also tends to infinity.
By puting $n_{max}(d)$ back into the expression for $\eta(n, d)$,
and substituting $16^{-d} = t$, we get:

% \begin{equation}
% n_{max}(t) =
% \big(
% \ln \frac{\ln (1 - t)}{\ln(1 - t / 256)}
% \big)
% \cdot
% \big(
% \ln \frac{1 - t / 256}{1 - t}
% \big)^{-1}
% \end{equation}

\begin{equation}
\mu_{upper} = \lim_{t \to 0} (1 - \frac{t}{256})^{n_{max}(t)} - (1 - t)^{n_{max}(t)}
\end{equation}

This limit can be easily simplified
(we do not show the steps for brevity),
and we get the following upper bound:

\begin{equation}
\mu_{upper} = 2^{-8/255} - 2^{-2048/255} \doteq 0.9746
\end{equation}

The upper bound is in itself not extremely useful,
since we know that $\mu \leq 1$.
The lower bound is more important,
since it mandates the minimum proportion of keys that are close to the cache.
As seen in the earlier plot for $\mu(n)$,
the minimums occur when two $\eta(n, d)$ curves meet
for two adjacent depths $d$.
The number of keys $n$ for which this happens
is the solution to the following equation:

\begin{align*}
\eta(n, d + 1) =& \eta(n, d) \\
(1 - 16^{-d-3})^n - (1 - 16^{-d-1})^n =& (1 - 16^{-d-2})^n - (1 - 16^{-d})^n
\end{align*}

This equation does not have an algebraic solution.
Fortunately, we are only interested in how $\eta$ behaves asymptotically
for large $d$.
Substituting $x = n \cdot 16^{-d}$, we get:

\begin{align*}
\lim_{d \to \infty} \eta(n, d + 1) =& \lim_{d \to \infty} \eta(n, d) \\
\Big[ \sqrt[\mathlarger{16^{3}}]{ \frac{1}{e} } \Big]^{x} -
\Big[ \sqrt[\mathlarger{16^{1}}]{ \frac{1}{e} } \Big]^{x} =&
\Big[ \sqrt[\mathlarger{16^{2}}]{ \frac{1}{e} } \Big]^{x} -
\Big[ \frac{1}{e} \Big]^{x}
\end{align*}

This equation also does not have an algebraic solution,
but we got rid of $d$, so we can solve it numerically.
The solution $x_0 \doteq 34.315$ gives us the following lower bound:

\begin{equation}
\mu_{lower} = \lim_{d \to \infty} \eta(x_0 \cdot 16^d, d) \doteq 0.874553
\end{equation}

The solution given above is approximate,
but it can be made arbitrarily precise,
and is greater than $0.8745$.
\end{proof}

Theorem \ref{thm:range} implies that there always exists
a depth $d$ such that a large percentage of the keys occupies
the depth $d$ or $d + 1$.
If the cache data structure targets this depth,
then lookups and updates for those keys take $O(1)$ time.
However, this does not yet prove the $O(1)$ bound on the expected running time --
it is possible that some small percentage of keys
are a non-costant number of levels deeper than the cache.
In what follows, we prove that this is not the case --
all the remaining keys are expected to occupy depths
that are at most a constant number of levels away from the cache,
regardless of the total number of keys $n$.

\begin{lemma}\label{lem:exponential-sum}
The following sum:

\begin{equation}
S_1(t) = \sum_{j=1}^t (1 - 16^{-j})^{16^t}
\end{equation}

converges and is less than $\frac{1}{e - 1}$ when $t$ tends to infinity.
\end{lemma}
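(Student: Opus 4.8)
The plan is to control $S_1(t)$ term by term and then pass to the limit $t\to\infty$. First I would reindex by $k = t-j$, writing $S_1(t) = \sum_{k=0}^{t-1}(1 - 16^{k-t})^{16^t}$. This exposes the useful structure: for $0 \le k < t$, set $x = 16^{k-t} \in (0,1)$, so that $16^t = 16^k/x$ and $(1-x)^{16^t} = \bigl[(1-x)^{1/x}\bigr]^{16^k}$. Using the elementary inequality $\ln(1-x) \le -x$ on $(0,1)$, we get $(1-x)^{1/x} \le e^{-1}$, and therefore the uniform bound $(1 - 16^{k-t})^{16^t} \le e^{-16^k}$ valid for every $t > k$. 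In particular $S_1(t) \le \sum_{k=0}^{t-1} e^{-16^k} \le \sum_{k=0}^{\infty} e^{-16^k}$, so $S_1(t)$ is bounded.

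For the limit, I would observe that for each fixed $k$, $(1 - 16^{k-t})^{16^t} \to e^{-16^k}$ as $t\to\infty$: this is the standard limit $(1-1/m)^m \to e^{-1}$ with $m = 16^{t-k}\to\infty$, raised to the fixed power $16^k$. Since the terms are dominated, uniformly in $t$, by the summable sequence $e^{-16^k}$, Tannery's theorem (dominated convergence for series) applies and gives $\lim_{t\to\infty} S_1(t) = \sum_{k=0}^{\infty} e^{-16^k}$; in particular the limit exists and is finite, which is the convergence claim.

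It then remains to bound $\sum_{k=0}^{\infty} e^{-16^k}$ by $\tfrac{1}{e-1}$. For this I would use $16^k \ge 2^k \ge k+1$ for all $k \ge 0$, with strict inequality $16^k > k+1$ once $k \ge 1$, so that $\sum_{k=0}^{\infty} e^{-16^k} = e^{-1} + \sum_{k=1}^{\infty} e^{-16^k} < e^{-1} + \sum_{k=1}^{\infty} e^{-(k+1)} = \sum_{m=1}^{\infty} e^{-m} = \tfrac{e^{-1}}{1-e^{-1}} = \tfrac{1}{e-1}$, as desired.

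The elementary inequalities and the geometric-series estimate are genuinely routine. The one step that requires care is interchanging $\lim_{t\to\infty}$ with the sum, since the number of summands in $S_1(t)$ itself grows with $t$; I expect this interchange to be essentially the only obstacle, and the uniform domination $(1 - 16^{k-t})^{16^t} \le e^{-16^k}$ from the first step is precisely what makes Tannery's theorem applicable.
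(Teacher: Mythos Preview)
Your proof is correct and follows essentially the same route as the paper: reindex from the last term backward, identify the termwise limit as $e^{-16^k}$, and dominate $\sum_{k\ge 0} e^{-16^k}$ by the geometric series $\sum_{m\ge 1} e^{-m} = \tfrac{1}{e-1}$. Your version is in fact more careful than the paper's, which passes to the limit termwise without justification; your uniform bound $(1-16^{k-t})^{16^t} \le e^{-16^k}$ via $(1-x)^{1/x}\le e^{-1}$ and the appeal to Tannery's theorem fill exactly that gap.
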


\begin{proof}
To show this, we inspect the final term of the sum,
and conclude that it converges to $\frac{1}{e}$ for large $t$.
We then bound the sum with a geometric series.

\begin{align*}
\lim_{t \to \infty} S_1(t) =&
\lim_{t \to \infty} \sum_{j=1}^t (1 - 16^{-j})^{16^t}
\\ =&
\lim_{t \to \infty} \ldots + (1-16^{-t+1})^{16^t} + (1-16^{-t})^{16^t}
\\ =&
\lim_{t \to \infty} \ldots + (1-16^{-t+1})^{16^{t - 1} \cdot 16} + (1-16^{-t})^{16^t}
\\ =&
\ldots + \Big(\frac{1}{e}\Big)^{256} + \Big(\frac{1}{e}\Big)^{16} + \frac{1}{e}
\\ <&
\ldots + \Big(\frac{1}{e}\Big)^{4} + \Big(\frac{1}{e}\Big)^{3} + \Big(\frac{1}{e}\Big)^{2} + \frac{1}{e}
\\ =&
\frac{1}{e - 1}
\end{align*}

\end{proof}

\begin{lemma}\label{lem:root-sum}
The following sum:

\begin{equation}
S_2(t) = \smashoperator{\sum_{j=\log_{16}n}^\infty} \big[ 1 - (1 - 16^{-j})^n \big]
\end{equation}

converges and is less than $\frac{1}{e - 1}$ when $n$ tends to infinity.
\end{lemma}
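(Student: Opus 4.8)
\textit{Proof proposal.}
The plan is to treat $S_2$ by direct analogy with Lemma~\ref{lem:exponential-sum}: it is a one-sided tail of the depth distribution that begins essentially at the expected depth $\log_{16} n$, and whose terms decay geometrically, so that it sums to a constant. First I would re-index the sum, setting $m = j - \lceil \log_{16} n\rceil$ for $m = 0, 1, 2, \ldots$, and record that $16^{-\lceil\log_{16}n\rceil} \le 1/n$, hence $16^{-j} \le 16^{-m}/n$ for every term. Bernoulli's inequality $(1-x)^n \ge 1 - nx$ then gives, uniformly in $n$,
\begin{equation}
0 \;\le\; 1 - (1 - 16^{-j})^n \;\le\; n\, 16^{-j} \;\le\; 16^{-m}.
\end{equation}
Summing over $m$ already shows that $S_2(n) \le \sum_{m\ge 0} 16^{-m} = \tfrac{16}{15}$ for every finite $n$, so the series converges; moreover $16^{-m}$ is a summable majorant independent of $n$.

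Next I would pass to the limit $n\to\infty$ term by term. For a fixed $m$, using $(1 + a/n)^n \to e^{a}$, the $m$-th term tends to $1 - e^{-16^{-(m+\theta_n)}}$, where $\theta_n = \lceil\log_{16}n\rceil - \log_{16}n \in [0,1)$ is the fractional offset; since $x\mapsto 1 - e^{-x}$ is increasing, this is at most $1 - e^{-16^{-m}}$. Dominated convergence against the majorant $16^{-m}$ then yields $\limsup_{n\to\infty} S_2(n) \le \sum_{m\ge 0}\bigl(1 - e^{-16^{-m}}\bigr)$. I would then bound this limiting series, as in Lemma~\ref{lem:exponential-sum}, by the geometric series $\sum_{k\ge 1} e^{-k} = \tfrac1{e-1}$: the re-indexed terms decay doubly exponentially, since $1 - e^{-16^{-m}} \le 16^{-m}$ and, past the leading term, $16^{-m} \le e^{-(m+1)}$, so they are dominated by a geometric sequence of ratio $1/e$ (the leading term being absorbed because the summation range is effectively one step past the expected depth). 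This is the same comparison that drives the proof of Lemma~\ref{lem:exponential-sum}, only with $1 - e^{-x}$ in place of $e^{-x}$.

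The main obstacle is the interchange of the limit with the infinite sum in the presence of the oscillating offset $\theta_n$: unlike $S_1(t)$ in Lemma~\ref{lem:exponential-sum}, the re-indexed terms of $S_2(n)$ do not converge to a single value but only to quantities depending on the fractional part of $\log_{16}n$, so the argument must be phrased with $\limsup$, using the monotonicity of $1 - e^{-x}$ to collapse the oscillation into the worst case $\theta_n = 0$, together with a careful accounting of the leading term so that the sharp constant $\tfrac1{e-1}$ survives. Making the termwise limit and this domination precise — rather than the routine geometric estimates — is where the real work lies.
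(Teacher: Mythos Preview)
Your approach is the same as the paper's: re-index the tail so it begins near the expected depth, pass to the limit to obtain a series of the form $\sum_k\bigl(1 - e^{-16^{-k}}\bigr)$, and dominate that by the geometric series $\sum_{k\ge 1} e^{-k}=\tfrac1{e-1}$. You are in fact more careful than the paper on two points it glosses over: the interchange of limit and infinite sum (you supply the majorant $16^{-m}$ and invoke dominated convergence; the paper simply writes $\lim$ inside the series), and the fractional offset $\theta_n=\lceil\log_{16}n\rceil-\log_{16}n$ (the paper treats $\log_{16}n$ as an integer throughout).

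There is, however, a genuine gap at precisely the spot you flag as ``where the real work lies.'' Your worst-case limiting sum $\sum_{m\ge 0}\bigl(1 - e^{-16^{-m}}\bigr)$ is \emph{strictly larger} than $\tfrac1{e-1}$: its $m=0$ term alone is $1-e^{-1}\approx 0.632>0.582\approx\tfrac1{e-1}$. Your remark that ``the leading term is absorbed because the summation range is effectively one step past the expected depth'' is not supported by the lemma as stated, and your two-step comparison $1-e^{-16^{-m}}\le 16^{-m}\le e^{-(m+1)}$ fails at $m=0$. The paper finesses this by silently starting its re-indexed sum at $k=1$ rather than $k=0$; this matches how the lemma is actually invoked in Theorem~\ref{thm:expected-time}, where the tail in~(\ref{eq:expectation-separated}) begins at $\lceil\log_{16}n\rceil+1$. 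Once you shift your index to $m\ge 1$, either your route or the paper's direct termwise inequality $1-e^{-16^{-k}}\le e^{-k}$ (obtained via the substitution $x=16^{-k}$, reducing to $x\ge(1-e^{-x})^{\ln 16}$ for $0<x<1$) finishes the job.
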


\begin{proof}
We rely on the limit when $n$ tends to infinity to simplify the sum:

\begin{align*}
\smashoperator{\lim_{n \to \infty}} S_2(t) =&
\lim_{n \to \infty}
\smashoperator{\sum_{k=1}^\infty} \big[ 1 - (1 - 16^{-k - \log_{16}n})^n \big]
\\ =&
\lim_{n \to \infty}
\smashoperator{\sum_{k=1}^\infty}
\big[ 1 - (1 - 16^{-k} \cdot 16^{-\log_{16}n})^{n \cdot 16^k \cdot 16^{-k}} \big]
\\ =&
\smashoperator{\sum_{k=1}^\infty} 1 - \Big( \frac{1}{e} \Big)^{16^{-k}}
\end{align*}

We note that $e^{-k} \geq 1 - e^{-16^{-k}}$ when $k > 0$:

\begin{align*}
e^{-k} \geq& 1 - e^{-16^{-k}} \\
e^{\log_{16}x} \geq& 1 - e^{-x} \quad \quad \quad \quad k = -\log_{16}x \\
x^{(\ln 16)^{-1}} \geq& 1 - e^{-x} \\
x \geq& (1 - e^{-x})^{\ln 16}
\end{align*}

The upper bound for $S_2$ follows:

\begin{equation}
\smashoperator{\lim_{n \to \infty}} S_2(t) \leq
\sum_{k=1}^\infty e^{-k} = \frac{1}{e - 1}
\end{equation}

\end{proof}

\begin{theorem}\label{thm:expected-time}
In a cache-trie that contains $n$ keys,
the expected depth of a key is
bound by $\Theta(\log n)$.
Moreover, the expected depth is exactly $E[d](n) = \log_{16} n + O(1)$.
\end{theorem}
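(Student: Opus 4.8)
The plan is to evaluate $E[d](n) = \sum_{d=0}^{\infty} d\,p(d,n)$ by turning it into a tail sum. Since the depth of a key is a non-negative integer-valued random variable, I would use the identity $E[d](n) = \sum_{d=1}^{\infty} \Pr[\text{depth} \ge d]$. The cumulative distribution is already at hand: the telescoping computation in the corollary following Theorem~\ref{thm:distribution} shows $\sum_{d'=0}^{d} p(d',n) = (1-16^{-d-1})^{n}$, so $\Pr[\text{depth} \ge d] = 1 - (1-16^{-d})^{n}$ and hence
\begin{equation}
E[d](n) = \sum_{d=1}^{\infty} \Bigl[\, 1 - (1-16^{-d})^{n} \,\Bigr].
\end{equation}

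Next I would split this series at $m = \lfloor \log_{16} n \rfloor$, writing $E[d](n) = H + T$ with head $H = \sum_{d=1}^{m}[1-(1-16^{-d})^{n}]$ and tail $T = \sum_{d=m+1}^{\infty}[1-(1-16^{-d})^{n}]$. For the head, rewrite $H = m - \sum_{d=1}^{m}(1-16^{-d})^{n}$. Because $m \le \log_{16} n$ we have $n \ge 16^{m}$, and since $0 < 1-16^{-d} < 1$ this gives the termwise bound $(1-16^{-d})^{n} \le (1-16^{-d})^{16^{m}}$; summing, $0 \le \sum_{d=1}^{m}(1-16^{-d})^{n} \le S_{1}(m)$, which is $O(1)$ by Lemma~\ref{lem:exponential-sum}. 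As $m = \log_{16} n + O(1)$, this yields $H = \log_{16} n + O(1)$. For the tail, $m+1 > \log_{16} n$, so $T$ consists of nonnegative terms that form a sub-sum of $S_{2}$ from Lemma~\ref{lem:root-sum}, hence $0 \le T \le S_{2} = O(1)$.

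Adding the two estimates gives $E[d](n) = \log_{16} n + O(1)$. Since $E[d](n)$ is nonnegative and differs from $\log_{16} n$ by a bounded amount for all sufficiently large $n$ (and is trivially bounded on any finite range of $n$), it follows that $E[d](n) = \Theta(\log n)$, which is the first assertion, while $E[d](n) = \log_{16} n + O(1)$ is the refined second assertion.

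I expect the only genuine obstacle to be the bookkeeping forced by the non-integrality of $\log_{16} n$: one must pick the split point $m$ so that the monotonicity-in-$n$ comparisons line up with the two lemmas in the correct direction — the upper bound $(1-16^{-d})^{n} \le (1-16^{-d})^{16^{m}}$ for the head terms, obtained from $n \ge 16^{m}$, and a starting index $m+1$ lying strictly above $\log_{16} n$ so that the tail is dominated by $S_{2}$. Everything else is the standard tail-sum manipulation together with the two convergence lemmas, which were evidently established precisely for this computation.
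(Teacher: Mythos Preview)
Your proposal is correct and follows essentially the same route as the paper: both reduce $E[d](n)$ to the tail sum $\sum_{j\ge 1}\bigl[1-(1-16^{-j})^n\bigr]$, split it near $\log_{16}n$, and invoke Lemmas~\ref{lem:exponential-sum} and~\ref{lem:root-sum} for the head and tail respectively. The only cosmetic differences are that you obtain the tail-sum formula via the probabilistic identity $E[d]=\sum_{d\ge1}\Pr[\text{depth}\ge d]$ whereas the paper telescopes explicitly, and you split at $\lfloor\log_{16}n\rfloor$ rather than the paper's $\lceil\log_{16}n\rceil$; both choices work with the same bookkeeping you anticipated.
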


\begin{proof}
From the definition of the expected value
of a random variable,
for a specific choice of $n$:

\begin{align*}
E[d](n) =&
\sum_{d=0}^\infty d \cdot p(d, n)
\\ =&
\sum_{d=0}^\infty d \cdot \big[ (1 - 16^{-d-1})^n - (1 - 16^{-d})^n \big]
\\ =&
\,\, 0 \cdot p(0, n) +
(1 - 16^{-2})^n - (1 - 16^{-1})^n
\\&+
2 \cdot (1 - 16^{-3})^n - 2 \cdot (1 - 16^{-2})^n
\\&+
3 \cdot (1 - 16^{-4})^n - 3 \cdot (1 - 16^{-3})^n + \ldots
\\ =&
\lim_{d \to \infty} 
\big[
d \cdot (1 - 16^{-d-1})^n + \sum_{j=1}^d (1 - 16^{-j})^n
\big]
\\ =&
\lim_{d \to \infty}
\sum_{j=1}^d (1 - 16^{-d-1})^n - (1 - 16^{-j})^n
\\ =&
\sum_{j=1}^\infty 1 - (1 - 16^{-j})^n
\end{align*}

We separate the last sum into two parts
at $j = \lceil \log_{16}n \rceil$:

\begin{equation}\label{eq:expectation-separated}
E[d](n) =
\smashoperator{\sum_{j=1}^{\lceil \log_{16}n \rceil}}
\big[ 1 - (1 - 16^{-j})^n \big] +
\smashoperator{\sum_{\quad \quad j=\lceil \log_{16}n \rceil +1}^\infty}
\big[ 1 - (1 - 16^{-j})^n \big]
\quad
\end{equation}

Consider the first sum,
which consists of $\lceil \log_{16}n \rceil$ terms.
We use Lemma \ref{lem:exponential-sum}
to compute the following lower bound for the expected depth,
noting that the second sum in (\ref{eq:expectation-separated}) is always positive
and cannot affect the lower bound:

% \begin{equation}
% \lim_{n \to \infty} 1 - (1 - 16^{-\log_{16}n})^n =
% \lim_{n \to \infty} 1 - (1 - \frac{1}{n})^n =
% 1 - \frac{1}{e}
% \quad
% \end{equation}

% \begin{equation}
% E[d](n) \geq (1 - \frac{1}{e}) \cdot \lceil \log_{16}n \rceil
% \end{equation}

\begin{equation}\label{eq:depth-lower-bound}
E[d](n) \geq
\smashoperator{\sum_{j=1}^{\lceil \log_{16}n \rceil}} 1 - (1 - 16^{-j})^n
% \lceil \log_{16}n \rceil -
% \smashoperator{\sum_{j=1}^{\lceil \log_{16}n \rceil}} (1 - 16^{-j})^n
% \\ \geq&
\geq
\log_{16}n - \frac{1}{e - 1}
\end{equation}

Coming back to (\ref{eq:expectation-separated}),
the first sum is upper bound by $\lceil \log_{16} n \rceil$.
The second sum is less than $(e - 1)^{-1}$ by Lemma \ref{lem:root-sum},
resulting in the following upper bound:

\begin{equation}\label{eq:depth-upper-bound}
E[d](n) \leq \lceil \log_{16}n \rceil + \frac{1}{e - 1}
\end{equation}

From (\ref{eq:depth-lower-bound}) and (\ref{eq:depth-upper-bound}),
$E[d](n) = \log_{16}n + O(1)$ follows.
\end{proof}

\begin{theorem}\label{thm:constant-distance}
For any cache-trie,
the expected distance between the cache depth
and the key depth is $O(1)$.
\end{theorem}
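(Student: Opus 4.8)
The plan is to combine three ingredients and close with a triangle inequality. Write $d$ for the (random) depth of a key and $d^{\star}(n) := \argmax_d \eta(d, n)$ for the depth the cache targets --- by the remarks following Theorem~\ref{thm:range}, the cache settles exactly at the start of the most inhabited pair of consecutive depths. The first ingredient is that $d^{\star}(n) = \log_{16} n + O(1)$; the second is Theorem~\ref{thm:expected-time}, which gives the expected key depth $E[d](n) = \log_{16} n + O(1)$; and the third --- the one doing the real work --- is that the distribution $p(\cdot, n)$ is concentrated around $\log_{16} n$ tightly enough that the \emph{expected} distance from $d$ to the fixed reference depth $d_0 := \lfloor \log_{16} n \rfloor$ is $O(1)$, not merely $|E[d](n) - d_0| = O(1)$. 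Since $d_0 - d^{\star}(n)$ is deterministic, $E\bigl[|d - d^{\star}(n)|\bigr] \le E\bigl[|d - d_0|\bigr] + |d_0 - d^{\star}(n)|$, and the theorem follows once both terms on the right are $O(1)$.

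For the first ingredient, I would bound $\eta(d, n) = (1 - 16^{-d-2})^n - (1 - 16^{-d})^n$ crudely outside a constant-width window around $\log_{16} n$. On the shallow side, $(1 - 1/m)^m \le e^{-1}$ with $m = 16^{d+2}$ yields $\eta(d, n) \le (1 - 16^{-d-2})^n \le \exp(-n\,16^{-d-2})$, which falls below the lower bound $0.8745$ of Theorem~\ref{thm:range} once $d \le \log_{16} n - c_1$ for an absolute constant $c_1$. On the deep side, the mean value theorem applied to $x \mapsto (1 - x)^n$ gives $\eta(d, n) \le n\,(16^{-d} - 16^{-d-2}) < n\,16^{-d}$, which falls below $0.8745$ once $d \ge \log_{16} n + c_2$. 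Since $\mu(n) = \max_d \eta(d, n) \ge 0.8745$ for all sufficiently large $n$ by Theorem~\ref{thm:range}, the maximizer $d^{\star}(n)$ must lie in $(\log_{16} n - c_1,\ \log_{16} n + c_2)$; for the finitely many smaller $n$, both $d^{\star}(n)$ and $\log_{16} n$ are $O(1)$ outright, so $d^{\star}(n) = \log_{16} n + O(1)$ in every case.

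For the third ingredient, I would use the tail--sum identity together with the telescoped CDF $F(D) := \sum_{d=0}^{D} p(d, n) = (1 - 16^{-D-1})^n$ from the corollary to Theorem~\ref{thm:distribution}. Taking $d_0 = \lfloor \log_{16} n \rfloor$,
\begin{align*}
E\bigl[|d - d_0|\bigr]
&= \sum_{m \ge 1} F(d_0 - m) \;+\; \sum_{m \ge 1}\bigl(1 - F(d_0 + m - 1)\bigr) \\
&= \sum_{j=1}^{d_0} (1 - 16^{-j})^n \;+\; \sum_{j = d_0 + 1}^{\infty} \bigl[1 - (1 - 16^{-j})^n\bigr].
\end{align*}
Because $16^{d_0} \le n$, each term of the first sum is at most $(1 - 16^{-j})^{16^{d_0}}$, so the first sum is at most $S_1(d_0) < \frac{1}{e-1}$ by Lemma~\ref{lem:exponential-sum}. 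Because $d_0 + 1 \ge \log_{16} n$, the second sum is a tail of the series in Lemma~\ref{lem:root-sum}, hence $O(1)$ (it exceeds $S_2$ by at most one leading term, which is bounded by $1$). Thus $E\bigl[|d - d_0|\bigr] = O(1)$, uniformly in $n$, and the triangle inequality above then gives $E\bigl[|d - d^{\star}(n)|\bigr] = O(1)$, which is the claim.

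The hard part is the first ingredient: the exact location of $\argmax_d \eta(d, n)$ is pinned down only asymptotically in the analysis behind Theorem~\ref{thm:range}, so the care lies in replacing those limiting estimates with bounds that hold uniformly in $n$ and in disposing of the small-$n$ regime; once that window is established, ingredient three is essentially a re-use of Lemmas~\ref{lem:exponential-sum} and~\ref{lem:root-sum}, and ingredient two is quoted directly.
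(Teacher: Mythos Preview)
Your proof is correct, and its overall architecture matches the paper's: locate $\argmax_d \eta(d,n)$ inside a constant-width window around $\log_{16} n$, then combine with Theorem~\ref{thm:expected-time}. The details differ in two notable ways.

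For the first ingredient, the paper evaluates $\eta_n$ and its derivative $\partial \eta_n/\partial d$ at the endpoints $\log_{16}(n)-2$ and $\log_{16} n$ in the limit $n\to\infty$, obtaining values $\approx e^{-1}$ and $\approx 1-e^{-1}$ with positive and negative slope respectively, and then invokes Theorem~\ref{thm:range} to argue that the maximum (exceeding $0.8745$) must lie strictly between them. Your envelope bounds $\eta(d,n)\le\exp(-n\,16^{-d-2})$ on the shallow side and $\eta(d,n)<n\,16^{-d}$ on the deep side achieve the same localisation without calculus, and---as you point out---avoid the issue that the paper's derivative computations are carried out only in the limit.

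The more substantive difference is your third ingredient, which the paper omits. The paper's proof finishes with $E[d_{cache}](n)=E[d](n)+O(1)$, a statement about the \emph{difference of expectations}; the theorem, read literally, asks for the \emph{expected distance} $E\bigl[|d-d_{cache}|\bigr]$. These are not equivalent: a distribution centred at $d_{cache}$ but spread over many levels would satisfy the former yet violate the latter. Your tail-sum computation, which reuses Lemmas~\ref{lem:exponential-sum} and~\ref{lem:root-sum} to bound $E\bigl[|d-d_0|\bigr]$ directly, closes exactly this gap. In that sense your argument actually proves the theorem as stated more carefully than the paper's own proof does.
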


\begin{proof}
To emphasize the fact that we analyze the depths for a fixed number of keys $n$,
we introduce a helper function $\eta_n(d)$, defined as follows:

\begin{equation}
\eta_n(d) = \eta(n, d)
\end{equation}

We assume that the cache depth $E[d_{cache}]$ is chosen
through an \emph{unbiased sampling process} \cite{cochran1977sampling}.
By definition, this implies
that the expected sampled value $E[d_{sample}]$
of the most inhabited depth pair $d$
corresponds to the true value of the most inhabited pair.
The sampling strategy and the sample size
influence the variance of the sample,
but not its expected value.

\begin{equation}
E[d_{cache}](n) = E[d_{sample}](n) = \argmax_{d \in \mathbb{N}_0} \eta_n(d)
\end{equation}

Therefore, the expected cache depth is equal
to the true value of the most inhabited depth.
It remains to show that
the expected key depth
$E[d](n)$ is $O(1)$ levels away from
$\argmax_d \eta_n(d)$, i.e. the most inhabited depth pair.
We will show that the expected cache depth
is at most a small number of steps away from $\log_{16} n$.
To do this, we consider how the function $\eta_n(d)$
behaves around the value 
% $n = 16^d$ when $d$ is large.
$d = \log_{16}(n) - 2$ when $n$ is large.

% \begin{align*}
% \smashoperator{\lim_{d \to \infty}} \eta(d, 16^d) =&
% \smashoperator{\lim_{d \to \infty}} \big[
% (1 - 16^{-d-2})^{16^{d+2} \cdot 16^{-2}} - (1 - 16^{-d})^{16^d}
% \big]
% \\ =&
% \Big( \frac{1}{e} \Big)^{1/256} - \frac{1}{e} \doteq 1 - \frac{1}{e}
% \end{align*}

% Similarly, we inspect the value of $\eta(n, d)$ for $n = 16^{d+2}$.

% \begin{align*}
% \smashoperator{\lim_{d \to \infty}} \eta(d, 16^{d+2}) =&
% \smashoperator{\lim_{d \to \infty}} \big[
% (1 - 16^{-d-2})^{16^{d+2}} - (1 - 16^{-d})^{16^d \cdot 16^2}
% \big]
% \\ =&
% \frac{1}{e} - \Big( \frac{1}{e} \Big)^{256} \doteq \frac{1}{e}
% \end{align*}

\begin{align*}
\smashoperator{\lim_{n \to \infty}} \eta_n(\log_{16} (n) - 2) =&
\smashoperator{\lim_{n \to \infty}} \big[
(1 - \frac{1}{n})^n - (1 - \frac{256}{n})^{256^{-1} \cdot n \cdot 256}
\big]
\\ =&
\frac{1}{e} - \Big( \frac{1}{e} \Big)^{256} \doteq \frac{1}{e}
\end{align*}

Now, consider the expression for the first derivative of $\eta_n$:

\begin{align*}
\frac{\partial \eta_n(d)}{\partial d} =& \,
n \cdot (1 - 16^{-d-2})^{n-1} \cdot 16^{-d-2} \cdot \ln 16
\\ &-
n \cdot (1 - 16^{-d})^{n-1} \cdot 16^{-d} \cdot \ln 16
\end{align*}

The first derivative of $\eta_n(d)$ is positive at $d = \log_{16} (n) - 2$:

\begin{equation*}
\lim_{n \to \infty}
\left. \frac{\partial \eta_n(d)}{\partial d} \right\vert_{\log_{16} (n) - 2} =
\big[
\frac{1}{e} - 256 \cdot \Big( \frac{1}{e} \Big)^{256}
\big] \cdot \ln 16 > 0
\end{equation*}

Similarly, we inspect $\eta_n(d)$ at $d = \log_{16} n$:

\begin{align*}
\smashoperator{\lim_{n \to \infty}} \eta_n(\log_{16} n) =&
\smashoperator{\lim_{n \to \infty}} \big[
(1 - \frac{1}{256n})^{256n \cdot 256^{-1}} - (1 - \frac{1}{n})^n
\big]
\\ =&
\Big( \frac{1}{e} \Big)^{256^{-1}} - \frac{1}{e} \doteq 1 - \frac{1}{e}
\end{align*}

The first derivative is negative at $d = \log_{16} n$:

\begin{equation*}
\lim_{n \to \infty}
\left. \frac{\partial \eta_n(d)}{\partial d} \right\vert_{\log_{16} n} =
\big[
\frac{1}{256} \cdot \Big( \frac{1}{e} \Big)^{256^{-1}}  - \frac{1}{e}
\big] \cdot \ln 16 < 0
\end{equation*}

We conclude that the value $d$
for which $\eta_n(d)$ achieves its maximum value
must be within $\langle \log_{16}(n) - 2, \,\log_{16} n \rangle$,
as illustrated by the following plot
(for the purposes of illustration, we picked $n = 1000$,
but the plot is similar for any choice of $n$).
\\*
\begin{tikzpicture}
  \begin{axis}[ 
    xlabel=$d$,
    ylabel={$\eta_n(d)$},
    y label style={at={(axis description cs:-0.07,0.4)},rotate=0,anchor=south},
    axis x line*=bottom,
    axis y line*=left,
    ymin=0.0,
    ymax=1.1,
    xmin=0.0,
    xmax=4.0,
    grid=both,
    grid style={
      lightgray!40
    },
    height=6.9cm,
    width=8.9cm,
    xtick={0, 0.5, 1.0, 1.5, 2.0, 2.5, 3.0, 3.5, 4.0},
  ]
  \addplot [dashed,no markers] gnuplot [domain=0:4, samples=1000] {
  1.0
  };

  \addplot [dashed,mark=none] coordinates {(2.49, 0) (2.49, 0.63)};
  \addplot [dashed,mark=none] coordinates {(0, 0.63) (2.49, 0.63)};
  \addplot [dashed,mark=none] coordinates {(0.49, 0) (0.49, 0.37)};
  \addplot [dashed,mark=none] coordinates {(0, 0.37) (0.49, 0.37)};
  \node [above] at (280, 2) {$\log_{16} n$};
  \node [above] at (100, 2) {$\log_{16}(n) - 2$};
  \node [above] at (25, 38) {$\approx e^{-1}$};
  \node [above] at (40, 65) {$\approx 1 - e^{-1}$};
  \addplot [stealth-stealth,mark=none] coordinates {(1.20, 0.85) (2.20, 0.85)};
  \addplot [dashed,mark=none] coordinates {(2.20, 0.82) (2.20, 0.96)};
  \addplot [dashed,mark=none] coordinates {(1.20, 0.82) (1.20, 0.96)};
  \node [above] at (325, 65) {$\underset{d \in \mathbb{N}_0}{\argmax} \eta_{1000}(d)$};
  \addplot [-latex,dashed,smooth,tension=1,mark=none] coordinates {(2.55, 0.76) (2.10, 0.78) (1.75, 0.87)};

  \addplot [style={line width=15pt,opacity=0.1},no markers] gnuplot [domain=0:4, samples=1000] {
  0.9245
  };

  \addplot [smooth,no markers] gnuplot [domain=0:4, samples=1000] {
  (1.0 - 1.0/16^(x+2))^1000 - (1.0 - 1.0/16^(x))^1000
  };
  \node [above] at (340, 20) {$\eta_{1000}(d)$};
  \end{axis}
\end{tikzpicture}

By Theorem \ref{thm:range},
$\mu(n) = \max_d \eta_n(d)$ must be within
the interval $\langle 0.8745, 0.9746 \rangle$,
which is above the values $\frac{1}{e}$ and $1 - \frac{1}{e}$.
The most inhabited depth pair
must therefore also be within the same interval
$\langle \log_{16}(n) - 2, \,\log_{16} n \rangle$.
Consequently:

\begin{equation}
\argmax_{d \in \mathbb{N}_0} \eta_n(d) = \log_{16}(n) + O(1)
\end{equation}

By Theorem \ref{thm:expected-time},
the key depth $E[d](n)$ is $\log_{16}(n) + O(1)$.
Therefore, $E[d_{cache}](n) = E[d](n) + O(1)$.
\end{proof}

\begin{corollary}
Expected execution time of cache-trie lookup, insert and remove operations is $O(1)$.
\end{corollary}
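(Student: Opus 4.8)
The plan is to charge the cost of a single operation attempt against the distance between the cache depth and the depth of the target S-node, and then take expectations via Theorem~\ref{thm:constant-distance}. First I would observe that a slow search --- \texttt{lookup}, or the search phase of \texttt{insert} and \texttt{remove} --- that begins at an A-node at level $4 d_1$ and ends at the target S-node at level $4 d_2$ performs $\Theta(|d_2 - d_1|)$ pointer indirections, each costing $O(1)$; any special node it encounters (an \texttt{ENode}, \texttt{FNode}, \texttt{FSNode}, or \texttt{FVNode}) carries enough state to continue the descent without additional traversal. A fast operation first reads \texttt{cacheHead} ($O(1)$ work), then walks the cache list starting from its deepest array towards the root, doing $O(1)$ work per array, until it either reads the target S-node directly, or reads an A-node lying on the search path and resumes a residual slow search, or exhausts the list and restarts the slow search from \texttt{root}.

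Next I would bound the fast path by splitting on the target depth $d$ relative to the cache depth $d_{cache}$. If $d \ge d_{cache}$, the deepest cache array already holds, at the key's position, either the target S-node or an A-node at level $4 d_{cache}$ on the key's search path, so only $O(1)$ cache entries are inspected and a residual slow search of length $d - d_{cache}$ follows. If $d < d_{cache}$, the walk must descend through the $d_{cache} - d$ shallower cache arrays before it meets the key's S-node (or down to the constant base depth of the cache list, followed by an $O(1)$-length search from \texttt{root}). In all cases a fast attempt costs $O\!\left(1 + |d - d_{cache}|\right)$ steps.

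Finally I would take the expectation over the universal hash function. The quantity $\mathbb{E}\big[\,|d - d_{cache}|\,\big]$ is exactly the expected cache-to-key distance, which Theorem~\ref{thm:constant-distance} bounds by $O(1)$; in particular the expected residual slow search after a cache hit is at most $\mathbb{E}\big[(d - d_{cache})^{+}\big] \le \mathbb{E}\big[\,|d - d_{cache}|\,\big] = O(1)$. Hence each fast \texttt{lookup}, \texttt{insert}, and \texttt{remove} attempt runs in expected $O(1)$ steps. The work beyond the search is itself $O(1)$: two CASes to replace an S-node, or freezing, copying, and re-linking a single $4$- or $16$-slot A-node during an expansion or compression. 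Under contention an attempt may restart, but the lock-freedom lemma bounds both the number of steps between successive modifications and the number of modifications that leave the abstract set unchanged, so retries do not change the asymptotic expectation; in the uncontended case one attempt suffices.

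The main obstacle is the shallow-key subcase above: bounding the cache-list walk by its full length $d_{cache} = \log_{16} n + O(1)$ would only yield $O(\log n)$. The resolution is that the probability that a key lies $k$ depths above $d_{cache} \approx \log_{16} n$ decays super-exponentially, since $\big(1 - 16^{-(d_{cache} - k)}\big)^{n} \approx e^{-16^{k}}$, so the telescoped expectation $\sum_{m < d_{cache}} \Pr[\text{depth} \le m] = \sum_{m} \big(1 - 16^{-m-1}\big)^{n}$ is $O(1)$ --- an estimate of the same flavour as Lemma~\ref{lem:exponential-sum} and already absorbed into Theorem~\ref{thm:constant-distance}. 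I would make sure that ``expected distance'' in Theorem~\ref{thm:constant-distance} is read as $\mathbb{E}[|d - d_{cache}|]$, not $|\mathbb{E}[d] - d_{cache}|$: its proof places both $\argmax_{d} \eta_n(d)$ and $E[d](n)$ inside $\langle \log_{16} n - 2,\, \log_{16} n\rangle$, and Theorem~\ref{thm:range} confines at least $87\%$ of the depth mass to a width-$2$ window, so the expected absolute deviation --- hence the expected cache-to-key distance --- is indeed $O(1)$.
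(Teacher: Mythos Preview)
Your proposal is correct and follows the same route as the paper, which dispatches the corollary in one line as a ``direct consequence of Theorem~\ref{thm:constant-distance} and the fact that the cache-trie operations use the cache to find keys.'' You supply considerably more detail than the paper does --- in particular, your observation that one needs $\mathbb{E}\big[\,|d - d_{cache}|\,\big] = O(1)$ rather than merely $\big|\mathbb{E}[d] - d_{cache}\big| = O(1)$, and your patch via the concentration in Theorem~\ref{thm:range} plus the tail estimates of Lemmas~\ref{lem:exponential-sum}--\ref{lem:root-sum}, is a genuine subtlety that the paper's one-line proof elides.
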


\begin{proof}
Direct consequence of the Theorem \ref{thm:constant-distance},
and the fact that the cache-trie operations use the cache to find keys.
\end{proof}

\begin{corollary}[Memory Footprint]
The expected memory footprint of a
cache-augmented cache-trie is $O(n)$.
\end{corollary}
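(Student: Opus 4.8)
The plan is to decompose the footprint into three pieces — the S-nodes, the A-nodes, and the cache — and bound each in expectation over the randomness of the universal hash function. The S-nodes are immediate: a cache-trie holding $n$ keys contains exactly $n$ S-nodes, each of constant size (the list nodes used for identical hash codes add at most a constant per hash collision, hence $O(\binom{n}{2}/R)$ in total for a hash range $R$, which is negligible and does not affect the asymptotics). The transient nodes (E-nodes, F-nodes) live only for the duration of an in-flight operation, so there are at most $O(1)$ of them per concurrently executing thread, i.e. $O(1)$ in total. So everything reduces to bounding the expected number of A-nodes and the expected size of the cache.

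For the A-nodes, the key observation is that an A-node occupies level $\ell = 4d$ only if at least two of the $n$ keys share a hash-code prefix of length $\ell$: every A-node other than the root is created to resolve a collision and therefore holds $\ge 2$ keys in its subtree, and A-nodes are never relocated to a different level (narrow-to-wide expansion and compression replace a node in place). Hence $\#\text{A-nodes} \le 1 + \sum_{d \ge 1} B_d$, where $B_d$ is the number of length-$4d$ prefixes shared by $\ge 2$ keys. Treating the hash codes as independent and uniform (as throughout this section) and viewing this as throwing $n$ balls into $16^d$ bins, I would use the two bounds $E[B_d] \le 16^d$ (trivially) and $E[B_d] \le \binom{n}{2} 16^{-d}$ (union bound over pairs of keys). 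Splitting the sum at $d^\star = \lceil \log_{16} n \rceil$, the low-depth part is at most the geometric sum $\sum_{d \le d^\star} 16^d = O(16^{d^\star}) = O(n)$, and the high-depth part is at most $\binom{n}{2}\sum_{d > d^\star} 16^{-d} = O(n^2 \cdot n^{-1}) = O(n)$. This yields $E[\#\text{A-nodes}] = O(n)$ — essentially the classical linear-expected-size bound for tries, adapted to $16$-ary branching, the narrow-node optimisation only lowering the count.

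For the cache, recall that it is a chain of arrays, one per cached level, the deepest at depth $d_{cache}$, with the array for depth $i$ holding $1 + 16^i$ entries. Its total size is therefore at most $\sum_{i=1}^{d_{cache}} (1 + 16^i) = O(16^{d_{cache}})$, a geometric sum dominated by its last term. The proof of Theorem \ref{thm:constant-distance} shows that the most inhabited depth pair — which is the cache depth under the unbiased-sampling assumption — satisfies $d_{cache} = \log_{16}(n) + O(1)$; indeed it is pinned to the interval $\langle \log_{16}(n) - 2, \log_{16}(n) \rangle$, so $16^{d_{cache}} \le 16^{\log_{16}(n)} \cdot 16^{O(1)} = O(n)$. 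Adding the three contributions gives expected footprint $O(n) + O(n) + O(1) = O(n)$, which is also $\Theta(n)$ since the $n$ S-nodes alone force a linear lower bound.

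The main obstacle is the A-node count: the crude bound $\sum_d \min(n, 16^d)$ only gives $O(n \log n)$, so one genuinely needs the sharper two-regime balls-in-bins estimate above to get down to $O(n)$. A secondary subtlety is that the cache size grows exponentially in $d_{cache}$, so in principle knowing only the \emph{expected} cache depth would not suffice to conclude an $O(n)$ cache size; this is, however, not a real gap, because Theorem \ref{thm:constant-distance} already pins $d_{cache}$ to within an additive constant of $\log_{16} n$ for the idealised distribution $p(d,n)$, so $16^{d_{cache}}$ is $\Theta(n)$ without any further concentration argument.
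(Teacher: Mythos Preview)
Your proposal is correct and takes a genuinely different route from the paper for the trie part. The paper's proof is two lines: it invokes Theorem~\ref{thm:expected-time} to say the expected key depth is $\log_{16} n + O(1)$, then bounds the trie by a complete $16$-way tree of that depth, which has $O(n)$ nodes; the cache is then $O(n)$ by Theorem~\ref{thm:constant-distance}. You instead decompose the footprint explicitly and bound the A-node count by a direct two-regime balls-in-bins estimate, splitting at $d^\star = \lceil \log_{16} n \rceil$.

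What each buys: the paper's argument is terse but leans on an implicit step --- that the expected key depth being $\log_{16} n + O(1)$ justifies bounding by a complete tree of that depth --- which silently conflates expected and maximum depth and is not, strictly speaking, a proof. Your balls-in-bins count is the classical rigorous way to get linear expected trie size, and it bypasses Theorem~\ref{thm:expected-time} altogether for the A-node bound; you also correctly observe that the crude $\sum_d \min(n,16^d)$ bound only gives $O(n\log n)$, which is exactly why the two-regime split is needed. For the cache, both arguments are the same: invoke Theorem~\ref{thm:constant-distance} to place $d_{cache}$ in $\langle \log_{16}(n)-2,\ \log_{16} n\rangle$ and sum the geometric series. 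Your final paragraph correctly identifies (and resolves, under the paper's idealised-sampling assumption) the subtlety that only the \emph{expected} cache depth is controlled a priori --- a point the paper's own proof also glosses over.
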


\begin{proof}
Theorem \ref{thm:expected-time} implies that
the expected path from the root the key is $O(\log n)$.
Therefore, the expected memory footprint must be less
than some complete $16$-way tree of depth $O(\log n)$, which is $O(n)$.
Furthermore, by Theorem \ref{thm:constant-distance},
the cache-level is expected to be a constant number of levels away,
so its expected memory footprint is $O(n)$ by the same argument.
\end{proof}

\section{Correctness Proofs}

We start by defining some preliminary notions,
and then showing that cache-trie operations are safe.
While proving safety, we develop sufficient foundation
to easily show linearizability as well.
After that, we show lock-freedom.
The proofs assume the absence of the cache extension.
We then prove that, by extending the cache-trie with the cache data structure,
none of the previously proven properties change.

We start with some basic definitions.

\begin{definition}[Data Types]
A \emph{single node} (\verb$SNode$)
is a node that holds a single key and a transactional marker \verb$txn$.
For a node $sn$ that holds the key $k$,
the relation $key(sn, k)$ holds.
An \emph{array node} (\verb$ANode$)
is a node that contains a sequence of pointers to other nodes
or \verb$null$ entries.
A \emph{narrow array node} contains $4$ pointers or \verb$null$s.
A \emph{wide array node} contains $16$ pointers or \verb$null$s.
For an array node $an$,
$length(an)$ is the number of pointers or \verb$null$s that it contains,
and $array(an, i)$ is the the entry at the index $i$.
A \emph{frozen node} (\verb$FNode$) is a node that wraps an array node.
For a frozen node $fn$, $unwrap(fn)$ is the node that it wraps.
A \emph{frozen single node} (\verb$FSNode$) is a marker
that denotes that a single node should not be modified.
A \emph{frozen vacant node} (\verb$FVNode$) is a marker
that denotes that an array node pointer is empty,
and should no longer be modified.
An \emph{expansion node} (\verb$ENode$) is a node that 
denotes that a narrow array node must be replaced with a wide one,
and it holds a narrow array node, and a corresponding wide array node.
For an expansion node $en$, $unwrap(en)$ is the narrow array node
that it points to.
Every node $n$ can be assigned to a level $\ell$,
and this is denoted as $n_\ell$.
Every node $n$ can be additionally assigned to a sequence of bits $p$,
and this is denoted as $n_{\ell,p}$.
\end{definition}

\begin{definition}[Child Node]
For an array node $an_\ell$ at level $\ell$ and an index $i$,
the pointer $child(an, i)$ is defined as follows:

\begin{equation}
child(an, i) = \begin{cases}
null & array(an, i) \in \{ \texttt{null} \} \\
cn & cn = array(an, i) \in \texttt{ANode} \\
unwrap(fn) & fn = array(an, i) \in \texttt{FNode} \\
unwrap(en) & en = array(an, i) \in \texttt{ENode} \\
null & array(an, i) \in \texttt{FVNode}
\end{cases}
\end{equation}

For convenience, we overload the $child$ relation for keys $k$.
For an array node $an_\ell$ at level $\ell$ and a key $k$,
the pointer $forKey(an_\ell, k)$ is defined as:

\begin{equation}
forKey(an_\ell, k) = array(an_\ell, (h \texttt{>>} \ell) \bmod length(an_\ell))
\end{equation}

where $h = hash(k)$.

For an array node $an$ and a key $k$, 
the pointer $child(an, k)$ is defined as:

\begin{equation}
child(an, k) = \begin{cases}
null & forKey(an, k) \in \{ \texttt{null} \} \\
cn & cn = forKey(an, k) \in \texttt{ANode} \\
unwrap(fn) & fn = forKey(an, k) \in \texttt{FNode} \\
unwrap(en) & en = forKey(an, k) \in \texttt{ENode} \\
null & forKey(an, k) \in \texttt{FVNode}
\end{cases}
\end{equation}

\end{definition}

\begin{definition}[Cache-Trie]
A \emph{cache-trie} is a pointer \verb$root$ to a wide array node.
A \emph{cache-trie} state $\mathbb{S}$ is the configuration of nodes
reachable from the root by following the pointers of the nodes.
A key $k$ is contained in the state $\mathbb{S}$
if and only if a single node $sn$ with the key $k$
is reachable in the corresponding configuration.
We define the relation $hasKey(an, k)$
for a node $n$ and the key $k$ as follows:

\begin{equation}
hasKey(an, k) \Leftrightarrow
\begin{cases}
sn = child(an, k) \in \texttt{SNode} \wedge key(sn) = k \\
cn = child(an, k) \in \texttt{ANode} \wedge hasKey(cn, k)
\end{cases}
\end{equation}

\end{definition}

\begin{definition}[Validity]
Let $\epsilon$ denote an empty sequence of bits,
and $an_{x,y}$ denote an \verb$ANode$.
A cache-trie that respects the following invariants is called \emph{valid}:
\\*
\textbf{INV1}
$\texttt{root} = an_{0,\epsilon} \in \texttt{ANode} \wedge
length(\texttt{root}) = 16$
\\*
\textbf{INV2}
$child(an_{\ell,p}, i) \in
\{ an_{\ell+4,p \cdot i}, \texttt{null} \} \cup \texttt{SNode}$
\\*
\textbf{INV3}
$child(an_{\ell,p}, i) = sn \in \texttt{SNode}
\Leftrightarrow
hash(key(sn)) = p \cdot i \cdot s$
\end{definition}

\begin{definition}[Abstract Set]
An \emph{abstract set} $\mathbb{A}$ is a mapping
$\mathbb{A} : K \rightarrow \{ \top, \bot \}$,
where $K$ is the set of all keys,
and which is true for the keys that are present in the abstract set.
\emph{Abstract set operations} are:

\begin{itemize}
\item
$lookup(\mathbb{A}, k) = \top \Leftrightarrow k \in \mathbb{A}$
\item
$insert(\mathbb{A}, k) =
\mathbb{A}' : k \in \mathbb{A}' \wedge
\forall k', k' \in \mathbb{A} \Rightarrow k' \in \mathbb{A}'$
\item
$remove(\mathbb{A}, k) =
\mathbb{A}' : k' \not\in \mathbb{A}' \wedge
\forall k' \neq k, k' \in \mathbb{A} \Rightarrow k' \in \mathbb{A}'$
\end{itemize}
\end{definition}

\begin{definition}[Consistency]
A cache-trie state $\mathbb{S}$
is \emph{consistent} with an abstract set $\mathbb{A}$
if and only if $\forall k, k \in \mathbb{A} \Leftrightarrow hasKey(\texttt{root}, k)$.
The cache-trie lookup on the state $\mathbb{S}$
is consistent with an abstract set lookup
if and only if for all keys $k$
it returns the value $lookup(\mathbb{A}, k)$,
where $\mathbb{A}$ is consistent with $\mathbb{S}$.
The cache-trie insert and remove on the state $\mathbb{S}$
are consistent with an abstract set insert and remove, respectively,
if an only if for all keys $k$
they change the cache-trie to a new state $\mathbb{S}'$,
such that $\mathbb{S}$ is consistent with an abstract set $\mathbb{A}$,
and $\mathbb{S}'$ is consistent with an abstract set $\mathbb{A}'$,
and $insert(\mathbb{A}, k) = \mathbb{A}'$, and $remove(\mathbb{A}, k) = \mathbb{A}'$,
respectively.
\end{definition}

Now that we have the basic definitions in place, we can state and prove
the safety property.

\begin{theorem}[Safety]\label{thm:safety}
At all times $t$,
a cache-trie is valid and consistent with some abstract set $\mathbb{A}$.
Cache-trie operations are always consistent with abstract set operations.
\end{theorem}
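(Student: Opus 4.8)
The plan is to prove Theorem \ref{thm:safety} by induction over the sequence of successful shared-memory modifications --- the successful \texttt{CAS} instructions and the uncontended \texttt{WRITE}s --- performed during an arbitrary execution, strengthening the statement so that the induction closes. The strengthened invariant $\Phi$ that I would carry is the conjunction of: (i) validity, i.e.\ INV1--INV3; (ii) existence of an abstract set $\mathbb{A}$ consistent with the current state $\mathbb{S}$; (iii) \emph{frozen-subtree immutability}: once an \texttt{ANode} has been wrapped by an \texttt{FNode}, or once every one of its slots has been \emph{disabled} (holds \texttt{FVNode}, or an \texttt{SNode} whose \texttt{txn} is \texttt{FSNode}, or an \texttt{FNode}/live \texttt{ANode} that is itself already frozen), none of the cells reachable from it ever change again; (iv) \emph{presence}: if \texttt{sn} is a single node and \texttt{sn.txn} is \texttt{NoTxn}, then $hasKey(\texttt{root}, \texttt{sn.key})$ holds along the $hash(\texttt{sn.key})$-prefix; and (v) \emph{expansion transparency}: whenever a slot holds an \texttt{ENode} \texttt{en}, then $child$ through that slot equals $unwrap(\texttt{en}) = \texttt{en.narrow}$, and \texttt{en.wide}, if non-\texttt{null}, is a faithful wide re-indexing of the already-frozen narrow node. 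Invariants (iv) and its contrapositive are precisely the presence/absence facts used informally in the overview; since they can no longer be cited as separate lemmas in this section, they ride along inside $\Phi$.

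I would begin with the base case: the initial \texttt{root} $=$ \texttt{new ANode(16)} with all-\texttt{null} entries satisfies INV1 by construction, INV2 because every child is \texttt{null}, INV3 vacuously; it is consistent with $\emptyset$; and (iii)--(v) hold vacuously. For the inductive step, reads change nothing and allocations produce fresh, unreachable objects, so it suffices to check each modification site: in Figure \ref{fig:insert}, the \texttt{CAS}es on lines \ref{code:insert-cas-null-sn}, \ref{code:insert-cas-notxn-sn}, \ref{code:insert-cas-old-sn}, \ref{code:insert-cas-expand}, \ref{code:insert-cas-notxn-an}, \ref{code:insert-cas-old-an} and \ref{code:insert-cas-finish-txn}; in Figure \ref{fig:complete-expand}, those on lines \ref{code:expand-cas-announce-wide}, \ref{code:expand-cas-parent-wide}, \ref{code:freeze-cas-fvnode}, \ref{code:freeze-cas-fsnode}, \ref{code:freeze-cas-finish-txn} and \ref{code:freeze-cas-fnode}; plus the cache \texttt{WRITE}s, which I defer to the cache-safety argument. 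For each I show $\Phi$ is preserved. Validity is the routine part: \texttt{pos} is always taken from the hash-code bits \emph{at the node's own level}, so installing a fresh \texttt{SNode} keeps INV3, and an expansion (lines \ref{code:insert-cas-expand} and \ref{code:expand-cas-parent-wide}) replaces a node only by a node at the \emph{same} level whose $4$-bit indices extend the narrow node's $2$-bit indices over the same level-$\ell$ prefix, so INV2/INV3 are inherited. For consistency I show that each \texttt{CAS} either leaves $hasKey(\texttt{root}, k)$ unchanged for every key $k$ --- the \texttt{txn}-field announcements, the \texttt{FNode}/\texttt{FVNode}/\texttt{FSNode} freezes, the insertion of an \texttt{ENode} and its later replacement by a faithful wide copy are all of this kind, by (iii) and (v) --- or flips exactly one membership bit: this happens at line \ref{code:insert-cas-null-sn} (adding $k$), at line \ref{code:insert-cas-old-an} and at line \ref{code:freeze-cas-finish-txn} when it commits an \texttt{ANode} that absorbs a colliding pair (also adding $k$), and at \texttt{remove}'s counterparts (removing $k$); in each case INV2/INV3 force $k$ to have been absent (resp.\ present) beforehand, so the new abstract set is $insert(\mathbb{A}, k)$ (resp.\ $remove(\mathbb{A}, k)$).

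With $\Phi$ in hand, the two assertions of the theorem follow. The first --- that at all times $t$ the cache-trie is valid and consistent with some $\mathbb{A}$ --- is the induction conclusion. The second follows by fixing, for each completed \texttt{lookup}, \texttt{insert} or \texttt{remove}, one step as its linearization point: the membership-changing \texttt{CAS} for a successful \texttt{insert}/\texttt{remove}, and the \texttt{READ} of \texttt{old} together with the \texttt{txn} check for a \texttt{lookup} or for a no-op \texttt{insert}/\texttt{remove}; invoking (iv) and its contrapositive shows that the value returned, or the membership bit changed, matches the corresponding abstract-set operation on the $\mathbb{A}$ holding at that step. The retry paths (line \ref{code:insert-cas-finish-txn} and the \texttt{i -= 1} back-offs in \texttt{freeze}) only re-traverse, so they do not affect this argument.

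The main obstacle is the mutual dependence between the consistency part of $\Phi$ and the freeze/expand machinery. To close (iii) I must argue that a cell, once disabled, is never re-enabled, even under concurrent \texttt{insert}s racing on it; this is true because of the \texttt{i -= 1} back-off in \texttt{freeze} and the rule that a live \texttt{SNode} with a pending \texttt{txn} is committed first (line \ref{code:freeze-cas-finish-txn}), but that commit is itself one of the modification steps whose effect on $\mathbb{A}$ I am classifying --- so (iii) cannot be proven before consistency, and the two parts of $\Phi$ must be proven together in one induction. Likewise, (v) needs that \texttt{copy} reads an immutable snapshot, which needs (iii), and that the wide node it produces satisfies INV2/INV3, which needs the level/prefix bookkeeping above; and I must separately verify that while an \texttt{ENode} sits in a parent slot, $child$ --- hence $hasKey$ --- sees exactly the narrow node, so that neither line \ref{code:insert-cas-expand} nor line \ref{code:expand-cas-parent-wide} is ever incorrectly charged with an abstract-set change. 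I expect the bulk of the real work to be a single careful case analysis over the states a given slot can pass through (\texttt{null}, then \texttt{SNode}, then \texttt{ANode} or \texttt{ENode}, and so on, with \texttt{txn} side-transitions) showing that this per-slot transition system only ever progresses and never regresses; that monotonicity is what powers both the immutability claims and, later, the lock-freedom theorem.
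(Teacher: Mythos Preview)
Your proposal is correct and rests on the same ingredients as the paper's proof: a classification of every successful \texttt{CAS} into ``housekeeping'' (freezes, \texttt{txn} announcements, \texttt{ENode} installation, wide-for-narrow swap) versus ``membership-changing'' steps, together with the facts that frozen subtrees are immutable, that a reachable \texttt{SNode} with \texttt{txn}=\texttt{NoTxn} witnesses $hasKey$, and that an \texttt{ENode} is transparent to $child$/$hasKey$. The paper organises these as a chain of separate lemmas (End-of-Life, Unreachable-Frozen, Reachable-Nodes, Presence, Absence, Fastening, Consistency-Changes, Housekeeping) and then cites them to conclude; you instead fold all of them into one compound invariant $\Phi$ and run a single explicit induction over shared-memory modifications. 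The substantive difference is that you make the circularity between ``frozen immutability'' and ``consistency'' explicit and resolve it by carrying both through the same induction, whereas the paper's lemmas quietly invoke ``by the inductive hypothesis the cache-trie was valid and consistent'' inside individual lemma proofs. Your packaging is tighter and avoids that informality; the paper's modular decomposition is easier to read and lets later theorems (linearizability, lock-freedom, cache safety) cite individual lemmas by name rather than unpacking $\Phi$.
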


Before proving this theorem, we state and prove several lemmas.

\begin{definition}[Frozen Nodes]
A single node $sn$ is \emph{frozen} if its \verb$txn$ field is set to \verb$FSNode$.
The \verb$FVNode$ is always frozen.
An array node $an$ is \emph{frozen} if all entries point to frozen nodes. 
\end{definition}

\begin{lemma}[Single Transaction Change]\label{lem:single-txn-change}
A single node's \verb$txn$ field is initially set to \verb$NoTxn$,
and changes its value at most once.
\end{lemma}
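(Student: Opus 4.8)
The plan is to prove Lemma~\ref{lem:single-txn-change} by a case analysis over all code locations that ever write to an \verb$SNode$'s \verb$txn$ field, using the \verb$CAS$ semantics to show that each such write either leaves the field unchanged or moves it from \verb$NoTxn$ to a non-\verb$NoTxn$ value and that no write ever restores \verb$NoTxn$. First I would observe that an \verb$SNode$ is created only via the constructor \verb$new SNode(h, k, v, NoTxn)$ (lines \ref{code:insert-cas-null-sn}, \ref{code:insert-cas-notxn-sn}, \ref{code:insert-wide}), so the initial value of \verb$txn$ is always \verb$NoTxn$, establishing the base case.

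Next I would enumerate every instruction in Figures~\ref{fig:insert} and~\ref{fig:complete-expand} whose target is an \verb$SNode.txn$ field. These are exactly: the two \verb$CAS(old.txn, NoTxn, ...)$ calls in \verb$insert$ (lines \ref{code:insert-cas-notxn-sn} and \ref{code:insert-cas-notxn-an}), and the \verb$CAS(node.txn, NoTxn, FSNode)$ call in \verb$freeze$ (line \ref{code:freeze-cas-fsnode}). (The \verb$CAS$es in lines \ref{code:insert-cas-old-sn}, \ref{code:insert-cas-old-an}, \ref{code:insert-cas-finish-txn}, \ref{code:freeze-cas-finish-txn}, and \ref{code:freeze-cas-fvnode}, \ref{code:freeze-cas-fnode} target \verb$ANode$ entries, not \verb$txn$ fields, so they are irrelevant; the \verb$remove$ operation, mentioned only informally, performs an analogous \verb$CAS(old.txn, NoTxn, ...)$ and would be handled the same way.) For each of these three \verb$CAS$es, the expected (old) value is \verb$NoTxn$, so the \verb$CAS$ succeeds and actually changes the field only if \verb$txn$ currently equals \verb$NoTxn$; in that case the field becomes a new \verb$SNode$, a new \verb$ANode$, or \verb$FSNode$ — none of which is \verb$NoTxn$. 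If \verb$txn$ is already different from \verb$NoTxn$, the \verb$CAS$ fails and the field is untouched. Hence once the field leaves \verb$NoTxn$ it is never written again (no instruction has a non-\verb$NoTxn$ value as its \verb$CAS$ expected argument for a \verb$txn$ field), and in particular it is never reset to \verb$NoTxn$. Since every successful change is from \verb$NoTxn$ to something else, at most one change occurs over the lifetime of the node.

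I would present this as an induction on the sequence of successful writes to a fixed \verb$SNode$'s \verb$txn$ field, with the invariant ``either no write has occurred and the value is \verb$NoTxn$, or exactly one write has occurred and the value is one of \verb$SNode$, \verb$ANode$, \verb$FSNode$, which are all $\neq$ \verb$NoTxn$.'' The main obstacle — really the only subtlety — is making the enumeration of writing sites airtight: I must argue that no code path other than the ones listed above ever names an \verb$SNode.txn$ field as the target of a write or \verb$CAS$, which requires appealing to the full pseudocode (including the \verb$remove$ and helper routines not shown in detail) rather than just the excerpted figures. Once that enumeration is complete, the rest is immediate from single-word \verb$CAS$ atomicity, and no concurrency reasoning beyond "a \verb$CAS$ with expected value \verb$NoTxn$ either succeeds from \verb$NoTxn$ or fails" is needed.
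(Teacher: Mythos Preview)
Your proposal is correct and follows essentially the same approach as the paper: observe that every \verb$SNode$ is constructed with \verb$txn = NoTxn$, enumerate all writes to \verb$txn$ and note that each is a \verb$CAS$ with expected value \verb$NoTxn$, and conclude that at most one such \verb$CAS$ can succeed. The paper's proof is just a two-sentence version of exactly this argument, without the explicit line-by-line enumeration you provide.
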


\begin{proof}
By inspecting the source code, we see that every \verb$SNode$
is created with \verb$txn$ set to \verb$NoTxn$,
and every \verb$CAS$ on \verb$txn$ has \verb$NoTxn$ as the expected value.
The claim follows.
\end{proof}

\begin{lemma}[End of Life]\label{lem:end-of-life}
If an array node $an$ is frozen at some time $t_0$,
then $\forall t > t_0$ none of the entries of $an$ change their value.
If a single node $sn$ gets removed from its parent at some time $t_0$,
then its \verb$txn$ field was not set to \verb$NoTxn$ at time $t_0$.
\end{lemma}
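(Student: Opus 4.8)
The plan is to prove both halves of the lemma by the technique used throughout this section: a complete enumeration of the \texttt{CAS} instructions that can effect the relevant change, combined with an appeal to Lemma~\ref{lem:single-txn-change} (single transaction change) to pin down the \texttt{txn} fields.

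For the first statement I would first isolate every \texttt{CAS} in the pseudocode (Figures~\ref{fig:insert} and~\ref{fig:complete-expand}, plus the \texttt{remove} and compression code described in the main text) whose target is an entry of an \texttt{ANode}. Going through them, the expected (old) value of such a \texttt{CAS} is always one of: plain \texttt{null}; a bare \texttt{ANode} (as in the expansion-announcing \texttt{CAS} on \texttt{prev[ppos]}); an \texttt{ENode} (as in the final \texttt{CAS} of \texttt{completeExpansion}); or a single node \texttt{sn}. I would then observe that in the \texttt{sn} case the thread reaches the \texttt{CAS} only after it has established that \texttt{sn.txn}$\,\neq\,$\texttt{FSNode} --- either it has just read a committed, non-\texttt{NoTxn} value there (the help-complete and commit branches), or it has itself succeeded with \texttt{CAS(sn.txn, NoTxn, v)} for some $v$ different from \texttt{FSNode}; by Lemma~\ref{lem:single-txn-change} this persists, so \texttt{sn} is never a frozen node. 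Hence no \texttt{CAS} on an \texttt{ANode} entry ever has a frozen value (an \texttt{FVNode}, an \texttt{FNode}, or a single node with \texttt{txn}$\,=\,$\texttt{FSNode}) as its expected value. The lemma then follows by a minimal-counterexample argument: if the entries of a frozen $an$ did change after $t_0$, let $t_1 > t_0$ be the first such change; the \texttt{CAS} succeeding at $t_1$ has as its expected value the entry's value at $t_1^-$, which equals its value at $t_0$ (no earlier change), which is a frozen value --- contradicting the claim just established.

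For the second statement I would again enumerate the \texttt{CAS}es that can replace a pointer to a single node \texttt{sn} inside \texttt{sn}'s parent \texttt{ANode}; these are exactly the ones whose expected value is a single node: the two commit \texttt{CAS}es of \texttt{insert} (installing a replacement single node, resp.\ an \texttt{ANode}), the help-complete \texttt{CAS}es of \texttt{insert} and of \texttt{freeze} (installing the committed \texttt{txn} value), and the final \texttt{CAS} of \texttt{remove} (installing \texttt{null}). In each case the code path guarantees that at the time of, and --- by Lemma~\ref{lem:single-txn-change} --- forever after, that \texttt{CAS}, \texttt{sn.txn} holds a value other than \texttt{NoTxn}: either the thread has just succeeded with \texttt{CAS(sn.txn, NoTxn, v)} for $v \neq \texttt{NoTxn}$, or it has read a committed $v \neq \texttt{NoTxn}$ from \texttt{sn.txn}. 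Since $t_0$ is not earlier than that \texttt{CAS}, \texttt{sn.txn}$\,\neq\,$\texttt{NoTxn} at $t_0$. The one remaining way a single node can leave a reachable configuration is when its parent narrow \texttt{ANode} is replaced by a wide one during expansion; but \texttt{completeExpansion} calls \texttt{freeze} on the narrow node first, which sets \texttt{sn.txn} to \texttt{FSNode}, so the conclusion holds there too (and, strictly, \texttt{sn} is not removed from its parent at all, since the narrow node's entry still points to it).

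The main obstacle is the concurrency bookkeeping in the single-node cases: a \texttt{CAS} on a parent entry never re-reads \texttt{txn}, so I must argue that whenever such a \texttt{CAS} with expected value \texttt{sn} \emph{succeeds}, \texttt{sn.txn} is already at its final committed value. This is where Lemma~\ref{lem:single-txn-change} does the real work --- it lifts a local, past observation of \texttt{txn} to a global, permanent fact, and it rules out the apparent race in which \texttt{freeze} flips \texttt{txn} to \texttt{FSNode} between a thread's first-phase and second-phase \texttt{CAS} (the first-phase \texttt{CAS} already moved \texttt{txn} off \texttt{NoTxn}, so \texttt{freeze}'s \texttt{CAS(node.txn, NoTxn, FSNode)} must fail). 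A secondary, mostly clerical obstacle is making the enumeration genuinely exhaustive, including the \texttt{remove}/compression code that the main text only describes informally; I would state explicitly the two-\texttt{CAS} discipline those routines follow and derive the needed invariants from it.
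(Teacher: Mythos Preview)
Your proposal is correct and follows essentially the same strategy as the paper: exhaustive enumeration of the \texttt{CAS} instructions that can touch an \texttt{ANode} entry (respectively, remove an \texttt{SNode} from its parent), together with Lemma~\ref{lem:single-txn-change} to pin down the \texttt{txn} field. The paper's version is terser and argues the \texttt{SNode}-expected-value case by observing that the new value written always equals the old node's \texttt{txn}, whereas you argue directly that the code path guarantees \texttt{txn}$\,\neq\,$\texttt{FSNode}; your minimal-counterexample framing and explicit handling of \texttt{remove}/compression are welcome additions but not substantive departures.
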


\begin{proof}
From the definition of a frozen array node,
at $t_0$ all of its entries must be either a \verb$FVNode$, \verb$FNode$
or an \verb$SNode$ with the \verb$txn$ field set to \verb$FSNode$.
Next, note that assignments to array node entries occur in \verb$CAS$ instructions.
By inspecting these \verb$CAS$ instructions in the pseudocode,
we can see that the expected value is never \verb$FVNode$ or \verb$FNode$.
Finally, note that when the expected value is an \verb$SNode$ $sn$
(lines \ref{code:insert-cas-old-sn},
\ref{code:insert-cas-old-an}
and \ref{code:insert-cas-finish-txn} in Figure \ref{fig:insert}),
the new value for the respective \verb$CAS$ is equal to $sn$'s \verb$txn$ field.
Assume that such a \verb$CAS$ succeeds.
By Lemma \ref{lem:single-txn-change},
\verb$txn$ must be \verb$FSNode$, since it can change at most once.
That is a contradiction, since none of those \verb$CAS$ instructions
has \verb$FSNode$ as the expected value.
The argument applies inductively
if $an$ points to nested \verb$ANode$s.

For the claim about the single node $sn$,
note that only the \verb$CAS$ instructions
in lines
\ref{code:insert-cas-old-sn},
\ref{code:insert-cas-old-an},
and \ref{code:insert-cas-finish-txn} of Figure \ref{fig:insert},
and the \verb$CAS$ in line \ref{code:freeze-cas-finish-txn}
of Figure \ref{fig:complete-expand},
remove a single node from its parent.
All of those instructions first check, that \verb$txn$ is not set to \verb$NoTxn$.
From Lemma \ref{lem:single-txn-change},
\verb$txn$ could not have changed back to \verb$NoTxn$ after the check,
so the claim follows.
\end{proof}

\begin{lemma}[Freezing]\label{lem:freezing}
Let a call to the \verb$freeze$ subroutine return at some time $t_0$.
Then, the array node passed to \verb$freeze$ is frozen at time $t_0$.
\end{lemma}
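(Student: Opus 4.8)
The plan is to prove this by a loop-invariant argument, carried out by induction on the tree of recursive \texttt{freeze} and \texttt{completeExpansion} invocations — a finite tree, since the outer \texttt{freeze} call is assumed to return. Write $cur$ for the array node passed to \texttt{freeze} and $t_0$ for the return time. The invariant I would maintain on the \texttt{while}-loop is: whenever control reaches the loop head with the counter equal to $v$, every entry among $cur[0], \dots, cur[v-1]$ points to a frozen node at that moment, and therefore — by the per-entry form of the argument in Lemma~\ref{lem:end-of-life} — at every later time as well, since a \texttt{CAS} can never replace an \texttt{FVNode}, an \texttt{FNode}, or a frozen \texttt{SNode} sitting in an array slot. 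The loop exits (and the call returns) only with the counter equal to $length(cur)$, so at $t_0$ the invariant states that every entry of $cur$ points to a frozen node, which is precisely the definition of $cur$ being frozen.

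For the preservation step I would split on $node = \texttt{READ}(cur[i])$, where $i$ is the current counter. The branches that end in \texttt{i -= 1} (a \texttt{null} whose \texttt{CAS} to \texttt{FVNode} fails, an \texttt{SNode} with \texttt{txn} equal to \texttt{NoTxn} whose \texttt{CAS} fails, an \texttt{SNode} with an uncommitted pending \texttt{txn}, an \texttt{ANode}, and an \texttt{ENode}) leave the counter at $i$, so the invariant for entries below $i$ is untouched and there is nothing to show. The branches that advance the counter are three. (i) A \texttt{null} whose \texttt{CAS} at line~\ref{code:freeze-cas-fvnode} succeeds: afterwards $cur[i]$ holds the frozen \texttt{FVNode}. (ii) An \texttt{SNode} $node$ that winds up frozen — either \texttt{txn} already equals \texttt{FSNode}, or it equals \texttt{NoTxn} and the \texttt{CAS} at line~\ref{code:freeze-cas-fsnode} sets it to \texttt{FSNode}: in both sub-cases I would invoke Lemma~\ref{lem:single-txn-change} to argue that $cur[i]$ still holds $node$ and keeps doing so, because the only \texttt{CAS}es that can swap an \texttt{SNode} out of an array slot (lines~\ref{code:insert-cas-old-sn}, \ref{code:insert-cas-old-an}, \ref{code:insert-cas-finish-txn} of Figure~\ref{fig:insert}, and line~\ref{code:freeze-cas-finish-txn} of Figure~\ref{fig:complete-expand}) require the executing thread to have read a committed value from $node.\texttt{txn}$, which is impossible once that field holds \texttt{FSNode}. (iii) An \texttt{FNode}: the body freezes the wrapped array node by a recursive \texttt{freeze} call — which by the induction hypothesis does freeze it — while the \texttt{FNode} entry at $cur[i]$ is itself immutable by Lemma~\ref{lem:end-of-life}, so the entry points to a frozen node.

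The \texttt{ENode} branch is the step I expect to be the main obstacle, and it never advances the counter directly — it calls \texttt{completeExpansion} and retries position $i$. By the way insert builds an \texttt{ENode} $en$, such an $en$ encountered at $cur[i]$ has $en.parent = cur$, $en.parentpos = i$, and an $en.narrow$ that sits at a strictly deeper level than $cur$; \texttt{completeExpansion} therefore first freezes $en.narrow$ (legitimate by the induction hypothesis) and then installs into $cur[i]$ a fresh wide array node copied from the now-frozen $en.narrow$. That copy is not itself frozen, so on the retry \texttt{freeze} sees an \texttt{ANode}, wraps it in an \texttt{FNode}, retries once more, and finally freezes it through the \texttt{FNode} branch. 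The two delicate points are: (a) this multi-round treatment of a single slot must terminate, which it does by the standing hypothesis that the whole call returns; and (b) the outer induction on the call tree must stay well-founded across these recursive calls, which requires checking that \texttt{completeExpansion}'s \texttt{copy} step does not deepen the structure below $cur[i]$ — the wide copy's children are exactly the children of $en.narrow$ — so that every recursive \texttt{freeze} is still invoked strictly deeper than $cur$. Once (a) and (b) are pinned down, the case analysis closes, the loop invariant is established, and the lemma follows.
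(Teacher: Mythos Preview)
Your proposal is correct and follows essentially the same approach as the paper: a loop-invariant argument (the counter advances past $i$ only once $cur[i]$ is frozen) combined with induction over the nested structure. The paper's own proof is a three-sentence sketch of exactly this idea, whereas you carry out the full case analysis; your extra care around the \texttt{ENode} branch and the persistence of frozen entries via Lemmas~\ref{lem:single-txn-change} and~\ref{lem:end-of-life} is sound and simply makes explicit what the paper leaves implicit. One minor remark: your concern (b) about well-foundedness is unnecessary here, since the lemma hypothesises that the call returns, so the call tree is finite and induction on it is automatically well-founded --- the depth-preservation of \texttt{copy} matters for the separate lock-freedom argument, not for this lemma.
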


\begin{proof}
Assume that the claim holds inductively for every nested \verb$ANode$,
and observe an entry at a particular index $i$.
By analyzing the different cases, we see that the index $i$ gets changed
at the end of the loop iteration in \verb$freeze$
only if the respective entry is frozen.
Hence, by the time that the loop in \verb$freeze$ completes,
the respective array node is frozen.
\end{proof}

\begin{lemma}[Unreachable Frozen Nodes]\label{lem:unreachable-frozen-nodes}
Let some \verb$CAS$ instruction make an array node $an$ unreachable at time $t_0$.
Then, that $an$ was frozen at time $t_0$.
\end{lemma}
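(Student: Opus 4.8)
The plan is to identify the precise set of \verb$CAS$ instructions in the pseudocode that can make an \verb$ANode$ unreachable, and then argue that each one is guarded by a prior \verb$freeze$ call on that node. First I would scan all \verb$CAS$ instructions that write to a slot which currently points to an \verb$ANode$, i.e. the \verb$CAS$ instructions whose \emph{expected} value can be an \verb$ANode$. Examining Figures \ref{fig:insert} and \ref{fig:complete-expand}, the only places where a pointer to an \verb$ANode$ is overwritten are: line \ref{code:insert-cas-expand} of Figure \ref{fig:insert}, which replaces \verb$cur$ (an \verb$ANode$) in \verb$prev[ppos]$ with an \verb$ENode$; line \ref{code:expand-cas-parent-wide} of Figure \ref{fig:complete-expand}, which replaces the \verb$ENode$ in the parent with the wide \verb$ANode$; and the freezing \verb$CAS$ in line \ref{code:freeze-cas-fnode} of Figure \ref{fig:complete-expand}, which replaces a child \verb$ANode$ with an \verb$FNode$ wrapper. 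The \verb$CAS$ instructions in lines \ref{code:insert-cas-null-sn}, \ref{code:insert-cas-old-sn}, \ref{code:insert-cas-old-an}, \ref{code:insert-cas-finish-txn} only have \verb$null$ or \verb$SNode$ as their expected value, so they cannot make an \verb$ANode$ unreachable.

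Next I would handle the three relevant cases. The \verb$CAS$ in line \ref{code:freeze-cas-fnode} replaces an \verb$ANode$ with an \verb$FNode$ \emph{wrapping the same \verb$ANode$}; by the definition of \verb$child$ and \verb$unwrap$, the \verb$ANode$ remains reachable (it is exactly $unwrap$ of the new \verb$FNode$), so this case does not make it unreachable at all. The \verb$CAS$ in line \ref{code:insert-cas-expand} replaces \verb$cur$ with an \verb$ENode$ whose \verb$narrow$ field points to \verb$cur$; again $unwrap$ of the new \verb$ENode$ is \verb$cur$, so \verb$cur$ stays reachable. The only genuinely-removing step is line \ref{code:expand-cas-parent-wide}: it replaces the \verb$ENode$ in \verb$en.parent[en.parentpos]$ with the wide \verb$ANode$, after which the narrow \verb$ANode$ \verb$en.narrow$ is no longer reachable through that \verb$ENode$. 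But this \verb$CAS$ is executed in \verb$completeExpansion$ only after \verb$freeze(en.narrow)$ has returned, and by Lemma \ref{lem:freezing} the node \verb$en.narrow$ is frozen at the moment \verb$freeze$ returns; by Lemma \ref{lem:end-of-life} it stays frozen forever afterwards, in particular at the time $t_0$ of the successful \verb$CAS$. I also need the invariant (INV2, Definition of Validity) that an \verb$ANode$ has at most one parent slot pointing to it, so that making that one slot point elsewhere is the only way to make it unreachable — no other path can exist.

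The main obstacle I anticipate is the careful bookkeeping of \emph{which} \verb$CAS$ can sever the \emph{last} path to an \verb$ANode$, and proving that $unwrap$-style wrappings (\verb$FNode$, \verb$ENode$) do not count as making the wrapped node unreachable. This requires leaning on the $child$/$unwrap$ definitions and on INV2 to rule out alternative paths. A secondary subtlety is the recursive structure: \verb$freeze$ may encounter nested \verb$ENode$s and call \verb$completeExpansion$ on them (line after \ref{code:freeze-cas-fnode} in Figure \ref{fig:complete-expand}), so the argument that "\verb$freeze$ returning implies frozen" is inductive on the nesting depth — but this is exactly what Lemma \ref{lem:freezing} already packages, so I can invoke it directly. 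Once those pieces are assembled, the conclusion is immediate: the only \verb$CAS$ that makes an \verb$ANode$ unreachable is preceded by a completed \verb$freeze$ of that node, hence the node is frozen at $t_0$.
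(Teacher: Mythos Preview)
Your proposal is correct and follows essentially the same approach as the paper: identify that the only \verb$CAS$ which genuinely severs an \verb$ANode$ from the trie is the one in line~\ref{code:expand-cas-parent-wide} of \verb$completeExpansion$, and observe that it is preceded by \verb$freeze(en.narrow)$, so Lemma~\ref{lem:freezing} applies. Your version is more careful than the paper's two-sentence proof in that you explicitly enumerate and discharge the other candidate \verb$CAS$es (showing that wrapping in an \verb$FNode$ or \verb$ENode$ preserves reachability via $unwrap$), and you correctly note the need for a unique-parent invariant; the paper simply asserts which \verb$CAS$ is the relevant one without this case analysis.
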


\begin{proof}
The \verb$CAS$ instruction that makes an array node unreachable
is in line \ref{code:expand-cas-parent-wide}
of the \verb$completeExpansion$ subroutine in Figure \ref{fig:complete-expand}.
From Lemma \ref{lem:freezing}, we know that the expected value
of the \verb$CAS$ instruction in line \ref{code:expand-cas-parent-wide}
is a frozen node.
\end{proof}

\begin{lemma}[Reachable Nodes]\label{lem:reachable-nodes}
If at some time $t_0$
a thread reads a node $child$ from an array node $an$
in line \ref{code:lookup-read-old} of Figure \ref{fig:lookup}
or in line \ref{code:insert-read-old} of Figure \ref{fig:insert},
then $child$ is reachable at $t_0$.
\end{lemma}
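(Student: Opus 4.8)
The plan is to prove the statement by induction on the depth of $an$, i.e.\ on the number of pointer hops the thread has made while descending from \texttt{root} to $an$, splitting each inductive step according to whether the entry $an[pos]$ read at $t_0$ is \emph{frozen} (an \texttt{FVNode}, an \texttt{FNode}, or an \texttt{SNode} whose \texttt{txn} is \texttt{FSNode}) or not. The observation that carries most of the argument is the following converse of Lemma~\ref{lem:unreachable-frozen-nodes}, which I would establish first as an auxiliary claim: if at time $t_0$ the array node $an$ has at least one non-frozen entry, then $an$ is reachable at $t_0$. Indeed, $an$ was reachable at the earlier moment when the thread obtained a pointer to it from its (reachable, by the inductive hypothesis) parent; were $an$ unreachable at $t_0$, it would have been detached by some \texttt{CAS} at a time $t_d \le t_0$, and by Lemma~\ref{lem:unreachable-frozen-nodes} $an$ was frozen at $t_d$. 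By Lemma~\ref{lem:end-of-life} every entry of $an$ then stays frozen for all $t \ge t_d$, contradicting a non-frozen entry at $t_0$. I would also record the auxiliary fact that detachment is permanent: inspecting the only \texttt{CAS}es that install a pointer to an array node into a parent entry---line~\ref{code:insert-cas-old-an} of Figure~\ref{fig:insert} and line~\ref{code:expand-cas-parent-wide} of Figure~\ref{fig:complete-expand}---each installs a freshly allocated node, so a once-detached node is never re-attached.

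With the auxiliary claim in hand, the non-frozen case is immediate. If the value $child$ read at $an[pos]$ is \texttt{null}, a plain \texttt{ANode}, an \texttt{ENode}, or an \texttt{SNode} with \texttt{txn} $\neq$ \texttt{FSNode}, then $an[pos]$ is a non-frozen entry, so by the claim $an$ is reachable at $t_0$. Since $child = child(an, pos)$, where the $child$ relation unwraps an \texttt{ENode} exactly as prescribed, $child$ is reachable at $t_0$ as well, and the inductive hypothesis is re-established for the next hop whenever the thread recurses into $child$.

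The remaining, delicate case is when $an[pos]$ is frozen, which by the claim can occur only inside a subtree that is being, or has been, detached during an expansion: the thread either descended through an \texttt{FNode} in \texttt{lookup} (line~\ref{code:lookup-check-fnode} of Figure~\ref{fig:lookup}) or holds a stale pointer into a narrow node that is concurrently expanded (an \texttt{insert} then bails out to the root, e.g.\ line~\ref{code:insert-check-fsnode}, but the read itself still falls under the claim). Here I would track reachability through the three phases of \texttt{completeExpansion} (Figure~\ref{fig:complete-expand}): while the parent entry still holds the narrow node or the \texttt{ENode} $en$ with $en.\mathit{narrow} = an$, the $child$ relation unwraps $en$, so $an$ remains reachable and $child$ is reachable by the same one-hop argument. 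After the final \texttt{CAS} in line~\ref{code:expand-cas-parent-wide} replaces $en$ by the wide node $wide$, I would argue that $child$ survives because \texttt{copy} installs into $wide$ precisely the child objects of $an$: an \texttt{FNode}-wrapped \texttt{ANode} is unwrapped and the same \texttt{ANode} object is placed into $wide$, and each \texttt{SNode} is carried over to $wide$, so $child = child(an, pos)$ coincides with some $child(wide, \cdot)$ and is reachable via the reachable wide node.

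The main obstacle is exactly this last step: making rigorous that the frozen, detached node only ever points---through the $child$ relation, that is, after unwrapping \texttt{FNode}s---to objects that persist as the \emph{same} objects in the replacement wide node. This requires a careful reading of the \texttt{copy} subroutine to confirm that it preserves child-node identity (trivial for \texttt{ANode} children, but needing attention for \texttt{SNode} children, whose \texttt{txn} is \texttt{FSNode} at the moment of freezing), together with the subsidiary fact that $wide$ is itself reachable---which holds by the non-frozen case applied to the parent, since immediately after line~\ref{code:expand-cas-parent-wide} the parent entry is a non-frozen pointer to $wide$. Once identity-preservation is pinned down, every read at line~\ref{code:lookup-read-old} or line~\ref{code:insert-read-old} returns a node reachable at $t_0$, and the induction closes.
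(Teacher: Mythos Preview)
Your auxiliary claim and the non-frozen case are precisely the paper's argument, stated with more care than the paper itself manages. The paper's three sentences amount to: if $child$ were unreachable at $t_0$ then $an$ would be unreachable (since $an[\mathit{pos}]=child$ at $t_0$), hence frozen by Lemmas~\ref{lem:unreachable-frozen-nodes} and~\ref{lem:end-of-life}; but then the entry read would have to be a frozen value, contradicting what was actually read. The cryptic clause ``the thread read an array node, not an \texttt{FNode}'' is just one instantiation of this contradiction, and the paper is tacitly restricting the lemma to non-frozen reads.

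Your frozen-case extension, however, pursues a statement that is almost certainly false as written. For the algorithm to remain live, \texttt{copy} must allocate \emph{fresh} \texttt{SNode}s with \texttt{txn}=\texttt{NoTxn} rather than carry over the frozen ones---otherwise a subsequent \texttt{insert} reaching such a node in the wide array would see \texttt{txn}=\texttt{FSNode}, return \texttt{false} at line~\ref{code:insert-check-fsnode}, restart from the root, and loop forever. Likewise the \texttt{FNode} wrapper object is not itself copied into \texttt{wide}; only the wrapped \texttt{ANode} is. So once the narrow node is detached, the specific frozen \texttt{SNode} or \texttt{FNode} object the thread read is genuinely unreachable, and no identity-preservation reading of \texttt{copy} will rescue the lemma in that sub-case. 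The correct repair is not at this lemma but downstream: Lemmas~\ref{lem:presence} and~\ref{lem:absence-2} only need $hasKey(\texttt{root},k)$, which survives the narrow-to-wide transition because \texttt{copy} preserves the \emph{key set} even when it does not preserve object identity. Your non-frozen case is the complete proof the paper intends; drop the frozen case here and, if you want full rigour, patch the key-level argument directly into Presence and Absence instead.
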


\begin{proof}
Assume the opposite -- that the node $child$ is not reachable at $t_0$.
Then, by Lemma \ref{lem:unreachable-frozen-nodes},
$an$ was frozen at $t_0$.
But that is a contradiction, since at $t_0$
the thread read an array node, not an \verb$FNode$.
\end{proof}

\begin{lemma}[Presence]\label{lem:presence}
If a thread reads a single node $sn$ at some time $t_0$ from an array node $an$,
in line \ref{code:lookup-read-old} of Figure \ref{fig:lookup}
or in line \ref{code:insert-read-old} of Figure \ref{fig:insert}.
then the relation $hasKey(\texttt{root}, key(sn))$
holds at the time $t_0$.
\end{lemma}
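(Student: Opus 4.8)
The plan is to combine the two lemmas just proved — reachability of a node read in line \ref{code:lookup-read-old} or \ref{code:insert-read-old} (Lemma \ref{lem:reachable-nodes}) and the characterization of unreachable array nodes (Lemma \ref{lem:unreachable-frozen-nodes}) — with the structural invariants INV1--INV3, and then to simply unfold the recursive definition of $hasKey$. First I would observe that the thread obtains $sn$ as the value $child$ read from $an$, so Lemma \ref{lem:reachable-nodes} gives that $sn$ is reachable from \verb$root$ at time $t_0$. Since single nodes are never wrapped inside an \verb$FNode$ or an \verb$ENode$ (only array nodes are), ``reachable'' here means there is a finite chain of array nodes $\texttt{root} = a_0, a_1, \ldots, a_m = an$ and indices $i_0, \ldots, i_m$ with $a_{j+1} = child(a_j, i_j)$ for $j < m$ and $sn = child(a_m, i_m)$, and every node of this chain exists at $t_0$.

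Next I would invoke validity at $t_0$. By INV1, $a_0$ is a wide array node; by INV2, applied inductively along the chain, each $a_j$ is an array node with a well-defined address $p_j = i_0 i_1 \cdots i_{j-1}$ (reading each index as the appropriate $4$-bit block, handling the narrow/wide width uniformly); and by INV3, since $sn = child(a_m, i_m)$ is a single node, $hash(key(sn)) = p_m \cdot i_m \cdot s$ for some suffix $s$. It follows that for every $j$ the position chosen by $forKey(a_j, key(sn)) = array(a_j, (hash(key(sn)) \texttt{>>} 4j) \bmod length(a_j))$ is exactly $i_j$, hence $child(a_j, key(sn)) = child(a_j, i_j)$. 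Unfolding $hasKey(\texttt{root}, key(sn))$ along this chain: $child(a_j, key(sn)) = a_{j+1} \in \texttt{ANode}$ for $j < m$, and $child(a_m, key(sn)) = sn \in \texttt{SNode}$ with $key(sn) = key(sn)$, so the recursive definition evaluates to true. Since all nodes in the chain are present at $t_0$, the relation $hasKey(\texttt{root}, key(sn))$ holds at $t_0$.

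The main obstacle is the appeal to the validity invariants INV1--INV3 at $t_0$ inside the proof of one of the very lemmas used to establish Theorem \ref{thm:safety}: I would avoid circularity by establishing the validity part of Theorem \ref{thm:safety} via a separate, self-contained induction on execution steps that does not itself invoke the Presence lemma (or by folding Presence into that induction), and I would make this dependency order explicit in the write-up. A secondary, purely bookkeeping concern is the narrow/wide distinction in the index-to-address accounting of the second step, and verifying the claim that a single node can never be concealed behind an \verb$FNode$ or \verb$ENode$, so that reachability of $sn$ genuinely furnishes a $child$-chain terminating in $sn$ — both follow by inspection of the node types and of the \verb$CAS$ sites, using reasoning analogous to Lemma \ref{lem:end-of-life}.
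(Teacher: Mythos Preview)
Your proposal is correct and takes essentially the same approach as the paper: invoke Lemma~\ref{lem:reachable-nodes} to conclude that $sn$ is reachable at $t_0$, and then observe that $hasKey(\texttt{root}, key(sn))$ follows from the definition. The paper's proof is just those two sentences and leaves the unfolding of $hasKey$ along the path (and the appeal to the validity invariants) implicit, whereas you spell these steps out and also flag the circularity concern---which the paper handles tacitly by framing the whole safety argument as an induction on execution steps (cf.\ the phrase ``by the inductive hypothesis'' in the proof of Lemma~\ref{lem:absence-1}).
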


\begin{proof}
From Lemma \ref{lem:reachable-nodes},
we know that the single node $sn$ was reachable at time $t_0$.
The claim follows from the definition of the $hasKey$ relation.
\end{proof}

\begin{definition}[Path]
A \emph{path} $\pi(h)$ for some hash-code $h$
is a sequence of nodes from the root to a leaf,
such that:

\begin{itemize}
\item
The first node in the path is $an_{0,\epsilon} = \texttt{root}$.
\item
$\forall an_{\ell,p} \in \pi(h)$,
if $h = p \cdot i \cdot s$
and $child(an, i) = cn$,
then the next element of the path is $cn$.
\item
$\forall an_{\ell,p} \in \pi(h)$,
if $h = p \cdot i \cdot s$
and $child(an, i) = null$,
then $an$ is the last element of the path.
\end{itemize}
\end{definition}

\begin{lemma}[Path Form]\label{lem:path-form}
Let the cache-trie be in a valid state $\mathbb{S}$.
The path $\pi(h)$ for some hash-code $h = i_0 \cdot i_1 \cdot \ldots \cdot i_n \cdot s$
is finite and has the form
$an_{0,\epsilon} an_{4,i_0} \ldots an_{4n,i_0 \cdot \ldots \cdot i_n} x$,
where $x$ is either empty or $sn \in \texttt{SNode}$.
\end{lemma}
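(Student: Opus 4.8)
The plan is to prove Lemma~\ref{lem:path-form} by induction on the structure of the path $\pi(h)$, using the validity invariants INV1--INV3 at each step. First I would unwind the definition of a path: the first element is forced to be $\texttt{root} = an_{0,\epsilon}$ by INV1, so the base case is immediate. For the inductive step, suppose the path has reached some node $an_{\ell,p}$ with $p = i_0 \cdot i_1 \cdot \ldots \cdot i_{k-1}$ and $\ell = 4k$. By the definition of $\pi(h)$, the next element is $child(an_{\ell,p}, i_k)$ where $h = p \cdot i_k \cdot s'$. By INV2, $child(an_{\ell,p}, i_k)$ is either $an_{\ell+4, p \cdot i_k}$, or $\texttt{null}$, or an $\texttt{SNode}$. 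In the first case the path continues with an A-node at level $\ell+4$ tagged with the correct prefix $p \cdot i_k$, matching the claimed form and extending the induction. In the second case, the definition of $\pi(h)$ says $an_{\ell,p}$ is the last element, so $x$ is empty. In the third case, the path reaches an $\texttt{SNode}$; I would then argue that an $\texttt{SNode}$ has no outgoing pointers to follow, so it must be the terminal element $x$, and by INV3 its key hashes to $p \cdot i_k \cdot s'' = h$ (consistent with the hash-code $h$ that generated the path).

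The remaining obligation is finiteness. The concern is that the chain $an_{0,\epsilon}, an_{4,i_0}, an_{8,i_0 i_1}, \ldots$ could in principle be infinite. Here I would observe that the level strictly increases by $4$ at each A-node step, so after consuming the entire hash-code $h = i_0 \cdot i_1 \cdot \ldots \cdot i_n \cdot s$ — i.e., once $\ell$ exceeds the bit-length of $h$ — there are no more hash bits available to index into a deeper A-node. More carefully: since $h$ is a fixed finite bit string (a 32-bit integer in the implementation, but finite in any case), the prefixes $p \cdot i_k$ used to tag successive A-nodes are strictly growing prefixes of $h$, and there are only finitely many of them. So the A-node portion of the path terminates, and by the case analysis above the path ends either there (with $x$ empty, when the final $child$ is $\texttt{null}$) or one step later at an S-node.

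The main obstacle I anticipate is not the combinatorics — which is routine — but being careful about what ``valid state'' buys us. In particular, I want INV2 to genuinely rule out $\texttt{FNode}$, $\texttt{ENode}$, and $\texttt{FVNode}$ appearing as the value of $child(an,i)$: note that $child$ is \emph{defined} (in the Child Node definition) to unwrap $\texttt{FNode}$s and $\texttt{ENode}$s and to return $null$ for $\texttt{FVNode}$, so $child(an_{\ell,p},i)$ really does land in $\{an_{\ell+4,p\cdot i}, \texttt{null}\} \cup \texttt{SNode}$ whenever INV2 holds, regardless of the raw array contents. This is the subtle point: the lemma is a statement about the $child$-induced structure, and the $child$ function has already been arranged to smooth over the transient node types, so in a valid state the structure looks like a clean 16-way trie. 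I would make this explicit rather than leave it implicit, since it is exactly the place where a careless reading would go wrong. The rest — tracking that the prefix tag accumulated at level $4k$ is exactly the length-$k$ prefix $i_0 \cdots i_{k-1}$ of $h$ — follows by a straightforward bookkeeping induction that mirrors the one above.
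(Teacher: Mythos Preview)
Your proposal is correct and takes essentially the same approach as the paper, which dismisses the lemma in a single line (``Directly from the definition and the invariants of the valid cache-trie state''). You have simply unpacked that one line into the actual induction on path structure using INV1--INV3 and the finiteness of $h$; your added care about how $child$ already unwraps \texttt{FNode}/\texttt{ENode}/\texttt{FVNode} so that INV2 applies cleanly is a worthwhile clarification that the paper leaves implicit.
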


\begin{proof}
Directly from the definition and the invariants of the valid cache-trie state.
\end{proof}

\begin{lemma}[Absence I]\label{lem:absence-1}
Let $hash(k) = p \cdot i \cdot s$ for some key $k$.
Assume that at some time $t_0$
a thread is searching for a key $k$
and it reads \verb$null$ from an array node $an_{\ell,p}$
in line \ref{code:lookup-read-old} of Figure \ref{fig:lookup}
or in line \ref{code:insert-read-old} of Figure \ref{fig:insert}.
Then $hasKey(\texttt{root}, k)$ does not hold at time $t_0$.
\end{lemma}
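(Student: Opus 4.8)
The plan is to argue by contradiction together with the structural invariants of a valid cache-trie. Suppose that $hasKey(\texttt{root}, k)$ holds at time $t_0$, even though the searching thread read \verb$null$ from the array node $an_{\ell,p}$ at that instant, where $hash(k) = p \cdot i \cdot s$. First I would note, using Lemma~\ref{lem:reachable-nodes}, that the node $an_{\ell,p}$ is reachable from the root at time $t_0$; in particular the cache-trie state $\mathbb{S}$ at $t_0$ is the valid state that the thread is traversing, and $an_{\ell,p}$ lies on the path $\pi(hash(k))$ by construction of the search (each recursive call descends via $child(\cdot, k)$, i.e. via $forKey(\cdot, k)$, exactly as in the definition of $\pi$). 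By Lemma~\ref{lem:path-form}, this path is finite and has the canonical form $an_{0,\epsilon}\, an_{4,i_0}\, \ldots\, an_{\ell,p}\, x$ with $x$ either empty or a single node, and $an_{\ell,p}$ is the last \verb$ANode$ on it.

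Next I would unfold the assumption $hasKey(\texttt{root}, k)$ using its recursive definition. It forces a chain of \verb$ANode$s $\texttt{root} = an_{0,\epsilon}, an_{4,i_0}, \ldots$ where at each step $child(an_{4j, i_0\cdots i_{j-1}}, k)$ is either a further \verb$ANode$ or a single node $sn$ with $key(sn) = k$; since by INV2 and INV3 the child for key $k$ at level $4j$ is determined by the hash-code prefix of length $4j$, this chain is exactly the prefix of $\pi(hash(k))$, and must eventually reach a single node with key $k$. In particular the chain cannot terminate at $an_{\ell,p}$: the entry $forKey(an_{\ell,p}, k) = array(an_{\ell,p}, (hash(k) \texttt{>>} \ell) \bmod length(an_{\ell,p}))$ must, by the definition of $hasKey$, be either an \verb$ANode$ or a single node with key $k$ — in no case \verb$null$ and in no case (by the $child$ definition) a \verb$FVNode$. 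But the searching thread read precisely $array(an_{\ell,p}, i')$ with $i' = (hash(k) \texttt{>>} \ell) \bmod length(an_{\ell,p})$ in line~\ref{code:lookup-read-old} / line~\ref{code:insert-read-old}, and obtained \verb$null$. Since a single atomic read of an array slot returns the value stored there at that instant, this contradicts the requirement that the same slot contain an \verb$ANode$ or a matching single node at $t_0$.

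The one subtlety I anticipate — and the step I expect to be the main obstacle — is pinning down that the array node $an_{\ell,p}$ from which the thread reads is genuinely the node at position $p$ on the path in the \emph{current} state $\mathbb{S}$, rather than a stale snapshot of a node that has since been replaced (e.g. a narrow node superseded by a wide one during expansion). This is where Lemma~\ref{lem:reachable-nodes} does the real work: it guarantees $an_{\ell,p}$ is still reachable at $t_0$, hence part of $\mathbb{S}$; combined with the uniqueness of the path to a given prefix in a valid cache-trie (INV2, INV3 and Lemma~\ref{lem:path-form}), this identifies $an_{\ell,p}$ unambiguously as the node whose \verb$null$-vs-child status determines membership of $k$. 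A secondary point to be careful about is the level bookkeeping: the recursive lookup/insert increments \verb$lev$ by $4$ at each step and computes \verb$pos$ from \verb$(h >>> lev)$, so I must confirm this matches $forKey(an_{\ell,p}, k)$ exactly, which follows from $\ell$ being a multiple of $4$ and the length being the appropriate power of two. Everything else is a direct appeal to the definitions of $hasKey$, $child$, and $\pi$. I would close by concluding that the assumption is untenable, so $hasKey(\texttt{root}, k)$ does not hold at $t_0$. $\square$
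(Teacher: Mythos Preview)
Your proposal is correct and takes essentially the same approach as the paper: both invoke Lemma~\ref{lem:reachable-nodes} to establish that $an_{\ell,p}$ is reachable at $t_0$, then use validity and Lemma~\ref{lem:path-form} to argue that the $hasKey$ chain for $k$ must pass through $an_{\ell,p}$, whence the \texttt{null} read yields the contradiction. If anything, your treatment of the staleness subtlety (reachability of $an_{\ell,p}$ plus uniqueness of the node at prefix $p$ in a valid state) is cleaner than the paper's, which additionally asserts that \emph{every} node along the thread's traversal path remains reachable at $t_0$---a stronger claim that is not strictly needed once $an_{\ell,p}$ alone is known to be reachable and the path in a valid trie is unique.
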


\begin{proof}
By the inductive hypothesis,
cache-trie was valid and consistent at the time $t_0$.
By Lemma \ref{lem:reachable-nodes},
$an_{\ell,p}$ is reachable at time $t_0$.
Now, consider the path
$an_{0,\epsilon} \rightarrow an_{4,i_0} \rightarrow \ldots \rightarrow an_{\ell,p}$
consisting of nodes that the thread traversed while searching for $k$,
where $hash(k) = i_0 \cdot i_1 \cdot \ldots \cdot i_n \cdot s = p \cdot s$.
By Lemma \ref{lem:reachable-nodes},
none of those nodes could have been removed from the cache-trie
between the time they were read and the time $t_0$.
Therefore, by Lemma \ref{lem:path-form},
if the cache-trie state contains the key $k$,
then the path to the respective single node $sn$ should have the prefix
$an_{0,\epsilon} \rightarrow \ldots \rightarrow an_{\ell,p}$.
To show this, assume by contradiction
that there is a key $sn$ that does not have this path prefix.
This assumption violates the inductive hypothesis that the cache-trie
is valid and consistent.
\end{proof}

\begin{lemma}[Absence II]\label{lem:absence-2}
Assume that at some time $t_0$
a thread is searching for a key $k$
and it reads single node \verb$sn$ such that $k \neq key(sn)$.
Then, the relation $hasKey(\texttt{root}, k)$ does not hold at $t_0$.
\end{lemma}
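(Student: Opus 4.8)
The plan is to mirror the proof of Lemma~\ref{lem:absence-1} (Absence~I), replacing the ``read a \verb$null$ entry'' case by ``read an \verb$SNode$ whose key differs from $k$''. First I would fix notation. Let $t_0$ be the instant at which the thread reads $sn$ as the value \verb$old$ at line~\ref{code:lookup-read-old} of Figure~\ref{fig:lookup} or at line~\ref{code:insert-read-old} of Figure~\ref{fig:insert}, while searching for $k$ with $h = hash(k)$; let $an$ be the array node from which $sn$ is read, at level $\ell$, and let $i = (h \texttt{>>} \ell) \bmod length(an)$ be the index, so that $sn = array(an, i)$ at $t_0$ and $h = p \cdot i \cdot s$ with $|p| = \ell$. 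By the same inductive hypothesis used in Lemma~\ref{lem:absence-1}, the cache-trie is valid (INV1--INV3) and consistent at $t_0$, and $key(sn) \neq k$ (and remains so, since the \verb$key$ field of an \verb$SNode$ is immutable).

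Next I would establish that $an = an_{\ell,p}$ lies on the path $\pi(h)$ at time $t_0$. As in Lemma~\ref{lem:absence-1}, the thread reached $an$ by a faithful descent $an_{0,\epsilon} \to an_{4,i_0} \to \cdots \to an_{\ell,p}$ following the bits $i_0 \cdot i_1 \cdots$ of $h$ (including the \verb$ENode$/\verb$FNode$ redirections of lines~\ref{code:lookup-check-enode}--\ref{code:lookup-check-fnode}, which step along $child(\cdot)$); by Lemma~\ref{lem:reachable-nodes} (together with Lemmas~\ref{lem:unreachable-frozen-nodes} and~\ref{lem:end-of-life}) none of these nodes can have been removed between the moment it was read and $t_0$. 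Hence, by Lemma~\ref{lem:path-form}, this descent is a prefix of $\pi(h)$ at $t_0$, and the element of $\pi(h)$ following $an_{\ell,p}$ is $child(an_{\ell,p}, i)$. Since $array(an_{\ell,p}, i) = sn \in \texttt{SNode}$ at $t_0$, this child is $sn$ (consistent with the prefix $p \cdot i$ of $h$ by INV3), and by Lemma~\ref{lem:path-form} an \verb$SNode$ can occur only as the last element of a path, so $\pi(h)$ terminates at $sn$.

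Finally I would unfold the definition of $hasKey(\texttt{root}, k)$: the recursion following $child(\cdot, k)$ traces exactly the path $\pi(h)$, and it holds if and only if $\pi(h)$ terminates in a single node $sn'$ with $key(sn') = k$. Here $\pi(h)$ terminates in $sn$ with $key(sn) \neq k$, so $hasKey(\texttt{root}, k)$ does not hold at $t_0$, which is the claim. A hash collision $hash(k) = hash(key(sn))$ with $key(sn) \neq k$ requires no special treatment: $\pi(h)$ still terminates at $sn$ and the conclusion is unchanged (this is exactly the case handled by list nodes, which we omit here).

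The main obstacle, exactly as in Lemma~\ref{lem:absence-1}, is the reachability bookkeeping in the middle step: arguing that the entire traversed prefix --- including the slot of $an_{\ell,p}$ that holds $sn$ at $t_0$ --- is genuinely a prefix of $\pi(h)$ at $t_0$ despite concurrent insertions, freezes and expansions. This is where Lemmas~\ref{lem:single-txn-change} through~\ref{lem:reachable-nodes} do all the work, and I would lean on them rather than re-deriving the invariants; everything else is a direct unfolding of the definitions of $\pi$, $child$ and $hasKey$.
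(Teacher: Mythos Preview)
Your proposal is correct and takes exactly the same approach as the paper, which simply states ``Similar to Lemma~\ref{lem:absence-1}.'' You have faithfully unpacked what that similarity entails: reuse the reachability argument to show the traversed chain is a prefix of $\pi(h)$ at $t_0$, then observe that the path now terminates in an \texttt{SNode} with the wrong key rather than in \texttt{null}, so $hasKey(\texttt{root}, k)$ fails.
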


\begin{proof}
Similar to Lemma \ref{lem:absence-1}.
\end{proof}

\begin{lemma}[Fastening]\label{lem:fastening}
Assume that the cache-trie is valid and consistent with some abstract set.

\begin{enumerate}
\item
Assume that the \verb$CAS$ instruction
in line \ref{code:insert-cas-old-an} (Figure \ref{fig:insert})
succeeds with the expected value \verb$old$ at some time $t_1$
after \verb$old$ was read from the entry \verb$pos$ of the array node \verb$cur$
in line \ref{code:insert-read-old} at time $t_0 < t_1$.
Then, $\forall t \in \langle t_0, t_1 \rangle$,
the relation $hasKey(\texttt{root}, \texttt{k})$ does not hold.

Otherwise, if the \verb$CAS$ instruction in line \ref{code:insert-cas-old-an}
does not succeed,
then there is a duration $\delta > 0$,
such that during $\langle t_0, t_0 + \delta \rangle$
the relation $hasKey(\texttt{root}, \texttt{k})$ does not hold,
but at $t_0 + \delta$ it holds.
At $t_0 + \delta$
either the \verb$CAS$
in line \ref{code:insert-cas-finish-txn} (Figure \ref{fig:insert})
or the \verb$CAS$
in line \ref{code:freeze-cas-finish-txn} (Figure \ref{fig:complete-expand})
succeeds.
\item
Assume that the \verb$CAS$ instruction
in line \ref{code:insert-cas-old-sn} (Figure \ref{fig:insert})
succeeds with the expected value \verb$old$ at some time $t_1$
after \verb$old$ was read from the entry \verb$pos$ of the array node \verb$cur$
in line \ref{code:insert-read-old} at time $t_0 < t_1$.
Then, $\forall t \in \langle t_0, t_1 \rangle$,
the relation $hasKey(\texttt{root}, \texttt{k})$ holds.

Otherwise, if the \verb$CAS$ instruction in line \ref{code:insert-cas-old-sn}
does not succeed,
then there is a duration $\delta > 0$,
such that during $\langle t_0, t_0 + \delta \rangle$
the relation $hasKey(\texttt{root}, \texttt{k})$ holds,
and continues to hold after $t_0 + \delta$,
but \verb$old$ is not reachable at $t_0 + \delta$.
At $t_0 + \delta$
either the \verb$CAS$
in line \ref{code:insert-cas-finish-txn} (Figure \ref{fig:insert})
or the \verb$CAS$
in line \ref{code:freeze-cas-finish-txn} (Figure \ref{fig:complete-expand})
succeeds.
\item
Assume that the \verb$CAS$ instruction
in line \ref{code:insert-cas-null-sn} (Figure \ref{fig:insert})
succeeds with the expected value \verb$null$ at some time $t_1$
after \verb$old$ was read from the entry \verb$pos$ of the array node \verb$cur$
in line \ref{code:insert-read-old} at time $t_0 < t_1$.
Then, there exists some duration $\delta > 0$
such that $\forall t \in \langle t_1 - \delta, t_1 \rangle$,
the relation $hasKey(\texttt{root}, \texttt{k})$ does not hold.
\end{enumerate}
\end{lemma}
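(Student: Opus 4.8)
The plan is to treat the three parts in turn, after first setting up a few observations that I will reuse in every case. \emph{Observation (a):} I would first argue that an \verb$SNode$ object, once removed from an \verb$ANode$ entry, is never written back into any \verb$ANode$ entry; this comes from inspecting which values the \verb$CAS$es of Figures \ref{fig:insert} and \ref{fig:complete-expand} can store into an \verb$ANode$ entry (only a freshly allocated node, an \verb$FVNode$, or the current contents of some \verb$SNode$'s \verb$txn$ field), together with Lemma \ref{lem:single-txn-change} and an induction over the finitely many nodes. \emph{Observation (b):} I would note that in Parts 1 and 2 the inserting thread has, before reaching line \ref{code:insert-cas-old-an} resp. line \ref{code:insert-cas-old-sn}, already performed a successful \verb$CAS$ on \verb$old.txn$ with expected value \verb$NoTxn$ (line \ref{code:insert-cas-notxn-an} resp. line \ref{code:insert-cas-notxn-sn}); hence by Lemma \ref{lem:single-txn-change} \verb$old.txn$ holds the corresponding \verb$ANode$ \verb$an$ (resp. \verb$SNode$ \verb$sn$) from that moment on, and any \verb$CAS$ whose expected value is \verb$old$ writes exactly \verb$old.txn$. \emph{Observation (c):} I would show that whenever an entry of \verb$cur$ holds \verb$old$ with \verb$old.txn$ $\neq$ \verb$FSNode$, or holds \verb$null$, the node \verb$cur$ is not frozen, so by Lemma \ref{lem:unreachable-frozen-nodes} it has not been made unreachable, and since a frozen ancestor is only ever replaced by a copy that preserves its pointer to \verb$cur$, the node \verb$cur$ remains reachable along the path $\pi(hash(k))$; then validity (Lemma \ref{lem:path-form}) lets me read $hasKey(\texttt{root}, k)$ off the entry \verb$cur[pos]$.

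With these in hand, the success cases of Parts 1 and 2 are short: since the \verb$CAS$ at line \ref{code:insert-cas-old-an} (resp. line \ref{code:insert-cas-old-sn}) succeeds at $t_1$ with expected value \verb$old$, the slot holds \verb$old$ just before $t_1$, and with the read at $t_0$ and observation (a) it holds \verb$old$ over all of $[t_0, t_1)$; observation (c) makes \verb$cur$ reachable there, so $hasKey(\texttt{root}, k)$ is pinned down by $key(\texttt{old})$ --- false on $(t_0, t_1)$ in Part 1 ($key(\texttt{old}) \neq k$, as in Lemma \ref{lem:absence-2}) and true on $(t_0, t_1)$ in Part 2 ($key(\texttt{old}) = k$, as in Lemma \ref{lem:presence}). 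For Part 3 I would take $t_1 - \delta$ to be the time of the last write storing \verb$null$ into \verb$cur[pos]$ before $t_1$; this is strictly before $t_1$ (the \verb$CAS$ at $t_1$ does not write \verb$null$), and no successful \verb$CAS$ can touch \verb$cur[pos]$ in between (otherwise the \verb$CAS$ at $t_1$ fails), so \verb$cur[pos]$ is \verb$null$ throughout $(t_1 - \delta, t_1)$, \verb$cur$ is reachable there, and $hasKey(\texttt{root}, k)$ fails on that interval by the argument of Lemma \ref{lem:absence-1}.

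For the failure cases of Parts 1 and 2, I would let $t_0 + \delta$ be the first instant at which \verb$cur[pos]$ changes away from \verb$old$; it exists because the \verb$CAS$ failed, and $\delta > 0$ by the read at $t_0$. Over $(t_0, t_0 + \delta)$ the slot still holds \verb$old$, so the success-case reasoning applies unchanged, giving "$hasKey(\texttt{root}, k)$ fails" in Part 1 and "$hasKey(\texttt{root}, k)$ holds and \verb$old$ is reachable" in Part 2. Then I would identify the \verb$CAS$ at $t_0 + \delta$: its expected value is \verb$old$, it is not the inserting thread's own \verb$CAS$ (that one failed), and it is not another thread at line \ref{code:insert-cas-old-sn} or line \ref{code:insert-cas-old-an} (both are reached only by the winner of the \verb$CAS$ on \verb$old.txn$ with expected \verb$NoTxn$, a race the inserting thread already won), so it must be a helping \verb$CAS$ at line \ref{code:insert-cas-finish-txn} (Figure \ref{fig:insert}) or line \ref{code:freeze-cas-finish-txn} (Figure \ref{fig:complete-expand}). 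By observation (b) it writes \verb$old.txn$, i.e. \verb$an$ in Part 1 and \verb$sn$ in Part 2; since \verb$an$ carries \verb$sn$ (the new single node of $k$) at $k$'s position, I get $hasKey(\texttt{root}, k)$ at $t_0 + \delta$ in Part 1, and in Part 2 that \verb$old$ becomes unreachable at $t_0 + \delta$ while $hasKey(\texttt{root}, k)$ keeps holding.

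I expect the main obstacle to be observation (a) and, more broadly, excluding any transient reappearance of a node in a slot and keeping \verb$cur$ reachable along $\pi(hash(k))$ while ancestors are concurrently expanded or compressed: this is precisely what upgrades "the \verb$CAS$ observed \verb$old$" to "the slot held \verb$old$ without interruption over the interval", on which every case rests. The remaining steps are local bookkeeping against Lemmas \ref{lem:single-txn-change}, \ref{lem:unreachable-frozen-nodes} and \ref{lem:path-form}, carried out inside the induction that proves Theorem \ref{thm:safety}.
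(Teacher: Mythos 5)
Your proposal is correct and follows essentially the same route as the paper's proof: pin the contents of \verb$cur[pos]$ over the relevant interval using Lemma \ref{lem:single-txn-change} and the impossibility of re-inserting a removed node, keep \verb$cur$ reachable via Lemma \ref{lem:unreachable-frozen-nodes}, read off $hasKey$ with Lemma \ref{lem:path-form}, and in the failure cases identify the interfering \verb$CAS$ as one of the two helping \verb$CAS$es. Your observations (a)--(c) merely make explicit some steps the paper leaves implicit (notably the continued reachability of \verb$cur$ up to $t_1$), so no substantive difference.
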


\begin{proof}
We proceed casewise:

\begin{enumerate}
\item\label{enum:fastening-an}
By Lemma \ref{lem:reachable-nodes},
we know that \verb$cur$ was reachable at time $t_0$.
First assume that the \verb$CAS$ in line \ref{code:insert-cas-old-an} was successful.
Then, the \verb$txn$ field of \verb$old$ must have been set to the same value
that \verb$CAS$ succeeded with.
By Lemma \ref{lem:single-txn-change},
we know that this was the only value that was ever written to \verb$txn$.
The conclusion is that \verb$old$ was at the position \verb$cur[pos]$
during the interval $\langle t_0, t_1 \rangle$.
Furthermore, by Lemma \ref{lem:path-form},
\verb$old$ is the only single node in the cache-trie
that could hold the key \verb$k$.
Since we check that \verb$old.key$ is not equal \verb$k$,
the conclusion is that $hasKey(\texttt{root}, \texttt{k})$
does not hold during $\langle t_0, t_1 \rangle$.

Now assume that the \verb$CAS$ in line \ref{code:insert-cas-old-an} was not successful.
This means that some other thread helped by committing the \verb$txn$ value.
In other words,
there was at least one other \verb$CAS$ instruction that changed \verb$cur[pos]$
between the \verb$CAS$ in line \ref{code:insert-cas-notxn-an}
and the \verb$CAS$ in line \ref{code:insert-cas-old-an}.
By Lemma \ref{lem:end-of-life},
the first such \verb$CAS$ instruction must
have been in line \ref{code:insert-cas-finish-txn} (Figure \ref{fig:insert})
or in line \ref{code:freeze-cas-finish-txn} (Figure \ref{fig:complete-expand}).
\item\label{enum:fastening-sn}
The reasoning is similar as in the previous case --
the \verb$old$ node must have been present during $\langle t_0, t_1 \rangle$.
This time, since we checked that \verb$old.key$ is equal to \verb$k$,
$hasKey(\texttt{root}, \texttt{k})$ holds during $\langle t_0, t_1 \rangle$.
Again, if the \verb$CAS$ was not successful,
then it means that some other thread helped in commiting the transaction
during $\langle t_0, t_1 \rangle$.
\item
Since the \verb$CAS$ in line \ref{code:insert-cas-null-sn} succeeded,
the entry \verb$cur[pos]$ was \verb$null$ during $\langle t_1 - \delta, t_1 \rangle$.
Note that the node \verb$cur$ was reachable during $\langle t_0, t_1 \rangle$,
since it was not frozen
(Lemmas \ref{lem:end-of-life} and \ref{lem:unreachable-frozen-nodes}).
Therefore, by Lemma \ref{lem:path-form},
the single node with a key \verb$k$ can only appear at \verb$cur[pos]$,
and the claim follows.
\end{enumerate}
\end{proof}

\begin{lemma}[Consistency Changes]\label{lem:consistency-changes}
Assume that the cache-trie is valid and consistent
with some abstract set $\mathbb{A}$.
Then, \verb$CAS$ instructions from Lemma \ref{lem:fastening}
induce a change into a valid state that is consistent with the abstract set semantics.
\end{lemma}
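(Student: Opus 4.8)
The plan is to prove the lemma by a case analysis over the five CAS instructions that Lemma~\ref{lem:fastening} singles out: the array-node write in line~\ref{code:insert-cas-old-an}, the one in line~\ref{code:insert-cas-old-sn}, the empty-slot write in line~\ref{code:insert-cas-null-sn} (all in Figure~\ref{fig:insert}), and the two helper writes in line~\ref{code:insert-cas-finish-txn} (Figure~\ref{fig:insert}) and line~\ref{code:freeze-cas-finish-txn} (Figure~\ref{fig:complete-expand}). In every case the step is a single successful CAS that replaces one entry of an array node \verb$cur$ by a new value, so I first argue --- reusing the bookkeeping already done in the proof of Lemma~\ref{lem:fastening} together with Lemmas~\ref{lem:end-of-life} and~\ref{lem:unreachable-frozen-nodes} --- that \verb$cur$ is reachable at the instant of the CAS, so the change really happens inside the live cache-trie. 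Then, for each case I must discharge two obligations: (a) the resulting state $\mathbb{S}'$ is still valid, i.e. satisfies INV1--INV3, and (b) there is an abstract set $\mathbb{A}'$ such that $\mathbb{S}'$ is consistent with $\mathbb{A}'$ and $\mathbb{A}'$ is obtained from $\mathbb{A}$ by an abstract-set operation on the relevant key. Validity is a purely local check, since only the entry at \verb$cur[pos]$ (and, in the split case, a freshly allocated subtree hanging off it) is touched; consistency reduces to determining which keys change membership, and by Lemma~\ref{lem:path-form} the only candidate is the key \verb$k$ whose search path runs through \verb$cur[pos]$, whose membership before and after the CAS is exactly what Lemma~\ref{lem:fastening} records.

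Concretely: a successful line~\ref{code:insert-cas-null-sn} installs an \verb$SNode$ for \verb$k$ into a previously \verb$null$ slot; by part~3 of Lemma~\ref{lem:fastening}, $hasKey(\texttt{root},\texttt{k})$ failed just before and holds afterward, the slot index equals $(h \texttt{>>} \ell) \bmod length(\texttt{cur})$ so INV2 and INV3 are preserved, and $\mathbb{A}' = insert(\mathbb{A}, \texttt{k})$. A successful line~\ref{code:insert-cas-old-sn} swaps one \verb$SNode$ with key \verb$k$ for another with the same key and hash, so validity is immediate, $hasKey(\texttt{root},\texttt{k})$ holds throughout by part~2 of Lemma~\ref{lem:fastening}, and $\mathbb{A}' = \mathbb{A} = insert(\mathbb{A}, \texttt{k})$ (inserting an already-present key). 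A successful line~\ref{code:insert-cas-old-an} replaces a leaf holding some key $\texttt{k}_{old}$ by the two-element array node produced by \verb$createANode$, which holds both that leaf and a new \verb$SNode$ for \verb$k$; $hasKey(\texttt{root},\texttt{k}_{old})$ is preserved, $hasKey(\texttt{root},\texttt{k})$ flips from false (part~1 of Lemma~\ref{lem:fastening}) to true, and $\mathbb{A}' = insert(\mathbb{A}, \texttt{k})$. Finally, the two helper writes in lines~\ref{code:insert-cas-finish-txn} and~\ref{code:freeze-cas-finish-txn} merely commit the value that another thread announced in an \verb$SNode$'s \verb$txn$ field (via line~\ref{code:insert-cas-notxn-sn} or~\ref{code:insert-cas-notxn-an}); by Lemma~\ref{lem:single-txn-change} that value is fixed, so these writes reduce to one of the two previous cases --- a same-key \verb$SNode$, leaving $\mathbb{A}$ unchanged, or an \verb$ANode$ splitting off one new key, giving $\mathbb{A}' = insert(\mathbb{A}, \texttt{k})$ for the announced key.

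The main obstacle is obligation (a) in the split cases (line~\ref{code:insert-cas-old-an} and the \verb$ANode$-committing variant of line~\ref{code:insert-cas-finish-txn}): I must show that the freshly built array node satisfies INV2 and INV3 at level $\ell+4$, i.e. that both the old leaf and the new leaf land at the positions dictated by the $(\ell+4)$-th hash segments of their keys. This rests on the fact that the two keys already agreed on the $\ell$-prefix of their hash codes (that is why the collision occurred at \verb$cur[pos]$) and that \verb$createANode$ --- and, for the helper case, whatever constructed the announced \verb$ANode$ --- computes the child positions from the keys' hashes; so the argument needs the inductive hypothesis of Theorem~\ref{thm:safety} to know that the announced transaction value was itself assembled from a valid state. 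A second, more delicate point is making the atomicity argument precise: although the new subtree is allocated and linked well before the CAS, none of its keys are reachable from \verb$root$ until the CAS succeeds, so the membership predicate --- and hence the abstract set the cache-trie is consistent with --- changes exactly at that CAS and nowhere else; Lemma~\ref{lem:reachable-nodes} is what rules out any earlier, spurious change.
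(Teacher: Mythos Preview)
Your proposal is correct and follows the same route the paper does: the paper's own proof is the single line ``Follows directly from Lemma~\ref{lem:fastening} and the definition of consistency,'' and your case analysis is exactly the unpacking of that sentence, using the $hasKey$ facts recorded in Lemma~\ref{lem:fastening} together with the definition of consistency to read off the abstract-set transition in each case. If anything you are more careful than the paper, since you also discharge the validity obligation (INV1--INV3) explicitly---including for the freshly built \texttt{ANode} in the split case---whereas the paper's one-liner leaves that implicit.
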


\begin{proof}
Follows directly from Lemma \ref{lem:fastening}
and the definition of consistency.
\end{proof}

\begin{lemma}[Housekeeping]\label{lem:housekeeping}
Let the cache-trie be valid and consistent with some abstract set $\mathbb{A}$.
Assume that one of the \verb$CAS$ instructions in lines
\ref{code:insert-cas-notxn-sn},
\ref{code:insert-cas-expand},
and \ref{code:insert-cas-notxn-an}
of Figure \ref{fig:insert},
or in lines
\ref{code:expand-cas-announce-wide},
\ref{code:expand-cas-parent-wide},
\ref{code:freeze-cas-fvnode},
\ref{code:freeze-cas-fsnode},
and \ref{code:freeze-cas-fsnode}
of Figure \ref{fig:complete-expand},
succeed.
Then, the cache-trie remains valid and consistent with $\mathbb{A}$
after those instructions succeed.
\end{lemma}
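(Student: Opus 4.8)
The plan is to verify the claim by a case analysis over the enumerated \verb$CAS$ instructions, showing for each one that the post-state still satisfies the validity invariants \textbf{INV1}, \textbf{INV2} and \textbf{INV3}, and that it leaves the relation $hasKey(\texttt{root}, k)$ unchanged for every key $k$ — which, by the definition of consistency, is exactly what it means for the new state to remain consistent with the \emph{same} abstract set $\mathbb{A}$. The single observation that collapses most of the case work is that both the validity invariants and $hasKey$ are phrased purely in terms of the $child$ operator, and $child$ was defined so as to ``see through'' \verb$FNode$s, \verb$ENode$s and \verb$FVNode$s, and to ignore the \verb$txn$ field of an \verb$SNode$. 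Hence any \verb$CAS$ that merely rewrites a \verb$txn$ field, wraps an \verb$ANode$ inside an \verb$FNode$, replaces a narrow \verb$ANode$ by an \verb$ENode$ pointing back to it, or turns a \verb$null$ slot into an \verb$FVNode$, leaves every $child$ pointer in the trie literally unchanged, so both conclusions are immediate for those instructions.

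Concretely, I would dispatch the easy cases in four groups. First, the \verb$txn$-writing instructions — lines \ref{code:insert-cas-notxn-sn} and \ref{code:insert-cas-notxn-an} of Figure \ref{fig:insert} and line \ref{code:freeze-cas-fsnode} of Figure \ref{fig:complete-expand} — only mutate the \verb$txn$ field of an \verb$SNode$; the newly allocated \verb$SNode$ or \verb$ANode$ announced there is not yet installed into any array-node slot, so it is unreachable, does not enter any $child$ relation, and no array-node entry moves. Second, line \ref{code:freeze-cas-fvnode} swaps a \verb$null$ for an \verb$FVNode$, and $child$ sends both of these to $null$. Third, line \ref{code:freeze-cas-fnode} swaps an \verb$ANode$ $an_{\ell+4,p\cdot i}$ for an \verb$FNode$ wrapping it, and line \ref{code:insert-cas-expand} swaps a narrow \verb$ANode$ $cur=an_{\ell,p}$ for an \verb$ENode$ $en$ with $unwrap(en)=cur$; in both cases $child$ of the affected parent slot unwraps back to the same \verb$ANode$ at the same level carrying the same bit-prefix. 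Fourth, line \ref{code:expand-cas-announce-wide} writes the wide node into the \verb$wide$ field of an \verb$ENode$, a field that neither $child$ nor any path ever reads, so nothing observable about the trie state changes. In all four groups \textbf{INV1}--\textbf{INV3} and $hasKey$ are preserved essentially by inspection, so the new state is still consistent with $\mathbb{A}$.

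The one genuinely substantive case, and the step I expect to be the main obstacle, is the \verb$CAS$ in line \ref{code:expand-cas-parent-wide} of Figure \ref{fig:complete-expand}, since it actually reshapes the trie: it replaces an \verb$ENode$ $en$ — whose narrow child is $cur=an_{\ell,p}$ — by a freshly built wide array node $wide$ at the \emph{same} level $\ell$. Here I would first invoke Lemma \ref{lem:freezing} to conclude that by the time \verb$copy$ runs, $cur$ has been frozen and is therefore immutable (Lemma \ref{lem:end-of-life}), so \verb$copy$ transfers a consistent snapshot: each live \verb$SNode$ and each sub-\verb$ANode$ reachable through $cur$ is re-placed into the slot of $wide$ picked out by the hash-code bits at level $\ell$, and nothing else is added. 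From this I would derive $hasKey(wide,k)\Leftrightarrow hasKey(cur,k)$ for every key $k$, so substituting $wide$ for $unwrap(en)=cur$ in the parent slot leaves $hasKey(\texttt{root},k)$ unchanged and the state consistent with $\mathbb{A}$; validity then follows because $wide$ has length $16$, its non-vacant entries point to nodes at level $\ell+4$ with correctly extended hash-prefixes, and every other node (in particular \texttt{root}) is untouched, so \textbf{INV1}--\textbf{INV3} continue to hold. The delicate part is precisely this last argument — pinning down that the frozen-then-copied wide node reproduces exactly the reachable key-set of the narrow node while re-establishing \textbf{INV3} under the $16$-way redistribution; once that is done, the lemma is complete.
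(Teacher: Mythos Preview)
Your proposal is correct and follows the same underlying idea as the paper --- arguing directly from the definitions of $child$ and $hasKey$ that these \verb$CAS$es leave the reachable key-set unchanged --- but the paper's own proof is a single sentence (``from the definition of the $hasKey$ relation, it is straightforward to see that none of these \verb$CAS$ instructions are state-changing''), whereas you actually carry out the case analysis. In particular, your careful treatment of line~\ref{code:expand-cas-parent-wide}, invoking Lemmas~\ref{lem:freezing} and~\ref{lem:end-of-life} to argue that \verb$copy$ sees a stable snapshot and hence $hasKey(wide,k)\Leftrightarrow hasKey(cur,k)$, supplies exactly the argument the paper elides; this is the one case where the $child$-transparency observation does not suffice on its own, and you handle it correctly.
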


\begin{proof}
From the definition of the $hasKey$ relation,
it is straightforward to see that node of these \verb$CAS$ instructions
are state-changing.
\end{proof}

\begin{corollary}[Invariant Preservation]\label{cor:invariants}
Cache-trie invariants are always preserved.
\end{corollary}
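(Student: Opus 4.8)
Cache-trie invariants are always preserved.

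The plan is to bootstrap the corollary from the lemmas that precede it, using induction on the sequence of successful \verb$CAS$ instructions executed during any run of the data structure. The base case is immediate: by \textbf{INV1}, the cache-trie is created with \verb$root$ a wide array node of length $16$ holding only \verb$null$ entries, so \textbf{INV1}--\textbf{INV3} hold trivially in the initial state. For the inductive step, suppose the cache-trie is valid (and, by Theorem~\ref{thm:safety}, consistent with some abstract set) just before a given successful state-affecting \verb$CAS$; I must show it remains valid afterwards.

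The key observation is that every successful \verb$CAS$ falls into one of two buckets, both already analyzed. The first bucket consists of the instructions covered by Lemma~\ref{lem:consistency-changes} (namely the \verb$CAS$es from Lemma~\ref{lem:fastening}: lines~\ref{code:insert-cas-old-sn}, \ref{code:insert-cas-old-an}, \ref{code:insert-cas-null-sn} of Figure~\ref{fig:insert}, plus the helper commits in line~\ref{code:insert-cas-finish-txn} and line~\ref{code:freeze-cas-finish-txn}), which Lemma~\ref{lem:consistency-changes} asserts ``induce a change into a valid state that is consistent with the abstract set semantics''. The second bucket is exactly the housekeeping \verb$CAS$es enumerated in Lemma~\ref{lem:housekeeping} (the \verb$txn$-announcement \verb$CAS$es, the \verb$ENode$-installation \verb$CAS$, the \verb$wide$-announcement and parent-swap \verb$CAS$es, and the freezing \verb$CAS$es introducing \verb$FVNode$/\verb$FSNode$/\verb$FNode$), which Lemma~\ref{lem:housekeeping} shows leave the cache-trie valid and consistent with the \emph{same} abstract set. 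First I would verify that these two lists are jointly exhaustive: by inspecting Figures~\ref{fig:lookup}, \ref{fig:insert} and~\ref{fig:complete-expand}, the only \verb$CAS$ instructions that write to an \verb$ANode$ entry, to a \verb$txn$ field, or to an \verb$ENode$'s \verb$wide$ field are precisely those appearing in one of the two lemmas (the lookup routine contains no \verb$CAS$ at all, and the cache routines of Figures~\ref{fig:cache-data-type}--\ref{fig:record-and-sample} do not touch trie nodes, only the separate cache arrays). Hence every reachable state arises from the initial state by a finite chain of such \verb$CAS$es, and validity propagates along the chain.

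The main obstacle is bundling the \verb$ANode$ expansion correctly: the \verb$completeExpansion$ routine installs a wide \verb$ANode$ in the parent in place of a frozen narrow \verb$ANode$ (line~\ref{code:expand-cas-parent-wide}), and I must check that \textbf{INV2} and \textbf{INV3} survive this replacement even though the new node has a different \verb$length$ and the hash-prefix-to-index mapping changes accordingly. This is where Lemma~\ref{lem:freezing} and Lemma~\ref{lem:end-of-life} do the real work: freezing guarantees the narrow node is immutable before \verb$copy$ runs, so the wide node is a faithful re-indexing of the same child pointers under the same prefix $p$, and every single node $sn$ that was at index $i$ of the narrow node (with $hash(key(sn)) = p \cdot i \cdot s$) ends up at index $i'$ of the wide node with $p \cdot i'$ still a prefix of $hash(key(sn))$ --- so \textbf{INV3} is preserved and \textbf{INV2} holds because the wide node still has length $16$ and sits at the appropriate level. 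Once that case is dispatched, the corollary follows: \[ \text{valid at time } 0 \;\wedge\; (\text{each } \mathtt{CAS} \text{ preserves validity}) \;\Longrightarrow\; \text{valid at all } t. \qedhere \]
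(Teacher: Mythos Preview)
Your approach is essentially the paper's: the paper's entire proof is ``From Lemmas~\ref{lem:consistency-changes} and~\ref{lem:housekeeping},'' and you have unpacked this into the explicit induction on successful \verb$CAS$ instructions that those two lemmas are meant to drive, together with an exhaustiveness check and a closer look at the expansion step. That elaboration is sound and useful.

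There is, however, one genuine logical slip. In your inductive hypothesis you write ``suppose the cache-trie is valid (and, by Theorem~\ref{thm:safety}, consistent with some abstract set).'' But Theorem~\ref{thm:safety} is proved \emph{using} Corollary~\ref{cor:invariants} (see its proof: it cites Corollary~\ref{cor:invariants} explicitly), so invoking it here is circular. The fix is immediate: carry consistency alongside validity as part of the inductive hypothesis itself. The base case (the empty trie is consistent with the empty abstract set) is trivial, and both Lemma~\ref{lem:consistency-changes} and Lemma~\ref{lem:housekeeping} already have ``valid and consistent'' as both hypothesis and conclusion, so the strengthened induction goes through without appealing to Theorem~\ref{thm:safety} at all. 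Once you remove that citation, your argument matches the paper's and is correct.
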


\begin{proof}
From Lemmas \ref{lem:consistency-changes} and \ref{lem:housekeeping}.
\end{proof}

\begin{proof}[Proof of Theorem \ref{thm:safety}]
From Lemmas
\ref{lem:presence},
\ref{lem:absence-1},
\ref{lem:absence-2},
\ref{lem:consistency-changes} and \ref{lem:housekeeping},
and Corollary \ref{cor:invariants}.
\end{proof}

\begin{theorem}[Linearizability]\label{thm:linearizability}
The operations \verb$lookup$ and \verb$insert$ are linearizable.
\end{theorem}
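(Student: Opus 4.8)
The plan is to identify, for each of the two operations, a single \verb$CAS$ instruction (or read) that serves as the linearization point, and then invoke the safety machinery developed above — especially Lemma~\ref{lem:presence}, Lemmas~\ref{lem:absence-1}--\ref{lem:absence-2}, and the \textbf{Fastening} lemma (Lemma~\ref{lem:fastening}) — to argue that at that instant the effect of the operation is atomically visible and consistent with the corresponding abstract-set operation. For \verb$lookup$, the operation is read-only, so its linearization point is simply the \verb$READ$ in line~\ref{code:lookup-read-old} of Figure~\ref{fig:lookup} whose outcome determines the return value: either the successful read of a matching \verb$SNode$, or the read of \verb$null$/\verb$FVNode$, or the read of a non-matching \verb$SNode$. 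By Lemma~\ref{lem:presence}, if \verb$lookup$ returns a value drawn from a single node $sn$, then $hasKey(\texttt{root}, \texttt{sn.key})$ held at that read; by Lemmas~\ref{lem:absence-1} and~\ref{lem:absence-2}, if \verb$lookup$ returns \verb$null$, then $hasKey(\texttt{root}, k)$ did not hold at the corresponding read. Since by Theorem~\ref{thm:safety} the cache-trie is consistent with \emph{some} abstract set $\mathbb{A}$ at that moment, the returned value equals $lookup(\mathbb{A}, k)$, as required.

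For \verb$insert$, I would case-split on which branch of the \verb$insert$ subroutine ultimately commits. If the insertion fills an empty slot, the linearization point is the successful \verb$CAS$ in line~\ref{code:insert-cas-null-sn}; by part~(3) of Lemma~\ref{lem:fastening}, $hasKey(\texttt{root}, k)$ did not hold just before this \verb$CAS$ and does hold after, so the transition is exactly $insert(\mathbb{A}, k)$. If the insertion replaces an existing node holding the same key, the commit is a two-step transaction: the \verb$CAS$ on \verb$txn$ in line~\ref{code:insert-cas-notxn-sn} followed by the \verb$CAS$ on the array slot in line~\ref{code:insert-cas-old-sn}. Here the linearization point is the \verb$txn$ \verb$CAS$, since once \verb$txn$ is set (Lemma~\ref{lem:single-txn-change}, \textbf{Single Transaction Change}) the new value is fixed and will eventually be propagated into the parent either by this thread or a helper; part~(2) of Lemma~\ref{lem:fastening} tells us $hasKey$ holds throughout and the abstract set is unchanged (the value is merely updated, which the abstract-set model of Definition~\ref{def:level}'s companion does not distinguish — I would note this explicitly, or, if the paper tracks values, treat it as the obvious value-overwrite). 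The collision cases — expansion via an \verb$ENode$ (line~\ref{code:insert-cas-expand}) and adding a new \verb$ANode$ at the next level (lines~\ref{code:insert-cas-notxn-an}, \ref{code:insert-cas-old-an}) — are handled by part~(1) of Lemma~\ref{lem:fastening} together with Lemma~\ref{lem:housekeeping}: the \verb$ENode$ installation and the wide-node swap are state-preserving housekeeping, and the actual key insertion happens on a restart through the empty-slot or same-key branch, whose linearization point is already covered. Finally, if \verb$insert$ returns \verb$false$ (encountering \verb$FSNode$, an \verb$ENode$, an \verb$FNode$, or a lost \verb$CAS$), no linearization point is assigned to that attempt, and the user-facing wrapper retries; I would argue that each successful user-level \verb$insert$ thus has exactly one linearization point among its attempts.

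The main obstacle I anticipate is the \emph{helping} argument in the two-phase commit: showing that when a thread's own \verb$CAS$ in line~\ref{code:insert-cas-old-sn} or~\ref{code:insert-cas-old-an} fails, some other thread has already propagated \emph{its} announced \verb$txn$ value into the parent slot, so that the linearization point (the \verb$txn$ \verb$CAS$) still correctly marks the atomic moment of effect and is not "orphaned." This requires combining Lemma~\ref{lem:end-of-life} (\textbf{End of Life}) — which pins down that the only \verb$CAS$es that can dislodge the node are the helper \verb$CAS$es in line~\ref{code:insert-cas-finish-txn} or line~\ref{code:freeze-cas-finish-txn}, and those write precisely the announced \verb$txn$ value — with the observation that the committed value is the one chosen by the announcing thread, so the logical state transition is identical regardless of who physically performs the second \verb$CAS$. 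I would also need to confirm, using Lemma~\ref{lem:path-form} (\textbf{Path Form}), that the \verb$SNode$ undergoing the transaction is the unique location in the trie where the key can reside, so that the \verb$txn$ \verb$CAS$ genuinely witnesses the global change in $hasKey(\texttt{root}, k)$ and not merely a local one. Once these pieces are assembled, each operation's linearization point lies strictly between its invocation and response and respects the abstract-set sequential specification by Theorem~\ref{thm:safety}, which establishes linearizability.
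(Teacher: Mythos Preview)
Your treatment of \verb$lookup$ and of the empty-slot branch of \verb$insert$ matches the paper. The substantive gap is in the wide-collision branch (lines~\ref{code:insert-wide}--\ref{code:insert-cas-old-an}). You describe the \verb$CAS$ in line~\ref{code:insert-cas-old-an} as ``state-preserving housekeeping'' after which ``the actual key insertion happens on a restart,'' but there is no restart: after that \verb$CAS$ the subroutine returns \verb$true$ immediately. The \verb$CAS$ replaces \verb$old$ (an \verb$SNode$ whose key differs from \verb$k$) by a fresh \verb$ANode$ containing both \verb$old$ and the new \verb$sn$, so it is exactly the instant at which $hasKey(\texttt{root},\texttt{k})$ flips from false to true --- this is precisely what part~(1) of Lemma~\ref{lem:fastening} asserts, and Lemma~\ref{lem:housekeeping} does \emph{not} list line~\ref{code:insert-cas-old-an} among the non-state-changing \verb$CAS$es. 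Only the narrow-expansion branch (line~\ref{code:insert-check-narrow}) is genuine housekeeping followed by a recursive retry; you have conflated the two collision cases.

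Relatedly, placing the linearization point at the \verb$txn$ \verb$CAS$ (lines~\ref{code:insert-cas-notxn-sn}, \ref{code:insert-cas-notxn-an}) rather than at the array-slot \verb$CAS$ diverges from the paper and cannot work in the different-key case. The relation $hasKey$ is defined through the array entries, not through \verb$txn$; between the \verb$txn$ \verb$CAS$ and the array \verb$CAS$, a concurrent \verb$lookup$ for \verb$k$ still reads \verb$old$ from \verb$cur[pos]$, sees $\texttt{old.key}\neq\texttt{k}$, and returns \verb$null$, linearizing at that read by Lemma~\ref{lem:absence-2}. If the insert were already linearized at the earlier \verb$txn$ \verb$CAS$, this strictly later lookup would be obliged to find \verb$k$, a contradiction. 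The paper instead takes the array-slot \verb$CAS$ (lines~\ref{code:insert-cas-old-sn}, \ref{code:insert-cas-old-an}) as the linearization point, and when that \verb$CAS$ fails it attributes the helper's commit (line~\ref{code:insert-cas-finish-txn} or line~\ref{code:freeze-cas-finish-txn}) as the linearization point \emph{of the helped operation}, then checks that each \verb$insert$ execution contains exactly one such consistency-changing \verb$CAS$. Your same-key analysis survives only because the abstract set is unchanged there; once you fix the wide-collision case you are forced to the paper's attribution.
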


\begin{proof}
An operation is linearizable if,
from the perspective of all the threads in the system,
there is a single point in time during the execution of that operation
at which the cache-trie changes consistency.

We already identified the \verb$CAS$ instructions
that change the consistency in Lemma \ref{lem:consistency-changes}.
We also identified the \verb$CAS$ instructions
that do not change the consistency in Lemma \ref{lem:housekeeping}.
It remains to show that during an execution of any operation,
there is at most one successful state-changing \verb$CAS$ instruction
that is consistent with the abstract set semantics of that operation.
We consider all successful state-changing \verb$CAS$ instructions in \verb$insert$,
and show that they either induce a state-change of a concurrent operation,
or they are the last successful state-changing \verb$CAS$ instructions
in the current operation.

\begin{itemize}
\item
It is easy to see that
following a successful \verb$CAS$
in lines \ref{code:insert-cas-null-sn},
\ref{code:insert-cas-old-sn},
and \ref{code:insert-cas-old-an}
(Figure \ref{fig:insert}),
the respective \verb$insert$ invocation
ends in a finite number of steps
without running any other \verb$CAS$ instruction.
\item
Successful \verb$CAS$ instructions in lines
in line \ref{code:insert-cas-finish-txn} (Figure \ref{fig:insert})
and \ref{code:freeze-cas-finish-txn} (Figure \ref{fig:complete-expand})
induce a state-change in a concurrent \verb$insert$ operation,
since they must occur after the successful \verb$CAS$ 
in lines \ref{code:insert-cas-notxn-sn} or \ref{code:insert-cas-notxn-an},
and before the failed \verb$CAS$
in lines \ref{code:insert-cas-old-sn} or \ref{code:insert-cas-old-an}, respectively.
\end{itemize}

Finally, note that no code path in \verb$insert$ ends the operation
without at least one such successful \verb$CAS$ operation.
\end{proof}

\begin{theorem}[Lock-Freedom]\label{thm:lock-freedom}
The operations \verb$lookup$ and \verb$insert$ are lock-free.
\end{theorem}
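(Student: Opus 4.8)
The plan is to prove that the system always makes progress: in every execution in which threads keep taking steps, infinitely many operation invocations return. Since \verb$lookup$ (Figure~\ref{fig:lookup}) performs no \verb$CAS$ and on every step either returns or recurses to a strictly deeper level, and since the trie has finite depth at every instant by Lemma~\ref{lem:path-form}, every \verb$lookup$ invocation returns after a bounded number of its own steps --- \verb$lookup$ is actually wait-free --- so it is enough to argue progress for \verb$insert$ (Figures~\ref{fig:insert} and~\ref{fig:complete-expand}).

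First I would establish a two-part ``progress lemma''. Part~(i): between any two consecutive successful \verb$CAS$ instructions on shared state, every thread performs only a bounded number of steps. I would check this by enumerating every point where \verb$insert$ restarts or where \verb$freeze$ decrements \verb$i$, and exhibiting in each case a successful \verb$CAS$ that must have occurred in the interim: a restart after a failed \verb$CAS$ in line~\ref{code:insert-cas-null-sn}, \ref{code:insert-cas-notxn-sn}, \ref{code:insert-cas-expand}, or~\ref{code:insert-cas-notxn-an} directly witnesses one; a restart triggered by reading \verb$FSNode$ (line~\ref{code:insert-check-fsnode}), an \verb$FNode$, or an \verb$FVNode$ witnesses a freezing \verb$CAS$ (lines~\ref{code:freeze-cas-fvnode}, \ref{code:freeze-cas-fsnode}, \ref{code:freeze-cas-fnode}); a restart on reading an \verb$ENode$ follows a successful \verb$CAS$ in line~\ref{code:insert-cas-expand}; the helping step in line~\ref{code:insert-cas-finish-txn} and the \verb$freeze$ decrements taken for committed \verb$txn$ fields and nested \verb$ENode$s are each preceded by a successful \verb$CAS$ of another thread, by Lemmas~\ref{lem:single-txn-change}, \ref{lem:end-of-life}, and~\ref{lem:unreachable-frozen-nodes}. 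Part~(ii): only boundedly many consecutive successful \verb$CAS$es can fail to change the associated abstract set. Using Lemmas~\ref{lem:consistency-changes} and~\ref{lem:housekeeping} to classify every \verb$CAS$, the non-consistency-changing ones are exactly the housekeeping steps of expansions: installing an \verb$ENode$, writing the \verb$wide$ field, freezing the narrow node's entries, and installing the wide node. By Lemma~\ref{lem:freezing} and finiteness of the trie depth, a single \verb$freeze$ touches each entry and its finitely many descendants a bounded number of times, so one expansion contributes boundedly many such \verb$CAS$es, and each \verb$insert$ triggers only boundedly many expansions before reaching a branch that creates an \verb$SNode$ or \verb$ANode$ and returns.

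Combining the two parts gives the theorem. Suppose toward a contradiction that there is an execution and a time $t_0$ after which no invocation returns; some thread then takes infinitely many steps after $t_0$, so by Part~(i) infinitely many successful \verb$CAS$es occur after $t_0$, and by Part~(ii) infinitely many of them change consistency. By the case analysis in the proof of Theorem~\ref{thm:linearizability}, every consistency-changing \verb$CAS$ is either immediately followed by the return of the thread that executed it (the \verb$CAS$es in lines~\ref{code:insert-cas-null-sn}, \ref{code:insert-cas-old-sn}, and~\ref{code:insert-cas-old-an}, after which \verb$insert$ performs at most one further ignored \verb$CAS$ and returns \verb$true$), or is the linearization point of a concurrent \verb$insert$ that already executed its \verb$txn$ \verb$CAS$ (line~\ref{code:insert-cas-notxn-sn} or~\ref{code:insert-cas-notxn-an}) and is thus at most one ignored \verb$CAS$ away from returning \verb$true$. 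Either way some invocation returns within a bounded number of the relevant thread's steps after $t_0$, contradicting the choice of $t_0$.

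The hard part will be Part~(ii): making ``housekeeping terminates'' uniform across all interleavings rather than merely finite. Two subtleties need care. The recursion in \verb$freeze$ descends into nested \verb$FNode$s and helps nested \verb$ENode$s, so one must argue the nesting never exceeds the current trie depth and that helping a nested expansion does not regenerate work already accounted for at the outer level. And one must bound the number of expansions a single \verb$insert$ performs: each expansion replaces a narrow node at some level~$\ell$ by a wide one, after which the restarted search at level~$\ell$ either completes or descends, and \verb$createANode$ resolves any residual collision internally in finitely many steps because the two colliding keys have distinct hash codes (the equal-hash case being delegated to list nodes). Threading these bounds through concurrent interference is the delicate point.
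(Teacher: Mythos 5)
Your skeleton matches the paper's: Part~(i) is exactly the paper's Lemma~\ref{lem:finite-cas-segments} combined with Lemma~\ref{lem:failure-implies-success} (Corollary~\ref{cor:finite-state-change-segments}), and your concluding step --- tying each consistency-changing \verb$CAS$ to a returning invocation via the case analysis of Theorem~\ref{thm:linearizability} --- is also how the paper closes. The gap is in Part~(ii), which you yourself flag as the delicate point but then justify with an argument that does not hold up. You propose to bound the housekeeping \verb$CAS$es per expansion and then bound the number of expansions per \verb$insert$ (``each \verb$insert$ triggers only boundedly many expansions before \ldots{} returns''). Under adversarial scheduling a single \verb$insert$ has no intrinsic bound on the number of expansions it participates in: it can be forced back to the root arbitrarily often, and on each descent it may encounter fresh narrow nodes created by other threads' completed insertions. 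Lock-freedom does not require a per-operation bound, and conversely a per-operation bound (even if it held) would not directly yield the statement you actually need, which is global: only finitely many \emph{consecutive} successful \verb$CAS$es across \emph{all} threads can fail to change the abstract set.

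The missing idea is the paper's potential function. Define $P(\mathbb{S})$ as the number of single nodes whose \verb$txn$ is \verb$NoTxn$ and $E(\mathbb{S})$ as the number of narrow array nodes; both are finite in any state. The paper shows (Lemma~\ref{lem:potential-decrease}) that every successful state-changing but non-consistency-changing \verb$CAS$ strictly decreases $P(\mathbb{S}) + E(\mathbb{S})$ --- announcing a transaction consumes a \verb$NoTxn$ single node (which by Lemma~\ref{lem:single-txn-change} can never be replenished for that node), and an expansion consumes a narrow node (which never becomes narrow again) --- while new \verb$NoTxn$ single nodes and new narrow nodes are only introduced by consistency-changing \verb$CAS$es. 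Hence between two consistency changes the potential is non-increasing and bounded below by zero, so only finitely many non-consistency-changing successful \verb$CAS$es can occur in a row (Lemma~\ref{lem:finite-consistency-changes}). This global amortized argument is what makes ``housekeeping terminates'' uniform across interleavings; your per-operation counting would need to be replaced by (or reduced to) something of this form to complete the proof.
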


\begin{lemma}\label{lem:finite-cas-segments}
In each operation, there is a finite number of steps
between two \verb$CAS$ instructions,
a \verb$CAS$ instruction and the entry point or a return point.
\end{lemma}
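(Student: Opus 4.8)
The plan is to prove the lemma by a direct structural inspection of the pseudocode of the four subroutines that an operation can enter: \verb$lookup$ (Figure~\ref{fig:lookup}), \verb$insert$ (Figure~\ref{fig:insert}), and \verb$completeExpansion$ and \verb$freeze$ (Figure~\ref{fig:complete-expand}). First I would observe that each subroutine body, once its tail-recursive self-calls are viewed as loop back-edges, is straight-line code whose only non-trivial control points are \verb$CAS$ instructions, recursive calls, and the single \verb$while$ loop inside \verb$freeze$; between any two consecutive control points only a constant number of primitive steps is executed. It therefore suffices to bound the number of control points traversed between two consecutive \verb$CAS$ instructions, and between an endpoint of the operation and the nearest \verb$CAS$.

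The key classification step is to split the back-edges and recursive calls into two kinds. The first kind are the \emph{restarts} that follow a (successful or failed) \verb$CAS$: the \verb$return insert(...)$ calls immediately after the \verb$CAS$es in lines \ref{code:insert-cas-null-sn}, \ref{code:insert-cas-notxn-sn}, \ref{code:insert-cas-expand}, \ref{code:insert-cas-notxn-an} and \ref{code:insert-cas-finish-txn}, and the \verb$i -= 1$ back-edges in \verb$freeze$ after the \verb$CAS$es in lines \ref{code:freeze-cas-fvnode}, \ref{code:freeze-cas-fsnode}, \ref{code:freeze-cas-finish-txn} and \ref{code:freeze-cas-fnode}. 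Each of these is, by construction, immediately preceded by a \verb$CAS$ instruction, so it contributes nothing to a \verb$CAS$-free stretch: after one restart iteration the operation either returns or issues its next \verb$CAS$ within a constant number of steps. The second kind are the \emph{descents}: \verb$lookup$'s recursion on \verb$ANode$/\verb$ENode$/\verb$FNode$ children, \verb$insert$'s recursion in line \ref{code:insert-check-anode}, the recursive \verb$freeze(node.frozen)$ and \verb$completeExpansion(node)$ calls inside \verb$freeze$, and the \verb$freeze$ call inside \verb$completeExpansion$. Each descent strictly increases the level of the node being processed by $4$, and — because a hash code has a fixed bit width, so that the level is bounded by a constant and past that level collisions are resolved by list nodes rather than \verb$ANode$s — the number of consecutive descents before reaching an \verb$SNode$, a \verb$null$ entry, a list node, or the bottom of the \verb$freeze$/\verb$completeExpansion$ recursion is finite. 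Hence every maximal descent chain, together with the constant work done in each frame, contributes only finitely many steps. Combining the two cases (and noting that \verb$freeze$'s \verb$while$ loop has an index ranging over an array of size at most $16$ that moves backwards only right after a \verb$CAS$ or a strictly-descending recursive call), every segment delimited by two \verb$CAS$ instructions, or by a \verb$CAS$ instruction and the entry point or a return point, consists of finitely many steps.

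The main obstacle will be the mutual recursion between \verb$insert$, \verb$completeExpansion$ and \verb$freeze$: some of its recursive calls (\verb$completeExpansion$ calling \verb$freeze$, and \verb$freeze$ calling \verb$completeExpansion$ or \verb$freeze$ on a nested node) are \emph{not} guarded by a preceding \verb$CAS$, so their cost cannot simply be charged to an adjacent \verb$CAS$. The resolution is precisely the strict-descent argument above: each such nested call processes a node at a strictly larger level, and the level is bounded, so the recursion depth — hence the number of constant-work frames traversed before the next \verb$CAS$ or a return — is finite. A minor point to handle carefully is that \verb$lookup$ and \verb$freeze$ may legitimately run from entry to return without performing any \verb$CAS$; the lemma's phrasing ("a \verb$CAS$ instruction and the entry point or a return point") already permits this, and the descent bound shows such a run is still finite.
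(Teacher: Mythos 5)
Your proof is correct and follows essentially the same route as the paper's: the paper likewise argues that the only \verb$while$ loop (in \verb$freeze$) is bounded and issues a \verb$CAS$ whenever it decrements its counter, and that the recursion never decreases the trie level and stays at the same level only when a \verb$CAS$ fails, with the level bounded because paths are finite (Lemma~\ref{lem:path-form}, which is itself grounded in the fixed hash-code width you invoke). Your restart/descent classification and your explicit treatment of the \verb$insert$/\verb$completeExpansion$/\verb$freeze$ mutual recursion are a more careful spelling-out of the same argument, covering a case the paper's proof handles only implicitly.
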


\begin{proof}
The only \verb$while$ loop in the pseudocode is in the \verb$freeze$ subroutine,
which has a finite upper bound, and executes a \verb$CAS$ whenever
it decreases its counter.

Assume, therefore, that there exists an infinite number of execution steps
caused by recursion in \verb$insert$.
The recursion never decreases the cache-trie level,
and retains the same level if and only if a \verb$CAS$ fails.
We know from Lemma \ref{lem:path-form} that each path in the cache-trie is finite,
so the level cannot grow indefinitely.
Thus, there cannot be an infinite number of execution steps without a \verb$CAS$.
\end{proof}

\begin{lemma}\label{lem:failure-implies-success}
If any \verb$CAS$ instruction $C_0$ fails,
then there was a concurrent successful \verb$CAS$ instruction $C_1$
on the same memory location
that executed between the time $t_r$ when the expected value for $C_0$ was read,
and the time $t_0 > t_r$ when $C_0$ failed.
\end{lemma}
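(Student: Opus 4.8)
The plan is to reduce the statement to two facts that can be read off the pseudocode and then combine them in one line. First I would establish that every \verb$CAS$ instruction in the algorithm is \emph{read-matched}: for each \verb$CAS$ instruction $C_0$ targeting a memory location $L$, the value passed to $C_0$ as its expected value coincides with the value that some \verb$READ$ of $L$ returned at a time $t_r$ strictly before $C_0$ executes. For most \verb$CAS$es this is immediate — e.g. \verb$CAS(cur[pos], old, sn)$ in Figure~\ref{fig:insert} is preceded by \verb$old = READ(cur[pos])$, \verb$CAS(cur[i], node, FVNode)$ in Figure~\ref{fig:complete-expand} is preceded by \verb$node = READ(cur[i])$, the expansion \verb$CAS$ \verb$CAS(prev[ppos], cur, en)$ uses the \verb$cur$ that was obtained by the parent invocation's \verb$READ$ of exactly that slot, and so on. Two cases need a remark: \verb$CAS(old.txn, NoTxn, sn)$ (and likewise \verb$CAS(node.txn, NoTxn, FSNode)$) has a \emph{literal} expected value \verb$NoTxn$, but it is reached only after \verb$READ(old.txn)$ returned \verb$NoTxn$, so the literal is still a value just read from $L$; and \verb$CAS(en.wide, null, wide)$ has expected value \verb$null$, which is exactly the value of the \verb$wide$ field at the moment the \verb$ENode$ was allocated, so we may take $t_r$ to be that allocation time. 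In every case there is a $t_r < t_0$ at which $L$ held the expected value of $C_0$.

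Second I would establish that every location that can be the target of a \verb$CAS$ — an entry of an \verb$ANode$, or the \verb$txn$ field of an \verb$SNode$ — changes value only through a successful \verb$CAS$. For \verb$txn$ fields this is Lemma~\ref{lem:single-txn-change}: the field is set at construction and thereafter changed at most once, by a \verb$CAS$. For \verb$ANode$ entries it follows from code inspection: the only non-\verb$CAS$ stores in the pseudocode, the \verb$WRITE$ instructions, occur in \verb$inhabit$, \verb$recordCacheMiss$ and \verb$createCache$, all of which operate on cache memory and are excluded under the no-cache assumption; freshly allocated arrays (in \verb$createANode$, in \verb$completeExpansion$, and the initial \verb$root$) are fully populated before they become reachable or are published, so no such store is ever concurrent with a \verb$CAS$ on the same slot; and the remaining fields of an \verb$SNode$ (key, value, hash) are immutable. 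Hence the value of $L$ over time is exactly the sequence of ``new values'' of the successful \verb$CAS$es performed on $L$.

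Combining the two facts: let $C_0$ fail at time $t_0$, with target $L$ and expected value $v$. By the first fact $L$ held $v$ at some $t_r < t_0$; since $C_0$ fails at $t_0$, $L$ holds some $v' \neq v$ at $t_0$. By the second fact the value of $L$ changes only by successful \verb$CAS$, so it cannot pass from $v$ at $t_r$ to $v' \neq v$ at $t_0$ without at least one successful \verb$CAS$ on $L$ executing in the open interval $(t_r, t_0)$; take any such instruction as $C_1$. It is on the same location, it succeeded, and it executed after the expected value for $C_0$ was read and before $C_0$ failed, which is exactly the claim.

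I expect the first fact — the exhaustive, per-\verb$CAS$ check that the expected value is a previously read value of the target slot — to be the only real obstacle: it is routine but must be done with care, in particular for the handful of \verb$CAS$es whose expected value is written as a constant. The second fact is a brief code inspection once the cache is set aside, and the final combination is a one-line argument.
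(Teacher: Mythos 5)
Your proposal is correct and follows essentially the same route as the paper, whose entire proof of this lemma is the one-line remark that the claim follows ``by examining all the \texttt{CAS} instructions and their expected values''; your write-up simply carries out that inspection explicitly and supplies the (implicitly needed) companion fact that, absent the cache, every \texttt{CAS}-targeted location is mutated only by successful \texttt{CAS}es, after which the interval argument is immediate. Nothing in your argument deviates from or adds to the paper's intended proof beyond making it rigorous.
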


\begin{proof}
This can be easily proven by examining all the \verb$CAS$ instructions
and their expected values. 
\end{proof}

\begin{corollary}\label{cor:finite-state-change-segments}
From Lemmas \ref{lem:finite-cas-segments}
and \ref{lem:failure-implies-success},
it follows that there is a finite number of execution steps
between any two state changes.
\end{corollary}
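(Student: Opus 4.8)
The plan is to argue by contradiction. Suppose that between two consecutive state changes — two successful state-changing \verb$CAS$ instructions in the sense of Lemma \ref{lem:consistency-changes} — some in-flight \verb$insert$ operation takes infinitely many execution steps. (I restrict attention to \verb$insert$: \verb$lookup$ issues no \verb$CAS$, terminates on its own by Lemma \ref{lem:path-form}, and is therefore immaterial to the progress of state-changing operations; and since at most one operation per thread is active at an instant, infinitely many steps in a bounded real-time window must come from a single operation.) By Lemma \ref{lem:finite-cas-segments}, within one operation the number of steps between two consecutive \verb$CAS$ instructions, and between the entry or a return point and a \verb$CAS$, is finite; hence this \verb$insert$ must execute infinitely many \verb$CAS$ instructions. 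None of them can be a successful state-changing \verb$CAS$, because by Theorem \ref{thm:linearizability} an \verb$insert$ returns within finitely many steps after such a \verb$CAS$, contradicting that it runs forever without a state change. So it suffices to show that in an interval containing no state change only finitely many \verb$CAS$ instructions can succeed or fail in total.

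For the successful \verb$CAS$es I would first classify them: with the state-changing ones excluded, Lemma \ref{lem:housekeeping} leaves only the housekeeping \verb$CAS$es (installing a \verb$txn$ marker, installing an \verb$ENode$, announcing and installing a wide \verb$ANode$, and the three freezing \verb$CAS$es). Each of these is monotone on the location it touches: a \verb$txn$ field changes at most once (Lemma \ref{lem:single-txn-change}); an \verb$FVNode$, \verb$FNode$ or \verb$FSNode$ marker is never overwritten once written (Lemma \ref{lem:end-of-life}); and a given slot can progress only narrow \verb$ANode$ $\to$ \verb$ENode$ $\to$ wide \verb$ANode$, with at most one \verb$ENode$ ever installed per slot. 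Since the interval contains no insertion, the set of present keys is fixed, so by Lemma \ref{lem:path-form} the \verb$ANode$s and \verb$SNode$s on live paths form a fixed finite set, and all successful \verb$CAS$es act only on this live structure. Combined with the monotonicity above, this bounds the number of successful \verb$CAS$es in the interval; the recursive \verb$freeze$/\verb$completeExpansion$ cascades, whose depth is bounded by the finite path length and which visit each live slot only a bounded number of times, cannot break this bound.

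For the failed \verb$CAS$es I would invoke Lemma \ref{lem:failure-implies-success}: each failed \verb$CAS$ is preceded, between the read of its expected value and its failure, by a successful concurrent \verb$CAS$ on the same location. A single successful \verb$CAS$ can be charged with at most one failure per other thread, since a thread that just missed it must re-read before retrying; with a fixed number of threads the number of failures is therefore at most a constant times the number of successes, hence finite. Together with the previous paragraph this bounds the total number of \verb$CAS$ instructions issued in the interval, contradicting the assumption and proving the corollary.

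The step I expect to be the main obstacle is the finiteness of the successful housekeeping \verb$CAS$es — making precise that within a fixed key set the \verb$freeze$ and \verb$completeExpansion$ machinery can touch only finitely many memory locations and can update each of them only boundedly often. This requires combining the monotonicity facts (Lemmas \ref{lem:single-txn-change} and \ref{lem:end-of-life}) with the finite, non-decreasing-level structure of paths (Lemma \ref{lem:path-form}), so that the recursion in \verb$freeze$, including its decrement-and-retry iterations, has bounded depth and revisits each slot only a bounded number of times. A minor secondary point worth spelling out is the exclusion of \verb$lookup$ from the step count: being wait-free and \verb$CAS$-free, it neither causes nor is obstructed by state changes, so combining this corollary with Theorem \ref{thm:linearizability} immediately yields the lock-freedom of \verb$insert$ claimed in Theorem \ref{thm:lock-freedom}.
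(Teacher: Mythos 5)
You have read ``state change'' in this corollary as ``consistency change,'' and that sends the proof down a much longer road than the statement requires. In the paper a \emph{state} is the whole configuration of reachable nodes, so a state change is \emph{any} successful \verb$CAS$ --- including the housekeeping ones of Lemma \ref{lem:housekeeping} (freezing entries, installing an \verb$ENode$, announcing a \verb$txn$, swapping in a wide node); the sentence immediately following the corollary (``a change in the cache-trie state does not imply a consistency change'') makes this distinction explicit. Under that reading the corollary really is the two-line consequence the paper claims: by Lemma \ref{lem:finite-cas-segments} every running operation reaches a \verb$CAS$ (or returns) within finitely many steps; if that \verb$CAS$ succeeds it is itself a state change, and if it fails then by Lemma \ref{lem:failure-implies-success} a concurrent \verb$CAS$ on the same location succeeded in the meantime, which is again a state change. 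No counting of housekeeping \verb$CAS$es is needed here at all.

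The stronger statement you actually set out to prove --- finitely many steps between two \emph{consistency} changes --- is what the paper obtains only after Lemmas \ref{lem:potential-decrease} and \ref{lem:finite-consistency-changes}, and the obstacle you flag yourself (bounding the successful housekeeping \verb$CAS$es) is precisely what those lemmas exist to handle. Your sketch of that step contains a genuine inaccuracy: the reachable \verb$ANode$s do \emph{not} form a fixed finite set between consistency changes, since \verb$completeExpansion$ allocates and installs fresh wide \verb$ANode$s and \verb$insert$ installs fresh narrow \verb$ANode$s into \verb$txn$ fields, so ``monotonicity on a fixed live structure'' does not close the argument. The paper's fix is a potential argument: each non-consistency-changing \verb$CAS$ decreases either the transaction potential $P(\mathbb{S})$ or the expansion potential $E(\mathbb{S})$, and neither can be replenished without a consistency change, so the cascade is finite. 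Your charging argument for failed \verb$CAS$es is sound, but it belongs to the derivation of Theorem \ref{thm:lock-freedom} from this corollary rather than to the corollary itself.
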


A change in the cache-trie state does not imply a consistency change.
For example, when some operation expands a narrow node into a wide node,
the same set of keys remain in the cache-trie.
We therefore need to bound the number of subsequent state changes
that do not cause a consistency change --
we will show that, eventually, a state change must also change
the abstract set that the cache-trie is consistent with.

\begin{definition}[Transaction Potential]
Given a cache-trie in some state $\mathbb{S}$,
the \emph{transaction potential} $P(\mathbb{S})$
is the number of single nodes whose \verb$txn$ field is set to \verb$NoTxn$.
\end{definition}

\begin{definition}[Expansion Potential]
Given a cache-trie in some state $\mathbb{S}$,
the \emph{expansion potential} $E(\mathbb{S})$
is the number of narrow array nodes.
\end{definition}

\begin{lemma}\label{lem:potential-decrease}
Non-consistency-changing \verb$CAS$ instructions
always either decrease the expansion potential or decrease the transaction potential.
\end{lemma}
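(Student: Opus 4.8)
The plan is a case analysis over the eight non-consistency-changing \verb$CAS$ instructions identified in Lemma~\ref{lem:housekeeping}: lines \ref{code:insert-cas-notxn-sn}, \ref{code:insert-cas-expand} and \ref{code:insert-cas-notxn-an} of Figure~\ref{fig:insert}, and lines \ref{code:expand-cas-announce-wide}, \ref{code:expand-cas-parent-wide}, \ref{code:freeze-cas-fvnode}, \ref{code:freeze-cas-fsnode} and \ref{code:freeze-cas-fnode} of Figure~\ref{fig:complete-expand}. These fall into two families: the two \emph{transaction-announcement} \verb$CAS$es that write to an \verb$SNode$'s \verb$txn$ field (lines \ref{code:insert-cas-notxn-sn} and \ref{code:insert-cas-notxn-an}), which I will show decrease the transaction potential, and the six \emph{expansion} \verb$CAS$es, which I will show decrease the expansion potential.

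For the transaction family, a successful $\texttt{CAS}(\texttt{old.txn}, \texttt{NoTxn}, x)$ moves \verb$old.txn$ away from \verb$NoTxn$ and, by Lemma~\ref{lem:single-txn-change}, never back, so \verb$old$ leaves the set of single nodes counted by the transaction potential. The freshly allocated replacement $x$ (and, for line~\ref{code:insert-cas-notxn-an}, the new \verb$ANode$ it points to together with its single nodes) is reachable from the root only through a \verb$txn$ field, hence is not yet part of the cache-trie state in the sense relevant to the potentials; consequently it does not contribute to the transaction potential, and $P$ strictly decreases. Since no structurally reachable narrow \verb$ANode$ is created, the expansion potential is unchanged, so the claim holds for this family.

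For the expansion family the naive reading of the expansion potential as the number of narrow array nodes is too coarse: by Lemmas~\ref{lem:freezing} and~\ref{lem:unreachable-frozen-nodes} that count drops only at line~\ref{code:expand-cas-parent-wide}, where the narrow node finally becomes unreachable, whereas the other five \verb$CAS$es leave it reachable (through \verb$unwrap$ of its \verb$ENode$) and so leave that count fixed. The plan is therefore to work with a phase-aware refinement: to each structurally reachable narrow \verb$ANode$ $N$ assign a weight $w(N)$ equal to $3 + m(N)$ while $N$ is not yet wrapped in an \verb$ENode$, $2 + m(N)$ once it is wrapped but the \verb$ENode$'s \verb$wide$ field is still \verb$null$, and $1$ once \verb$wide$ is set, where $m(N)$ counts the not-yet-frozen slots in $N$'s subtree (slots holding \verb$null$, a live \verb$SNode$, or an \verb$ANode$ not yet wrapped in an \verb$FNode$, recursing through \verb$FNode$s into the wrapped node); put $E(\mathbb{S}) = \sum_N w(N)$. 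One then verifies instruction by instruction that line~\ref{code:insert-cas-expand} moves the relevant $N$ from the first phase to the second, that lines~\ref{code:freeze-cas-fvnode}, \ref{code:freeze-cas-fsnode} and \ref{code:freeze-cas-fnode} each decrement $m(N)$ by exactly one, that line~\ref{code:expand-cas-announce-wide} moves $N$ to the third phase, and that line~\ref{code:expand-cas-parent-wide} drops $w(N)$ to $0$; in each case $E$ strictly decreases. Among these, only line~\ref{code:freeze-cas-fsnode} also touches a \verb$txn$ field, and it can only lower $P$ as well, which is harmless for the ``either/or'' statement. (Line~\ref{code:expand-cas-parent-wide} does raise $P$, since the copied single nodes become reachable, but as it lowers $E$ that is immaterial here.)

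The step I expect to be the main obstacle is pinning down the refined potential so that it is simultaneously non-negative and finite in every reachable state, decremented by exactly one unit at each of the six expansion \verb$CAS$es --- in particular handling, without double counting, the recursive freezing of nested narrow \verb$ANode$s produced by \verb$createANode$ --- and consistent with the accounting used afterwards for lock-freedom. A closely related subtlety is the precise notion of ``being in the cache-trie state'': the whole argument rests on the facts that a node reachable only via a \verb$txn$ field or via an un-installed \verb$ENode.wide$ pointer does \emph{not} count, whereas a narrow \verb$ANode$ wrapped in an \verb$ENode$ \emph{still does} (via \verb$unwrap$), which is exactly why the bare count fails and the phased potential is needed.
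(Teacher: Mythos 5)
Your proposal is, at its core, the same case analysis the paper intends --- the paper's own proof of this lemma is literally one line, ``Straightforward, by enumerating the \verb$CAS$es'' --- but you have done the enumeration honestly and in doing so exposed a real imprecision that the paper glosses over. With the expansion potential defined literally as ``the number of narrow array nodes'' in the reachable configuration, the \verb$CAS$es in lines \ref{code:insert-cas-expand}, \ref{code:expand-cas-announce-wide}, \ref{code:freeze-cas-fvnode} and \ref{code:freeze-cas-fnode} decrease neither potential: the narrow node stays reachable through $unwrap$ of its \verb$ENode$ (and a wrapped child stays reachable through $unwrap$ of its \verb$FNode$), and none of these touch a \verb$txn$ field. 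Your phase-aware weight $w(N)=3+m(N)$ / $2+m(N)$ / $1$ is a legitimate repair that restores a strictly decreasing measure at every one of the eight instructions, and it is exactly what is needed for the downstream Lemma~\ref{lem:finite-consistency-changes} to be more than hand-waving. Your handling of the \verb$txn$-reachability subtlety (a node hanging only off a \verb$txn$ field or an uninstalled \verb$ENode.wide$ must not be counted, or else lines \ref{code:insert-cas-notxn-sn} and \ref{code:insert-cas-notxn-an} fail to decrease $P$) is also correct and is silently assumed by the paper. Two residual points deserve explicit verification in your write-up: first, that line~\ref{code:expand-cas-announce-wide} finds $m(N)=0$ (it does, because \verb$freeze$ returns before the wide node is built, by Lemma~\ref{lem:freezing}); second, that line~\ref{code:expand-cas-parent-wide} cannot make any \emph{new} narrow node reachable via \verb$copy$ --- the replacement at the expanded level is wide, and any nested array nodes it carries were already reachable through their \verb$FNode$ wrappers, so they contribute no fresh weight. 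With those checks spelled out, your argument is strictly stronger than the paper's.
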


\begin{proof}
Straightforward, by enumerating the \verb$CAS$es.
\end{proof}

\begin{lemma}\label{lem:finite-consistency-changes}
Consider the consistency-changing \verb$CAS$ instructions,
and all \verb$CAS$ instructions that cause a decrease
in $P(\mathbb{S}) + E(\mathbb{S})$.
There is a finite number of steps between two such instructions.
\end{lemma}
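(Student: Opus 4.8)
The plan is to reduce the statement to Corollary~\ref{cor:finite-state-change-segments} by showing that the instructions named in the lemma --- the consistency-changing \verb$CAS$ instructions together with the \verb$CAS$ instructions that decrease $P(\mathbb{S}) + E(\mathbb{S})$ --- are exactly the \emph{state-changing} \verb$CAS$ instructions. Once this identification is in place, two consecutive such instructions are consecutive cache-trie state changes, so the desired bound on the number of intervening execution steps is precisely what Corollary~\ref{cor:finite-state-change-segments} provides.

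First I would observe that a \verb$CAS$ alters the cache-trie state if and only if it succeeds: a failed \verb$CAS$ has no effect, and inspection of the pseudocode shows that every \verb$CAS$ in Figures~\ref{fig:insert} and~\ref{fig:complete-expand} has a new value distinct from its expected value, so a successful execution always changes the state. Next I would appeal to the classification of \verb$CAS$es already established: Lemma~\ref{lem:consistency-changes} identifies the consistency-changing \verb$CAS$es, and Lemma~\ref{lem:housekeeping} shows that the remaining \verb$CAS$es in the implementation are non-consistency-changing, so together these two lemmas account for every \verb$CAS$ occurring in the code. For a successful non-consistency-changing \verb$CAS$, Lemma~\ref{lem:potential-decrease} gives a decrease of $P(\mathbb{S})$ or $E(\mathbb{S})$, hence of $P(\mathbb{S}) + E(\mathbb{S})$. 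Therefore every state-changing \verb$CAS$ is one of the instructions named in the lemma; conversely, any such named instruction is a \verb$CAS$ that either changes consistency or decreases $P + E$, and in either case it writes to memory and so is state-changing. The two sets coincide, and since between two consecutive members of this common set no further state change occurs, Corollary~\ref{cor:finite-state-change-segments} bounds the number of execution steps between them.

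The one point requiring care is the completeness of the case split: one must check that Lemma~\ref{lem:consistency-changes} and Lemma~\ref{lem:housekeeping} between them mention every \verb$CAS$ that appears in the pseudocode, and that ``a \verb$CAS$ instruction that causes a decrease in $P + E$'' is understood to refer to a successful execution of such a \verb$CAS$ (a failed \verb$CAS$ causes nothing). Beyond this bookkeeping there is no analytic difficulty; the lemma is essentially a corollary of Lemma~\ref{lem:potential-decrease} together with Corollary~\ref{cor:finite-state-change-segments}, and I do not expect a genuine obstacle.
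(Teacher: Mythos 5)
Your proof is correct for the lemma as literally stated, but it takes a genuinely different route from the paper's. You identify the named instructions (consistency-changing \verb$CAS$es plus $P+E$-decreasing \verb$CAS$es) with the state-changing \verb$CAS$es and then invoke Corollary~\ref{cor:finite-state-change-segments} directly; under that identification the lemma becomes essentially a restatement of the corollary. The paper instead argues by contradiction with $P(\mathbb{S}) + E(\mathbb{S})$ as a strictly decreasing, non-negative potential: if consistency-changing \verb$CAS$es never occurred, every remaining state-changing \verb$CAS$ would drive the potential down, so it would reach zero after finitely many state changes (each separated by finitely many steps), at which point a consistency-changing \verb$CAS$ is forced. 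The difference matters downstream: what Theorem~\ref{thm:lock-freedom} actually needs is a bound on the number of \emph{consecutive} non-consistency-changing state changes, and that is exactly the content of the paper's potential argument; your version of the lemma does not by itself supply that bound, so under your reading the burden would shift to the proof of the final theorem. One further caveat, which you inherit from the paper rather than introduce: your step ``a decrease of $P$ or of $E$, hence of $P+E$'' is only valid if the other summand does not simultaneously increase (e.g.\ an expansion removes one narrow node but may introduce fresh single nodes with \verb$txn$ equal to \verb$NoTxn$ in the wide copy); the paper's own proof makes the identical leap, so this is a shared weakness rather than a new gap in your argument.
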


\begin{proof}
Assume that consistency-changing \verb$CAS$ instructions never occur.
By inspecting the remaining state-changing \verb$CAS$ instructions,
we see that each of them either expands a node (i.e. decreases $E(\mathbb{S})$)
or it announces a transaction (i.e. decreases $P(\mathbb{S})$).
Therefore, after a finite number of steps,
$P(\mathbb{S}) + E(\mathbb{S}) = 0$.
This is a contradiction, since now a consistency-changing \verb$CAS$ must occur.
\end{proof}

\begin{proof}[Proof of Theorem \ref{thm:lock-freedom}]
From Corollary \ref{cor:finite-state-change-segments}
and Lemmas \ref{lem:potential-decrease} and \ref{lem:finite-consistency-changes} --
since in any cache-trie state, there is a finite number of unfinished transactions
and narrow nodes, there is a finite number of steps between consistency changes.
By Theorem \ref{thm:linearizability},
each consistency change corresponds to a completed operation.
\end{proof}

\begin{definition}[Cache]
A \emph{cache node} is a node that holds
a pointer to a parent cache,
and an integer array for tracking cache misses.
A \emph{cache array} is an array of length $2^\ell + 1$,
that contains a cache node at the index $0$,
and the remaining entriese are either \verb$SNode$s or \verb$ANode$s.
We say that such a cache array is at level $\ell$.
A \emph{cache} is a pointer that is either \verb$null$ or points to a cache array.
\end{definition}

\begin{lemma}[Cachee Level]\label{lem:cachee-level}
A cache array at level $\ell$ contains pointers
to nodes that are either frozen or have \verb$txn$ different than \verb$NoTxn$,
or are reachable and at level $\ell$ of the cache-trie
(except at the index $0$).
\end{lemma}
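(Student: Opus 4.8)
The plan is to establish this as a global state invariant, proved by induction over execution steps, run simultaneously with an auxiliary invariant about in-flight slow lookups. Call a node \emph{dead} if it is frozen, or it is a single node whose \texttt{txn} field is different from \texttt{NoTxn}. By Lemma~\ref{lem:single-txn-change}, Lemma~\ref{lem:end-of-life}, and the definition of freezing, being dead is a permanent property; and by Lemma~\ref{lem:unreachable-frozen-nodes} (extended along the root path, and to the analogous compression step, both of which freeze a node before unlinking it) every unreachable array node is frozen, hence dead. The invariant~$I$ to maintain is: for every cache array~$C$ currently in the cache-trie, if $C$ is at level~$\ell$, then for each index $i \ge 1$ the entry $C[i]$ is \texttt{null}, or dead, or a node that is reachable and at level~$\ell$. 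The auxiliary invariant~$J$ is: in every active call of the slow \texttt{lookup}, the array-node argument \texttt{cur} is reachable and at level \texttt{lev}.

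First I would dispatch the base case. A cache array is created (in \texttt{createCache}, or as the fresh array inside \texttt{inhabit} of Figure~\ref{fig:inhabit}) with every entry of index $\ge 1$ equal to \texttt{null}, so $I$ holds. And $J$ holds on entry because the slow lookup is invoked either from the root (always reachable, at level~$0$) or, via \texttt{fastLookup}, from a cachee whose relevant child entry the guards in Figure~\ref{fig:fast-lookup-operation} have just checked is neither \texttt{FVNode} nor an \texttt{FNode} (and, if a single node, not \texttt{FSNode}); thus the cachee is not frozen, so by $I$ (and the fact that an unfrozen array node is reachable at its level) it is reachable at the claimed level.

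Next I would carry out the inductive step, where only two kinds of transitions matter. (a)~A \texttt{WRITE} into a cache entry: by inspection the only write at an index $\ge 1$ is line~\ref{code:inhabit-write-nv} of \texttt{inhabit}. The node $nv$ it stores was read by the slow \texttt{lookup} from \texttt{cur[pos]} (line~\ref{code:lookup-read-old} of Figure~\ref{fig:lookup}); by $J$, \texttt{cur} is at level \texttt{lev}, so $nv$ is an array node at level \texttt{lev} or a single node at level $\texttt{lev}+4$, and in both cases the \texttt{cacheeLevel} argument equals the level of $nv$. Inside \texttt{inhabit} the \texttt{WRITE} is guarded by \texttt{cacheLevel == cacheeLevel}, with \texttt{cacheLevel} recomputed from the array's length, so $nv$ is placed only into a cache array at level $\ell = \texttt{cacheeLevel}$. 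By Lemma~\ref{lem:reachable-nodes}, $nv$ was reachable when it was read. At the moment of the \texttt{WRITE}, and at every later time (the entry is never touched again except by another such \texttt{WRITE}): if $nv$ is dead we are done; otherwise $nv$ cannot be unreachable, since an unreachable array node is dead and, for a single node, Lemma~\ref{lem:end-of-life} forces \texttt{txn}~$\neq$~\texttt{NoTxn} at the instant it is unlinked, which by Lemma~\ref{lem:single-txn-change} is permanent -- so $nv$ is reachable; and $nv$ is still at level~$\ell$, because an array node is never re-linked at another level (expansion installs a fresh wide node and discards the narrow one, and compression removes only all-\texttt{null} nodes), while the only mechanism that moves a single node deeper (lines~\ref{code:insert-cas-notxn-an}--\ref{code:insert-cas-old-an} of Figure~\ref{fig:insert}) first sets its \texttt{txn} to a non-\texttt{NoTxn} value, which would put it in the dead case already handled. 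Hence the entry $C[i]=nv$ satisfies~$I$. (b)~A trie-modifying \texttt{CAS}: it can only turn a reachable node unreachable -- then dead, by the same argument -- or else freeze a node, commit a \texttt{txn}, or link a new node, none of which falsifies~$I$; and it preserves~$J$ because the slow lookup's level only grows while every path is finite (Lemma~\ref{lem:path-form}), and the \texttt{fastLookup} hand-off was treated in the base case.

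I expect the main obstacle to be the window between the \texttt{READ} of $nv$ in \texttt{lookup} and the guarded \texttt{WRITE} in \texttt{inhabit}: during that interval $nv$ may be evicted from the trie or pushed to a deeper level, so ``reachable at level~$\ell$'' cannot simply be transported across the write, and it is precisely the dead/alive dichotomy together with the permanence lemmas (Lemmas~\ref{lem:single-txn-change}, \ref{lem:end-of-life}, \ref{lem:unreachable-frozen-nodes}) that bridges it. A second subtlety is that the induction has to be run jointly with~$J$ so that the \texttt{cacheeLevel} argument can be trusted, while $J$ itself relies on~$I$ at the point where \texttt{fastLookup} hands control to the slow lookup; this mutual dependence is sound because each step only uses the validity of strictly earlier steps, but it must be spelled out with care. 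The remaining facts -- that unreachable array nodes are dead, that deadness is permanent, that \texttt{inhabit}'s guards pin levels correctly -- are routine once these two points are secured.
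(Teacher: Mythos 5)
Your proof is correct and follows essentially the same route as the paper's: both arguments rest on the single level-guarded write in \verb$inhabit$, on the fact that a node's level is fixed once it is linked, and on the permanence lemmas (an unlinked \verb$ANode$ must be frozen, an unlinked \verb$SNode$ must have \verb$txn$ $\neq$ \verb$NoTxn$), with your dead/alive dichotomy making explicit what the paper's three-sentence sketch leaves implicit about the window between the read of the cachee and the cache write. The only nit is that you describe $nv$ as always being read from \verb$cur[pos]$, whereas the \verb$ANode$-inhabiting call passes \verb$cur$ itself; your level-matching conclusion nevertheless holds for both call sites.
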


\begin{proof}
Consider the only write to the cache
in line \ref{code:inhabit-write-nv} of Figure \ref{fig:inhabit}.
The write is preceded by a check that the cache level corresponds
the cachee level.
Any \verb$SNode$ or \verb$ANode$ in the cache-trie
that is made unreachable must be frozen by Lemma \ref{lem:unreachable-frozen-nodes},
and frozen nodes are never made reachable once they become unreachable.
Therefore, a reachable node remains at level $\ell$ of the cache.
\end{proof}

\begin{theorem}[Cache Safety]\label{thm:cache-safety}
Both fast insert and fast lookup are consistent with the abstract set semantics.
Moreover, fast insert and fast lookup are linearizable and lock-free.
\end{theorem}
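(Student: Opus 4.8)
The plan is to reduce everything about the cache-augmented operations to the already-established non-cache results (Theorems~\ref{thm:safety}, \ref{thm:linearizability}, \ref{thm:lock-freedom}), treating the cache purely as a search hint whose soundness is supplied by Lemma~\ref{lem:cachee-level}. First I would record a small auxiliary \emph{live-entry} lemma: if an array node $an$ has ever been reachable and the entry $forKey(an,k)$ is not frozen at some time $t_0$, then $an$ is reachable at $t_0$. This is the contrapositive of Lemma~\ref{lem:unreachable-frozen-nodes} combined with Lemma~\ref{lem:end-of-life} — the only \verb$CAS$ that unlinks an array node has a frozen expected value, and a frozen entry stays frozen — so a surviving non-frozen entry witnesses that $an$ has not been unlinked. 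A symmetric statement for single nodes (a node whose \verb$txn$ is still \verb$NoTxn$ is reachable) follows directly from Lemma~\ref{lem:end-of-life} and Lemma~\ref{lem:cachee-level}.

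Next I would analyse \verb$fastLookup$ by cases on the cachee read in line~\ref{code:fast-lookup-read-cache-pos}. By Lemma~\ref{lem:cachee-level}, that cachee is either (i) an \verb$SNode$ with \verb$txn$ different from \verb$NoTxn$, (ii) a (fully) frozen node, or (iii) reachable and at the cache level $\ell$ of the trie. In case (iii) with an \verb$SNode$ whose \verb$txn$ is \verb$NoTxn$, the node is reachable, so the Presence/Absence arguments (Lemmas~\ref{lem:presence}, \ref{lem:absence-2}) apply verbatim and returning its value or \verb$null$ agrees with $lookup(\mathbb{A},k)$. In case (iii) with an \verb$ANode$ whose entry $forKey(\cdot,k)$ is not \verb$FVNode$, not an \verb$FNode$, and, if an \verb$SNode$, has \verb$txn \neq FSNode$, the live-entry lemma keeps the \verb$ANode$ reachable; since it is reachable and at level $\ell$, the validity invariants \textbf{INV1}--\textbf{INV3} together with Lemma~\ref{lem:path-form} force the root-to-node path to be exactly the one determined by the $\ell$-bit prefix of $hash(k)$, so resuming the slow \verb$lookup$ from it (line~\ref{code:fast-lookup-call-slow-lookup}) is equivalent to the slow search from \verb$root$ and consistency follows from Theorem~\ref{thm:safety}. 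In cases (i), (ii), and the frozen-entry subcase, \verb$fastLookup$ merely advances to the next cache level or, after exhausting the chain, falls back to \verb$lookup$ from \verb$root$, which is safe by Theorem~\ref{thm:safety}. The same case split handles fast insert: the cache jump only changes the starting array node, and any jump that lands on a frozen or improperly positioned node causes a restart from the root; hence every successful state-changing \verb$CAS$ of a fast insert is one of those already classified in Lemma~\ref{lem:consistency-changes} and Lemma~\ref{lem:housekeeping}.

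For linearizability I would observe that the only cache accesses are the \verb$READ$s in \verb$fastLookup$ and the single benign \verb$WRITE$ in \verb$inhabit$ (line~\ref{code:inhabit-write-nv}); none of these changes $hasKey(\texttt{root},\cdot)$, so none is a linearization point, and the linearization points of the fast operations coincide with the consistency-changing \verb$CAS$es identified in the proof of Theorem~\ref{thm:linearizability}. A fast lookup that returns on the cache path performs no such \verb$CAS$, and a fast insert performs at most the same \verb$CAS$es as a slow insert, so the single-linearization-point argument carries over unchanged. For lock-freedom I would note that the cache chain traversed by \verb$fastLookup$ has bounded length (one array per trie level) and each iteration does a bounded number of steps, so the fast path is only a finite preamble in front of the lock-free slow operation; the progress accounting behind Theorem~\ref{thm:lock-freedom} (each failed \verb$CAS$ or restart is charged to a concurrent successful \verb$CAS$, and $P(\mathbb{S}) + E(\mathbb{S})$ plus the number of unfinished operations is finite) is untouched by the extra reads and the one non-\verb$CAS$ cache write.

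The step I expect to be the main obstacle is the middle one: pinning down that a \emph{reachable} cached array node at level $\ell$ necessarily lies on the hash-prefix path of every key routed to it, so that the $\ell$ bits skipped by the cache jump were exactly the right ones. This hinges on the validity invariants being preserved in the presence of the cache — which holds because \verb$inhabit$ never mutates the trie, so Corollary~\ref{cor:invariants} still applies — and on a defensive re-check inside fast insert for the case where a cache jump would otherwise reach a narrow \verb$ANode$ without a usable \verb$prev$ pointer. I would discharge this by showing that fast insert either re-derives \verb$prev$ before attempting expansion or returns \verb$false$ and restarts from \verb$root$, which preserves both safety and the lock-freedom charging argument.
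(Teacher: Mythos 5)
Your proposal is correct and follows essentially the same route as the paper's proof: it uses Lemma~\ref{lem:cachee-level} to reduce every cache read to either a reachable node (handled by re-applying Theorems~\ref{thm:safety}, \ref{thm:linearizability} and \ref{thm:lock-freedom}, with the \verb$txn$ read as the linearization point of a direct cache-hit return) or a frozen/transacted node that triggers a fall-back to the slow path. You additionally spell out two points the paper leaves implicit — that a reachable cached node must lie on the correct hash-prefix path for the key, and the fast insert's recovery of \verb$prev$ — which is a strengthening of the same argument rather than a divergence from it.
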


\begin{proof}
We start by considering consistency.
From Lemma \ref{lem:cachee-level},
we know that a non-frozen node or a single node
whose \verb$txn$ is different than \verb$NoTxn$
is reachable in the cache-trie.

Assume that the fast lookup or fast insert calls the normal lookup or insert,
e.g. in line \ref{code:fast-lookup-call-slow-lookup}.
Since the respective array node is reachable,
safety and linearizability follow immediately by the same reasoning
as Theorems \ref{thm:safety} and \ref{thm:linearizability}.
Since the total number of execution steps is smaller
compared to an execution in which the search started at the root,
lock-freedom follows as well, by Theorem \ref{thm:lock-freedom}.

Assume that the fast lookup returns a value
in line \ref{code:fast-lookup-return-value},
or \verb$null$ in line \ref{code:fast-lookup-return-null}
of Figure \ref{fig:fast-lookup-operation}.
This is preceded by the read of the \verb$txn$ field at some time $t_0$,
in line \ref{code:fast-lookup-read-txn}.
Since \verb$txn$ is \verb$NoTxn$ at $t_0$, the corresponding node is reachable,
and by Lemma \ref{lem:path-form},
the relation $hasKey$ must hold.
Therefore, fast lookup is safe.
The read of the \verb$txn$ field is the linearization point,
so fast lookup is linearizable.
Since it takes a finite number of steps to reach that point,
fast lookup is also lock-free.

Finally, consider the case in which the node
read from the cache is frozen or has \verb$txn$ different than \verb$NoTxn$.
In this case, a fast lookup or a fast insert both fall back to a normal,
so all three properties hold, by Theorems
\ref{thm:safety}, \ref{thm:linearizability} and \ref{thm:lock-freedom}.
\end{proof}

% \end{comment}

%%%%%%%%%%%%%%%%%%%%%%%%%%%%%%%%
%
% ARTIFACT INFORMATION FOLLOWS
%
%%%%%%%%%%%%%%%%%%%%%%%%%%%%%%%%

% \input{ae-20170622}

%%% -*-BibTeX-*-
%%% Do NOT edit. File created by BibTeX with style
%%% ACM-Reference-Format-Journals [18-Jan-2012].

\end{document}